\newcommand{\grafe}[1]{\left\{ #1 \right\}}
\newcommand{\tonde}[1]{\left( #1 \right)}
\newcommand{\quadre}[1]{\left[ #1 \right]}
\newcommand*\widefbox[1]{\fbox{\hspace{2em}#1\hspace{2em}}}
\DeclareMathOperator{\Tr}{Tr}
\newtheorem{theorem}{Lemma}
\begin{document}

\preprint{APS/123-QED}

\title{Overlaps between eigenvectors of spiked, correlated random matrices:\\from matrix PCA to random Gaussian landscapes }

\author{Alessandro Pacco}
\email{alessandro.pacco@universite-paris-saclay.fr}
\author{Valentina Ros}%
\email{valentina.ros@universite-paris-saclay.fr}
\affiliation{Université Paris-Saclay, CNRS, LPTMS, 91405, Orsay, France}

\date{\today}

\begin{abstract}

We consider pairs of GOE (Gaussian Orthogonal Ensemble) matrices which are correlated with each others, and subject to additive and multiplicative rank-one perturbations. We focus on the regime of parameters in which the finite-rank perturbations generate outliers in the spectrum of the matrices. We investigate the statistical correlation (i.e., the typical overlap) between the eigenvectors associated to the outlier eigenvalues of each matrix in the pair, as well as the typical overlap between the outlier eigenvector of one matrix with the eigenvectors in the bulk of the spectrum of the other matrix. We discuss implications of these results for the signal recovery problem for spiked matrices, as well as for problems of high-dimensional random landscapes. 

\end{abstract}

\maketitle

\section{\label{sec:intro}Introduction}
Spiked random matrices, \emph{i.e.} random matrices deformed by additive or multiplicative perturbations,  have been the object of investigation since the foundational works of random matrix theory~\cite{jones1978eigenvalue, edwards1976eigenvalue,furedi1981eigenvalues}.  A huge amount of work has been devoted to characterizing the effect of low-rank perturbations on the spectral density of matrices extracted from invariant ensembles, \emph{i.e.} in determining the statistics of the isolated eigenvalues, or outliers, generated by the perturbations (referred to as \emph{spikes}, following a terminology introduced in~\cite{johnstone2001distribution}). For perturbed Wishart matrices, it has been shown by Baik, Ben Arous and P\'{e}ch\'e that in the limit of large matrix size, the outliers pop out from the bulk of the eigenvalues density in a sharp phase transition~\cite{baik2005phase}; with reference to this seminal work, the spectral transitions associated to the emergence of outliers are generically referred to as \emph{BBP transitions}. They  have been discussed extensively in the mathematical literature, see for instance~\cite{peche2006largest,benaych2011eigenvalues, benaych2012singular, capitaine2011free, knowles2014outliers, tao2013outliers,bordenave2016outlier,rochet2016isolated, capitaine2016spectrum}. Deformed random matrices and their outliers have been shown to play a relevant role in a variety of contexts: examples can be found in finance~\cite{bun2017cleaning}, inference and detection problems~\cite{montanari2015limitation, lu2020phase}, constraint satisfaction problems~\cite{hwang2020force}, quantum chaos~\cite{fyodorov2022extreme}, localization of polymers by defects~\cite{ikeda2022bose}, theoretical ecology~\cite{fraboul2021artificial, baron2022non}. The eigenvectors associated to the outliers play a relevant role in these applications: their projection on the subspace spanned by the low-rank perturbations remains large in the limit of large matrix size~\cite{nadler2008finite,benaych2011eigenvalues}, a phenomenon akin to condensation~\cite{kosterlitz1976spherical}.

When interpreted as \emph{signal vs noise} problems (the low-rank perturbations representing the signal), the spiked matrices are a prototypical example of a transition between a phase in which signal recovery is impossible  (because the spectral properties of the deformed matrix are completely determined by the random contribution and are therefore identical to the ones of the unperturbed matrix) and a phase in which information on the signal can be recovered, at least partially. This is done by determining the extremal eigenvalues of the matrix and the associated eigenvectors -- that is, by means of Principal Component Analysis (PCA). In this context, it can also be relevant to determine the projection of bulk eigenvectors on the subspace spanned by the perturbation, and results in this direction are given for instance in~\cite{ledoit2011eigenvectors, noiry2021spectral}.\\
\indent
In this work, we are interested in characterizing the squared overlaps between the eigenvectors of \emph{pairs} of correlated random matrices extracted from a Gaussian Orthogonal Ensemble (GOE), which are deformed by rank-one additive \emph{and} multiplicative perturbations. Our analysis builds on the works \cite{bun2016rotational,bun2018overlaps} (see also the comprehensive discussion in~\cite{potters_bouchaud_2020}), which (among other results) present the explicit expression of the overlaps of eigenvectors of matrices of the form ${\bf H}+ {\bf W}^{(a)}$, where ${\bf H}$ is a (possibly random) matrix in common to both elements of the pairs, while the ${\bf W}^{(a)}$ with $a=0,1$ are independent GOE matrices.  Our generalization consists in deforming the statistics of the matrices along one single direction in the basis space, by means of a combination of rank-1 additive and multiplicative perturbations. In certain parameter regimes, these perturbations generate outliers in the spectra of the pair of matrices: we determine the overlap between the eigenvectors of the outliers, as well as between the eigenvector of the outlier of one matrix and any other eigenvector associated to eigenvalues in the bulk of the other matrix. \\
\indent
Our analysis is motivated by the study of high-dimensional random landscapes: indeed, it can be shown~\cite{subag2017complexity,ros2019complex, ros2020distribution, ros2019complexity} that the local curvature of simple Gaussian landscapes in the vicinity of their stationary points (local minima, maxima or saddles) is described by matrices having exactly the statistics considered in this work. Determining the overlap between eigenvectors of the Hessians (the matrix encoding the information on the  local landscape curvature) is relevant to understand the geometry of the landscape, in particular how the curvature is correlated in different regions of the landscape. This is an important piece of information whenever one is interested in optimizing high-dimensional landscapes, to characterize how geometrical features affect its exploration by means of local optimization algorithms~\cite{ros2021dynamical, rizzo2021path}. As a byproduct, our analysis allows us to address a question that may be of its own interest in the context of spiked matrices problems: namely, it gives us access to the correlations between different estimators of the signal vector,
 obtained from different sets of noisy measurements in which the noise is correlated.  
 
 The work is structured as follows: in Section~\ref{sec:theoretical_res} we introduce our matrix ensemble of interest, and summarize its spectral properties. In Section~\ref{sec:theoretical_res2} we present the results of our calculation, namely the explicit expressions of the eigenvectors overlaps. In Section~\ref{sec:overview_comp} we give an overview of the calculation,  discussing how to obtain the relevant overlaps from the calculation of products of resolvent operators and their finite-size corrections. Section \ref{sec:applications} contains a discussion of applications of our results, while the conclusions are given in Section~\ref{sec:conclusions}. Details of the calculation are given in the Appendices.

\section{Perturbed, coupled  GOE matrices}\label{sec:theoretical_res}
\subsection{The matrix ensembles}
\label{subsec:matrix}
We consider pairs of correlated random matrices with a perturbed GOE (Gaussian Orthogonal Ensemble) statistics. We recall that a GOE matrix of size $N$ is a symmetric real random matrix with off-diagonal entries distributed as $\mathcal{N}(0,\sigma^2/N)$ and diagonal entries as $\mathcal{N}(0,2\sigma^2/N)$, where $\mathcal{N}$ denotes the Gaussian distribution.  In our model of interest, the perturbation is given by a special row and column in each matrix of the pair, whose entries are correlated to each others in a different way. More precisely, let $\mathbf{M}^{(a)}$ with $a=0,1$ be a pair of $N \times N$ matrices with the following block structure:  
\begin{align}\label{eq:MatrixForm}
    \mathbf{M}^{(a)}=
    \begin{pmatrix}
     & && & m^{a}_{1\,N}\\
     &  &{\bf B}^{(a)}  & & \vdots\\
      & & & & m^{a}_{N-1\,N}\\
     m^{a}_{1\,N} && \ldots & m^{a}_{N-1\,N} &m_{N\,N}^{a}
    \end{pmatrix}
\end{align}
\noindent where the ${\bf B}^{(a)}$ are two $N-1 \times N-1$ correlated GOE matrices with components $b_{ij}^a$ having zero mean, and correlations given by:
\begin{equation}
\label{eq:b_correlations}
    \mathbb{E}[b_{ij}^a \, b_{kl}^b]= \tonde{\delta_{a b} \frac{\sigma^2}{N}+ (1-\delta_{ab})\frac{\sigma^2_H}{N}}(\delta_{ik} \delta_{jl}+ \delta_{il} \delta_{jk})
\end{equation}
for $a, b \in \grafe{0,1}$.  
The two GOE matrices ${\bf B}^{(a)} $ have equal variance $ N^{-1}\sigma^2$, and for all $i \leq j$ the component $b_{ij}^0$ is correlated only with $b_{ij}^1$. Similarly, the entries $m^a_{i N}$ for $i<N$ have zero mean and correlations given by:
\begin{equation}
\label{eq:m_correlations}
    \mathbb{E}[m_{iN}^a \, m_{kN}^b]= \tonde{\delta_{a b} \frac{\Delta^2_a}{N}+ (1-\delta_{ab})\frac{\Delta^2_h}{N}}\delta_{ik}
\end{equation}
for $a,b\in\{0,1\}$. Finally, the diagonal entries $m^a_{NN}$ have a non-zero average:
\begin{equation}
\mathbb{E}[m_{NN}^a]= \mu_a,\,\,\quad a\in\{0,1\},
\end{equation}
 and covariances given by:
\begin{equation}
\label{eq:m_NN}
    \mathbb{E}[m_{NN}^a \, m_{NN}^b]-\mu_a \mu_b= \tonde{\delta_{a b} \frac{v^2_a}{N}+ (1-\delta_{ab})\frac{v^2_h}{N}}
\end{equation}
for $a,b\in\{0,1\}$.
The choice of correlations in \eqref{eq:b_correlations} implies that the matrices ${\bf B}^{(0)}, {\bf B}^{(1)}$ can be written as the sum of two GOE matrices:
\begin{equation}
    {\bf B}^{(a)}={\bf H}+ {\bf W}^{(a)},\quad a\in\{0,1\}
\end{equation}
where  ${\bf H}$ is an $N-1\times N-1$ GOE matrix with 
\begin{equation}
\begin{split}
    &\mathbb{E}[H_{ij} H_{kl}]=\frac{\sigma_H^2}{N} (\delta_{ik} \delta_{jl}+ \delta_{il} \delta_{jk}),
\end{split}
\end{equation}
that is in common to both elements of the pair, while  ${\bf W}^{(0)}, {\bf W}^{(1)}$ are $N-1\times N-1$ independent and identically distributed GOE matrices satisfying
\begin{equation}
\begin{split}
\mathbb{E}[W^{a}_{ij} W^{a}_{kl}]=\frac{\sigma_W^2}{N} (\delta_{ik} \delta_{jl}+ \delta_{il} \delta_{jk}),\quad a\in\{0,1\}
\end{split}
\end{equation}
and clearly $\sigma^2= \sigma^2_H + \sigma^2_W$. Thanks to \eqref{eq:m_correlations} and \eqref{eq:m_NN}, the entries belonging to the last row and column admit a similar decomposition in terms of independent random variables, 
\begin{equation}
m^{a}_{iN}= h_{iN} + w_{iN}^a,\quad a\in\{0,1\}
\end{equation}
with $h_{iN}\sim\mathcal{N}(0,N^{-1}\Delta^2_h)$ and $w^a_{iN}\sim\mathcal{N}(0,N^{-1}\Delta^2_{w,a})$ for $i<N$, while  $h_{NN}\sim\mathcal{N}(0,N^{-1} v^2_h)$ and $w^a_{NN}\sim\mathcal{N}(\mu_a,N^{-1} v^2_{w,a})$. Of course, $\Delta^2_a= \Delta^2_h + \Delta^2_{w,a}$ and $v^2_a= v^2_h+ v^2_{w,a}$, for $a=0$ and $a=1$.  

Each matrix of the form \eqref{eq:MatrixForm} can be re-written as a GOE matrix perturbed with both additive and multiplicative rank-one perturbations along one fixed direction identified by the basis vector ${\bf e}_N$ (corresponding to the last row and column). We can indeed write:
\begin{equation}\label{eq:MatRiscritte}
    {\bf M}^{(a)}= {\bf F}^{(a)} \cdot {\bf X}^{(a)} \cdot  {\bf F}^{(a)} +  \tonde{\mu_a + \zeta_a\frac{\xi^a}{\sqrt{N}} } \, {\bf e}_N {\bf e}_N^T
\end{equation}
where ${\bf X}^{(a)}$ is now an $N \times N$ GOE with variance $N^{-1}\sigma^2$, $ \xi^a \sim \mathcal{N}(0,1)$ is an independent standard Gaussian variable, and the terms ${\bf F}^{(a)}$ and $\zeta_a$ are introduced to reproduce the correct variance of the entries belonging to the special row and column of the matrices \eqref{eq:MatrixForm}; more precisely, 
\begin{equation}
    {\bf F}^{(a)}= \mathbbm{1} - \tonde{1-\frac{\Delta_a}{\sigma}}{\bf e}_N {\bf e}_N^T,
\end{equation}
while $\zeta_a= \tonde{v^2_a - \Delta^4_a/\sigma^{2}}^{\frac{1}{2}}$ is chosen in such a way that $m^a_{NN} \sim \mathcal{N}(\mu_a, N^{-1} v^2_a)$ is recovered. The matrix ${\bf F}^{(a)}$ represents a deterministic, multiplicative perturbation to the GOE, while the second term in \eqref{eq:MatRiscritte} is the additive one. 

\noindent We introduce the notation $\mathbf{u}_1^{a},\ldots,\mathbf{u}_N^{a}$ for the eigenvectors of the matrix $\mathbf{M}^{(a)}$, and $\lambda_{1}^{a}, \ldots,\lambda_N^{a}$ for the associated real eigenvalues.
In the rest of the paper, we give for granted that the index $a$ can be either 0 or 1, and every time it appears it is understood that that property holds for both $a=0$ and $a=1$.\\

\noindent Let us comment on the connection between the matrices we consider and those discussed in Refs.~\cite{bun2018overlaps, potters_bouchaud_2020}. There are different sources of correlations of ${\bf M}^{(0)}$ and ${\bf M}^{(1)}$: first, the noisy part is correlated, since the two matrices share the components $h_{ij}$ which are in common to both elements of the pair. Moreover, when $\Delta_0=\Delta_1$ or $\mu_0=\mu_1$, additional correlations are due to the fact that the matrices are subject to the same multiplicative or additive perturbation. In the latter case, the matrices can be cast in the form ${\bf M}^{(a)}=\sqrt{{\bf C}} {\bf X}^{(a)} \sqrt{{\bf C}}$ when $\mu_a=0$, and ${\bf M}^{(a)}={\bf X}^{(a)}+ {\bf C}$ when $\Delta_a=v_a=\sigma$. These matrices are of the same form as those considered in Refs.~\cite{bun2018overlaps, potters_bouchaud_2020}, with the caveat that the population matrix ${\bf C}$ is of rank-1 and that the noise ${\bf X}^{(a)}$ has always GOE statistics (correlated whenever $\sigma_H \neq 0$). \footnote{In \cite{bun2018overlaps}, the authors consider matrices of the form $\sqrt{{\bf C}} {\bf X}^{(a)} \sqrt{{\bf C}}$ with  ${\bf X}^{(a)}$ uncorrelated Wishart matrices.}\\
\noindent We also remark that each of the two matrices ${\bf M}^{(a)}$  has a statistics that \emph{is not} rotational invariant, since there is a basis vector ${\bf e}_N$ that identifies a special direction along which the statistics of the entries is special. Nonetheless, rotational invariance is preserved in the subspace orthogonal to ${\bf e}_N$, given that the corresponding blocks ${\bf B}^{(a)}={\bf H}+{\bf W}^{(a)}$ have a statistics which is invariant with respect to changes of basis. \\\\
\noindent As mentioned in the introduction, our motivation for looking at matrices with this structure is due to the fact that they describe the local curvature of random Gaussian functions defined on high-dimensional manifolds (for instance, on high-dimensional spheres).
These random fields are studied extensively as toy models of energy landscapes in the theory of glassy and complex systems, of fitness landscapes in evolutionary biology, of loss landscapes in problems of learning (see~\cite{ros2022high} for a recent review). The landscape at two different configurations is correlated, and so is its curvature, described by the Hessians matrices of the random function. It can be shown that such Hessian matrices at two different configurations have correlations described by the formulas above. In particular, due to the isotropy of the random field, the statistics of the Hessians is \emph{almost} rotational invariant (the matrices are of the GOE type), except for one single direction which can be identified with ${\bf e}_N$ in the formulas above, and which corresponds to the direction connecting the two configurations in the manifold. We discuss this mapping in more detail in Sec.~\ref{sec:applications}.

\begin{figure}[h!]
\centering
\includegraphics[width=0.48\textwidth]{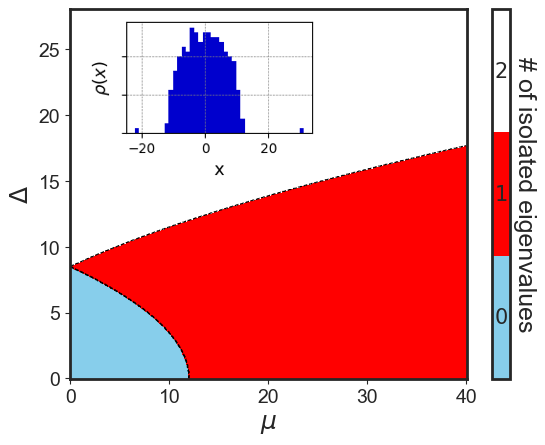}
\caption{The figure represents the regions in the plane $(\Delta,\mu)$ where either 0, 1 or 2 isolated eigenvalues emerge out of the bulk of the spectral density of $\mathbf{M}$. The plot is given for $\sigma=6$. As discussed in the main text, the existence conditions and the typical value of the isolated eigenvalue(s) are independent on $v$. \emph{Inset. } A particular realization of the spectral density for a random matrix $\mathbf{M}$ of size $N=300$, with $\sigma=6, \Delta=25$, $\mu=10$ belonging to the "white" zone, thus presenting two outliers. For simplicity, we took $v=0$  (this parameter does not affect the eigenvalue density in the large-$N$ limit).}
\label{fig:double_evals}
\end{figure}


\subsection{Spectral properties and outliers}
\label{main:isoalted_eigenvalue}
\noindent 
We summarize here the main features of the eigenvalue distribution of matrices of the form \eqref{eq:MatrixForm} (equivalently \eqref{eq:MatRiscritte}), and refer to  Appendix~\ref{app:Density} for more details. Notice that the matrices $\mathbf{M}^{(0)}$ and $\mathbf{M}^{(1)}$ have the same structure; each one has a statistics fully described by the parameters $\sigma, \Delta_a, v_a$ and $\mu_a$ for $a=0,1$. Since the spectral properties discussed in this section involve only eigenvalues and eigenvectors of one single element of the pair of matrices, they are independent of the parameters $\Delta_h, \sigma_H$ and $v_h$ describing the correlations between the entries of the two matrices in the pair. We therefore drop the superscript $a$ and denote the single-matrix parameters simply with $\sigma, \Delta, v$ and $\mu$ in this section. 

\noindent In Refs.~\cite{ros2019complexity, ros2020distribution} it is shown that the perturbation given by the special row and column of ${\bf M}$ can generate a transition in the eigenvalue density in the large-$N$ limit, occurring at a critical value of the parameters $\Delta, \mu, \sigma$; this transition separates a regime in which the eigenvalue density is independent of $\Delta, \mu$ and simply coincides with the eigenvalue density of the GOE matrix $ {\bf X}$ in \eqref{eq:MatRiscritte}, from a regime in which one or two isolated eigenvalues are present, see Fig. \ref{fig:double_evals}. These isolated eigenvalues $\lambda_{\rm iso}$ are detached from the bulk of eigenvalues forming a continuum density in the limit $N \to \infty$. 
These types of spectral transitions belong to the BBP-like transition family \cite{baik2005phase}. For GOE matrices, the BBP transition has been widely investigated in the case of an additive finite-rank perturbations  \cite{edwards1976eigenvalue, benaych2011eigenvalues, potters_bouchaud_2020}, corresponding in our setting to $\Delta=\sigma$. We now discuss the results holding in the general case $\Delta \neq \sigma$, and refer to Appendix \ref{app:Density} for the technical details.\\\\
\noindent In an expansion in $N^{-1}$ the average spectral measure of the matrices ${\bf M}$ reads:
\begin{equation}\label{eq:Dos}
d\nu_N(\lambda)= \rho_N(\lambda) d\lambda + \frac{1}{N}\sum_{*=\pm} \alpha_*\delta(\lambda-\lambda_{\rm iso,*}) + \mathcal{O}\tonde{\frac{1}{N^2}},
\end{equation}
where $\rho_N(\lambda)$ is defined for $|\lambda| \leq 2 \sigma$ and it 
admits the expansion 
\begin{equation}\label{eq:ED}
\rho_N(\lambda)= \rho_\sigma(\lambda)+ \frac{1}{N} \rho^{(1)}_\sigma(\lambda) + \mathcal{O}\tonde{\frac{1}{N^2}}.
\end{equation}
The leading order term in this expansion reduces to the eigenvalue density of the unperturbed GOE matrix ${\bf X}$,
\begin{equation}\label{eq:DoSgoe}
\rho_\sigma(\lambda)=\frac{1}{2 \pi \sigma^2} \sqrt{4 \sigma^2 - \lambda^2},
\end{equation}
while the subleading correction equals to:
\begin{equation}\label{eq:Rho1}
\rho^{(1)}_{\sigma}(\lambda)= \frac{\sqrt{4 \sigma^2-\lambda^2}}{2 \pi (\lambda^2-4 \sigma^2)}- \frac{\text{sign}(\lambda)}{\sqrt{4 \sigma^2-\lambda^2}}+ \sum_{x=\pm 1}\frac{1}{4} \, \delta(\lambda +2 x \sigma),
\end{equation}
see Appendix \ref{app:Density} for a derivation.
 The delta peaks in the measure \eqref{eq:Dos} correspond to the isolated eigenvalues $\lambda_{\rm iso, \pm}$. It can be shown that the eigenvalues are real solutions of the equation
\begin{equation}
\label{egval_eqn}
    \lambda - \mu - \Delta^2 \mathfrak{g}_\sigma(\lambda)=0,
\end{equation}
where for $\lambda$ real such that $ |\lambda|> 2 \sigma$ one has
\begin{equation}\label{eq.Stjlt}\mathfrak{g}_\sigma(\lambda)=\frac{1}{2 \sigma^2} \tonde{\lambda-  \text{sign}(\lambda)\sqrt{\lambda^2- 4 \sigma^2}}, 
\end{equation}
and $\mathfrak{g}_\sigma$
is the Stieltjes transform of the unperturbed GOE matrix ${\bf X}$, obtained from:
\begin{equation}
 \mathfrak{g}_\sigma(z)= \lim_{N \to \infty} \frac{1}{N} \mathbb{E}\quadre{\Tr \tonde{\frac{1}{z- {\bf X}}}}.
\end{equation}
Depending on the strength of the perturbations $\mu, \Delta$, the matrices can exhibit either none, one or two isolated eigenvalues, as we report in Appendix \ref{app:phenom_isolated_eigenvals} and summarize in Fig. \ref{fig:double_evals}.  In particular, for any choice of $\Delta \geq 0$ one isolated eigenvalue exists whenever
\begin{equation}\label{eq:ConditionEx}
|\mu|>  \sigma \tonde{1+ \frac{\sigma^2-\Delta^2}{\sigma^2}},
\end{equation}
and reads 
 \begin{equation}\label{eq:IsoExplicit}
\lambda_{\rm iso, -}=\frac{{2 \mu \sigma^2- \Delta^2 \mu- \text{sign}(\mu) \Delta^2 \sqrt{\mu^2-4 (\sigma^2- \Delta^2)}}}{2 (\sigma^2-\Delta^2)}.
\end{equation}
This expression was first obtained in Refs.~\cite{ros2019complexity, ros2020distribution}, and we re-derive it in Appendix \ref{app:phenom_isolated_eigenvals}.
It is simple to check that for $\mu<0$ it holds $\lambda_{\rm iso,-}< -2 \sigma$, meaning that the isolated eigenvalue is the smallest eigenvalue of the random matrix; similarly, for $\mu>0$ the isolated eigenvalue is the largest one. To connect with known results, it is convenient to express these quantities in terms of the inverse of \eqref{eq.Stjlt}, which for real $y$ is defined in $|y| \leq \sigma^{-1}$ and reads: 
\begin{equation}
\mathfrak{g}^{-1}_\sigma(y )=\frac{1}{y}+ \sigma^2 y.
\end{equation}
It has been shown in \cite{ros2020distribution} that the isolated eigenvalue \eqref{eq:IsoExplicit} can be equivalently written as 
\begin{equation}
 \lambda_{\rm iso, -}=    \mathfrak{g}^{-1}_\sigma \tonde{ \mathfrak{g}_{\overline \sigma}(\mu)}= \frac{1}{ \mathfrak{g}_{\overline \sigma}(\mu)} + \sigma^2  \mathfrak{g}_{\overline \sigma}(\mu)
\end{equation}
with ${\overline \sigma}=(\sigma^2- \Delta^2)^{\frac{1}{2}}$; this expression is well-defined for $|\mathfrak{g}_{\overline \sigma}(\mu)|< \sigma^{-1} $, which corresponds to the existing condition~\eqref{eq:ConditionEx}. From these equations one easily obtains the well-know expression of the isolated eigenvalue in presence of a purely additive rank-one perturbation \cite{edwards1976eigenvalue,benaych2011eigenvalues, potters_bouchaud_2020}: it suffices to set $\Delta \to \sigma, {\overline \sigma} \to 0$ and use the fact that $\lim_{{\overline \sigma} \to 0} \mathfrak{g}_{\overline \sigma}(\mu)= \mu^{-1}$ to get:
\begin{equation}\label{eq:IsoPCA}
 \lambda_{\rm iso, -} \stackrel{\Delta \to \sigma}{\longrightarrow}  \mu + \frac{\sigma^2}{\mu}= \mathfrak{g}^{-1}_\sigma \tonde{ \frac{1}{\mu}}.
\end{equation}

\noindent As derived in Appendix \ref{app:phenom_isolated_eigenvals}, when $\Delta > \sqrt{2} \sigma$ a second isolated eigenvalue exists in the regime 
$|\mu| < \Delta^2 \sigma^{-1}- 2 \sigma$, and equals to:
 \begin{equation}\label{eq:IsoExplicit2}
\lambda_{\rm iso, +}=\frac{{-2 \mu \sigma^2+ \Delta^2 \mu- \text{sign}(\mu) \Delta^2 \sqrt{\mu^2+4 (\Delta^2-\sigma^2)}}}{2 (\Delta^2-\sigma^2)}.
\end{equation}
This case was not discussed in Refs.~\cite{ros2019complexity, ros2020distribution} and, to the best of our knowledge, has not be considered in previous literature.
The eigenvalue $\lambda_{\rm iso, +}$ has a sign opposite to that of $\mu$; thus, for $\mu>0$ one has $\lambda_{\rm iso, +}\leq 0 \leq  \lambda_{\rm iso, -}$, while for $\mu<0$ it holds  $\lambda_{\rm iso, -}\leq 0 \leq \lambda_{\rm iso, +}$. These existence conditions are encoded in \eqref{eq:Dos} by choosing:
\begin{equation}
\begin{split}
&\alpha_-=\Theta\tonde{|\mu|-2\sigma+\frac{\Delta^2}{\sigma}}\\
&\alpha_+=\Theta(\Delta-\sqrt{2}\sigma)\Theta\tonde{-|\mu|-2\sigma+\frac{\Delta^2}{\sigma}}.
\end{split}
\end{equation}

\subsection{The outlier eigenvectors}\label{sec:outlier}
\noindent
The eigenvector ${\bf u}_{\rm iso,-}$ associated to the isolated eigenvalue~\eqref{eq:IsoExplicit} has a 
 projection on the basis vector ${\bf e}_N$ corresponding to the special line and column of the matrix, which remains of $O(1)$ when $N$ is large; the typical value of this projection has been computed in~\cite{ros2020distribution} and reads:
\begin{equation}\label{eq:ProjVectors}
\begin{split}
  &({\bf u}_{\rm iso,-} \cdot  {\bf e}_N)^2= \mathfrak{q}_{\sigma,\Delta}(\lambda_{\rm iso,-},\mu)
   \end{split}
\end{equation}
where we introduced the function:
\begin{equation}\label{eq:Defq}
\begin{split}
   & \mathfrak{q}_{\sigma,\Delta}(\lambda,\mu):=\text{sign}(\mu)\times \\
   &\frac{\text{sign}(\lambda)\Delta^2\sqrt{\lambda^2-4\sigma^2}-\lambda(2\sigma^2-\Delta^2)+2\mu\sigma^2}{2\Delta^2\sqrt{\mu^2-4(\sigma^2-\Delta^2)}}.
    \end{split}
\end{equation}
It can be shown rather easily that whenever Eq.\eqref{eq:ConditionEx} is satisfied, then \eqref{eq:ProjVectors} is positive, as it should be. This can be done by considering separately the cases $\mu>0$ and $\mu<0$ and by using the expression of $\lambda_{\rm iso,-}$. In particular, since $\lambda_{\text{iso},-}$ and $\mu$ have the same sign, one can show that the condition \eqref{eq:ConditionEx} is equivalent to 
\begin{align*}
    -|\lambda_{\text{iso},-}|(2\sigma^2-\Delta^2)+2|\mu|\sigma^2 > 0
\end{align*}
which immediately implies the positivity of \eqref{eq:ProjVectors}. In particular, Eq.\eqref{eq:ProjVectors} is zero if and only if $|\lambda_{\text{iso}, -}|=2\sigma$, which is equivalent to $|\mu|=\sigma^{-1}(2\sigma^2-\Delta^2)$, i.e. the isolated eigenvalue is at the edge of the bulk. \\
\noindent The explicit dependence of \eqref{eq:ProjVectors} on the parameters $\sigma, \Delta,\mu$ reads:
 \begin{equation}
     ({\bf u}_{\rm iso, -} \cdot  {\bf e}_N)^2= \frac{\quadre{\Delta^2 (|\mu|+ \Xi)-2 \sigma^2 \Xi+ \text{sign}(\Delta^2-\sigma^2)\sqrt{\kappa}}}{4 \Xi (\Delta^2-\sigma^2)}
 \end{equation}
 where $\Xi=(\mu^2- 4 \sigma^2+ 4 \Delta^2)^{\frac{1}{2}}$
and $\kappa=(\Delta^2 |\mu|-2 \sigma^2 |\mu|+ \Delta^2 \Xi)^2-16 \sigma^2(\sigma^2-\Delta^2)^2$.  In the case of a purely additive perturbation ($\Delta=\sigma$), using \eqref{eq:IsoPCA} one sees that this expression reduces to:
\begin{equation}\label{eq:OvPCA}
({\bf u}_{\rm iso,-} \cdot  {\bf e}_N)^2 \stackrel{\Delta \to \sigma}{\longrightarrow}  1-\frac{\sigma^2}{\mu^2}
\end{equation}
consistently with previous results~\cite{benaych2011eigenvalues, potters_bouchaud_2020}.  
We remark that for the matrices \eqref{eq:MatrixForm} the joint isolated eigenvalue-eigenvector large deviation function has been determined as well~\cite{ros2020distribution}, generalizing the case of a purely additive perturbation \cite{BiroliGuionnet}.

\section{Eigenvectors overlaps}\label{sec:theoretical_res2}
\noindent In this work we aim at characterizing the correlations between eigenvectors of pairs of correlated matrices with the distribution  \eqref{eq:MatrixForm}, similarly to what is discussed in \cite{bun2018overlaps} for unperturbed GOE matrices. In particular, our objects of interest are the {averaged squared overlaps} between eigenvectors associated to different eigenvalues of the two matrices:
\begin{equation}\label{phi}
\Phi(\lambda^{0},\lambda^{1}):=N\mathbb{E}[\langle \mathbf{u}_{\lambda^{0}},\mathbf{u}_{\lambda^{1}}\rangle^2],
\end{equation}
where $\lambda^a$ are eigenvalues of ${\bf M}^{(a)}$, ${\bf u}_{\lambda^a}$ the corresponding eigenvectors, and the expectation value $\mathbb{E}$ represents the average over the distribution of all the entries of the two matrices. In the limit of large $N$  this quantity remains of $\mathcal{O}(1)$ for values of $\lambda^a$ belonging to the continuous part (henceforth, the \emph{bulk}) of the eigenvalue density of the two matrices. 
We are interested in computing both the overlap between eigenstates associated to eigenvalues in the bulk, as well as the overlaps involving the eigenvectors associated to the isolated eigenvalues of the matrices, whenever they exist. In the first case, the average $\mathbb{E}$ over different realizations of the random matrices can be replaced by an average, for fixed
randomness, over eigenvectors associated to eigenvalues within windows of width $d\lambda \gg N^{-1}$ centred around $\lambda^{0}, \lambda^{1}$: as a matter of fact, the quantity \eqref{phi} is self-averaging in the large $N$ limit \cite{bun2018overlaps}. 

\noindent Consider now the overlaps involving the eigenvectors associated to the isolated eigenvalues. As we have discussed in the previous section, any element of the pair of matrices in \eqref{eq:MatrixForm} can present zero, one or two isolated eigenvalues. Such eigenvalues pop out of the bulk of the spectral density, which at leading order in $N$ is given by the Wigner's semicircle law. Two isolated eigenvalues, denoted by $\lambda^a_{\text{iso},\pm}$, exist for each matrix $a\in\{0,1\}$ only when the noise from the special row and column is considerably bigger than the variance of the main GOE blocks, i.e. $\Delta_a >\sqrt{2} \sigma$. 
In the following, we restrict to the case in which only one isolated eigenvalue exists, equal to $\lambda^a_{\text{iso},-}$. To simplify the notation, henceforth we set
\begin{equation}
\label{eval_iso}
\lambda_{\rm iso}^a \equiv \lambda_{\rm iso, -}^a,
\end{equation}
meaning that 
 $\lambda^0_{\text{iso}}$ is the isolated eigenvalue \eqref{eq:IsoExplicit}  of $\mathbf{M}^{(0)}$, and analogously for $\lambda^1_{\text{iso}}$. All of the results presented in the following can be easily generalized to the other isolated eigenvalues of the random matrices, whenever they exist. 
 
 We also remark that in the case in which both the eigenvalues in \eqref{phi} are isolated, the relevant quantity to determine is the rescaled function:
\begin{align}
\label{phi_rescaled}
\tilde \Phi(\lambda_{\rm iso}^0,\lambda_{\rm iso}^1)   :=\frac{\Phi(\lambda_{\rm iso}^0,\lambda_{\rm iso}^1)}{N}=\mathbb{E}\left[\langle \mathbf{u}_{\lambda_{\text{iso}}^0},\mathbf{u}_{\lambda_{\text{iso}}^1} \rangle^2\right].
\end{align}
This is because both eigenvectors have an $\mathcal{O}(1)$ projection on the special direction ${\bf e}_N$ given by \eqref{eq:ProjVectors}, so that their overlap is at least of the order of \eqref{eq:ProjVectors}. This clearly indicates that the quantity that remains of $\mathcal{O}(1)$ in the limit of large $N$ is the 
rescaled quantity \eqref{phi_rescaled}.


\noindent The overlap \eqref{phi} takes a different form depending on whether the considered eigenvalues (either both, or one of them) belong to the bulk of the eigenvalues density of their respective matrices, or are isolated. Our main results are the explicit formulas for the overlaps in all the different cases, as a function of the parameters defining the statistics of the matrices. An overview of the calculations leading to these results is given in Sec.~\ref{sec:overview_comp}, and details are reported in the Appendices. In the following subsections we report the final explicit expressions.

\subsection{Eigenvector overlaps  of bulk-bulk eigenvalues}

\noindent Each element of the pair of random matrices defined in Eq. \eqref{eq:MatrixForm} has a GOE block ${\bf B}^{(a)}$ having the same statistics (only the matrix elements in the special row and column have a statistics that  depends on $a$). The bulk spectral densities  $\rho_\sigma(\lambda)$ of both matrices in the large $N$ limit are determined by these blocks, and thus are exactly the same for both matrices, given by \eqref{eq:DoSgoe}. The spectral densities are supported on the interval $[-2\sigma,2\sigma]$; when the respective eigenvalues $\lambda^0,\lambda^1\in [-2\sigma,2\sigma]$, the overlap between the two correspondent eigenvectors reads:
\begin{align}
\label{eq:phi_bulk_bulk}
\Phi(\lambda^0, \lambda^1)=  \frac{2 \sigma_W^2 \, \tonde{1-\frac{\sigma_W^2}{2 \sigma^2}} (\lambda^0-\lambda^1)^2}{\displaystyle \prod_{k=\pm }A_k}  + \mathcal{O}\tonde{\frac{1}{N}}
\end{align}
with 
\begin{align}
\begin{split}
    A_k=&\frac{\sigma_W^4}{4 \sigma^4}\tonde{\sqrt{4 \sigma^2-(\lambda^0)^2}+k\sqrt{4 \sigma^2-(\lambda^1)^2}}^2\\
    &+\tonde{1-\frac{\sigma_W^2}{2 \sigma^2}}^2 (\lambda^0-\lambda^1)^2.
\end{split}
\end{align}
This expression depends only on the parameters $\sigma, \sigma_W$ defining the statistics of the GOE blocks ${\bf D}^{(a)}$, and it is consistent with the results of Ref.~\cite{bun2018overlaps}. Indeed, Ref.~\cite{bun2018overlaps} presents the calculation of the overlap between bulk eigenvectors of matrices of the form ${\bf C}+ {\bf A}+ {\bf D}^{(a)}$, where ${\bf C}$ is a deterministic (population) matrix, while ${\bf A}$ and ${\bf D}^{(a)}$ are $N\times N$ GOE matrices with variances $\rho_{12}$ and $\sigma^2_a-\rho_{12}$, respectively. The overlap is shown to be independent of the matrix ${\bf C}$, and to coincide with \eqref{eq:phi_bulk_bulk} with $ \sigma_H^2 \to {\rho_{12}}$ and $ \sigma_W^2 \to \sigma_a^2-\rho_{12}$, as expected.  
Notice that the case considered in Ref.~\cite{bun2018overlaps} corresponds to vanishing finite-rank perturbations ($\Delta_a=v_a=\sigma$, $\mu_a=0$); therefore, no isolated eigenvalue(s) are present in that case. Eq.~\eqref{eq:phi_bulk_bulk}  shows that the finite rank perturbations do not affect the overlap between bulk eigenvectors, to leading order in $N$.  
We remark that the $1/N$ contribution to \eqref{eq:phi_bulk_bulk} can also be determined explicitly: we discuss this in Sec.~\ref{sec:FFiniteSize}. \\

\begin{figure}[h!]
\centering
\includegraphics[width=0.48\textwidth]{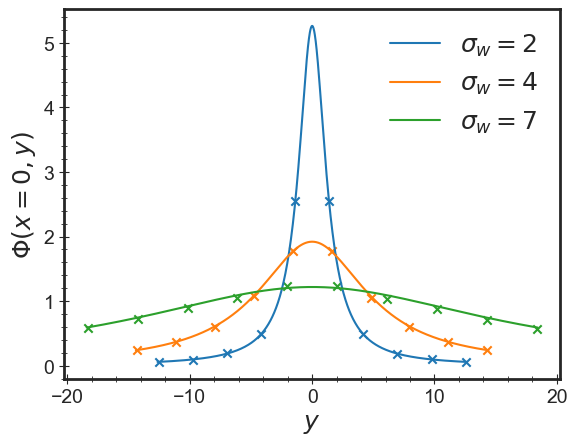}
\caption{Plot representing the theoretical curves of the bulk-bulk overlap \eqref{eq:phi_bulk_bulk} with their respective numerical simulations (colored points). We used $\sigma_H=6, \Delta_h=2.5, \Delta_{w,0}=2, \Delta_{w,1}=1.5, \mu_0=\mu_1=0$, and we plot the overlap for $x=\lambda^0=0$ and $y=\lambda^1\in[-2\sigma,2\sigma]$, for several choices of $\sigma_W$. The numerical simulations were carried out by generating 500 times pairs of random matrices of size $N=200$. As for Fig.\ref{fig:double_evals} we set $v_0=v_1=0$ given that the final results do not depend on them, to leading orders.}
\label{fig:phi_bulk_bulk}
\end{figure}

\noindent A numerical check of \eqref{eq:phi_bulk_bulk} is given in Fig.~\ref{fig:phi_bulk_bulk}. We briefly comment on how the numerical simulations are performed. In order to obtain the eigenvectors overlaps numerically, we generate the three GOE random matrices $\mathbf{H}$, $\mathbf{W}^{(0)}$ and $\mathbf{W}^{(1)}$; similarly, we generate the Gaussian variables $h_{iN}$, $w_{iN}^0, w_{iN}^1$. The elements $m_{NN}^0$ and $m_{NN}^1$ are simply set equal to $\mu_0$ and $\mu_1$ respectively, i.e. we set $v_0=v_1=0$; this is motivated by the fact that, as we show below, to the $1/N$ order we are interested in, all of our analytical results are independent on the variances $v_0,v_1$. After having generated such entries, we sum them up to get the two matrices $\mathbf{M}^{(0)}$ and $\mathbf{M}^{(1)}$, according to Sec.\ref{subsec:matrix}. We then diagonalize them and consider eigenvectors associated to eigenvalues in the intervals $[x-d\lambda,x+d\lambda]$ and $[y-d\lambda,y+d\lambda]$ respectively, with $d\lambda\gg N^{-1}$. Then for each pair of such eigenvectors of the two matrices, we compute their squared dot product, and average them together. We repeat this procedure over many realizations: the numerical points in the Figures correspond to averages over the realizations. All the Figures reported in the following are generated in this way, with the slight difference that when isolated eigenvalues are considered, there is no window $d\lambda$ on which to perform the first average, and thus the number of realizations has to be increased significantly. 

\subsection{Eigenvector overlaps  of isolated-isolated eigenvalues}
\noindent We now consider the case in which both $\lambda_{\text{iso}}^0$ and $\lambda_{\text{iso}}^1$ exist, and we give the expression for the rescaled overlap \eqref{phi_rescaled} of the corresponding eigenvectors. 
Given the function:
\begin{equation}
\label{BigPsi}
\Psi(z,\xi):=\frac{\mathfrak{g}_\sigma(z)-\mathfrak{g}_\sigma(\xi)}{\xi -z-\sigma_W^2(\mathfrak{g}_\sigma(\xi)-\mathfrak{g}_\sigma(z))},
\end{equation} 
we find that the overlap can be compactly written as:
\begin{equation}
\label{eq:phi_iso_iso}
\begin{split}
  & \tilde \Phi(\lambda_{\rm iso}^0,\lambda_{\rm iso}^1)=\mathfrak{q}_{\sigma,\Delta_0}(\lambda_{\rm iso}^0,\mu_0)\mathfrak{q}_{\sigma,\Delta_1}(\lambda_{\rm iso}^1,\mu_1)\; \\&\times\quadre{\Delta_h^2\Psi(\lambda_{\rm iso}^0,\lambda_{\rm iso}^1)
+1}^2 +\mathcal{O}\tonde{\frac{1}{N}},
    \end{split}
\end{equation}
where  $\mathfrak{q}_{\sigma,\Delta}$ is defined in Eq. \eqref{eq:Defq}. 

Let us comment on some limiting values of this expression. In the case in which the two matrices ${\bf M}^{(a)}$ have uncorrelated entries in the special line and column (meaning that $\Delta_h=0$) then the overlap reduces to $\mathfrak{q}_{\sigma,\Delta_0}(\lambda_{\rm iso}^0,\mu_0)\mathfrak{q}_{\sigma,\Delta_1}(\lambda_{\rm iso}^1,\mu_1)$ and thus it coincides with the product of two terms like \eqref{eq:ProjVectors}, one for each matrix. In fact, it is natural to expect that when the entries are uncorrelated, the eigenvectors corresponding to the isolated eigenvalues are orthogonal in the subspace that is complementary to the special direction ${\bf e}_N$, implying that their overlap is fully determined by their projection on the special direction ${\bf e}_N$. More precisely, given the decomposition $\mathbf{u}_{\lambda_{\text{iso}}^a}= (\mathbf{u}_{\lambda_{\text{iso}}^a} \cdot {\bf e}_N) {\bf e}_N + {\bf v}^a$ with ${\bf v}^a$ being the projection of $\mathbf{u}_{\lambda_{\text{iso}}^a}$ on the space orthogonal to ${\bf e}_N$, the above assumption corresponds to ${\bf v}^0 \cdot {\bf v}^1=0$, which implies $(\mathbf{u}_{\lambda_{\text{iso}}^0} \cdot \mathbf{u}_{\lambda_{\text{iso}}^1})^2= (\mathbf{u}_{\lambda_{\text{iso}}^0} \cdot {\bf e}_N)^2 (\mathbf{u}_{\lambda_{\text{iso}}^1} \cdot {\bf e}_N)^2$, which using \eqref{eq:ProjVectors} is precisely \eqref{eq:phi_iso_iso} for $\Delta_h=0$. This is the minimal value one expects for the overlap. 
On the other hand, when the two matrices are fully correlated ($\sigma_W=0=\Delta_{w,0}=\Delta_{w,1}$) the overlap is maximal and equal to one, as it can be checked from the above formulas. The dependence of $\tilde{\Phi}(\lambda_{\rm iso}^0,\lambda_{\rm iso}^1)$ on the variances $\sigma_W, \sigma_H$ is shown in Fig. \ref{fig:phi_iso_iso_pca_color_plot} for the particular case in which the perturbation is fully additive, and identical in strength for both matrices. 
\noindent Further comparisons of the formula \eqref{eq:phi_iso_iso} with numerical simulations are given in Fig.~\ref{fig:phi_iso_iso_pca_vs_sigmah_sigmaw} in Sec.~\ref{sec:applications}, where we discuss the special case of matrix PCA. 

\begin{figure}[h!]
\centering
\includegraphics[width=0.48\textwidth]{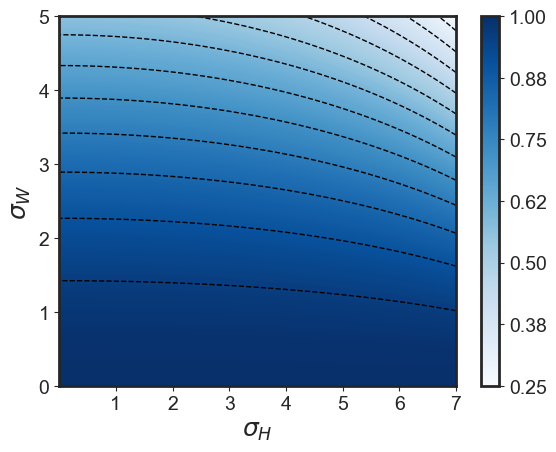}
\caption{Density plot for $ \tilde \Phi(\lambda_{\rm iso}^0,\lambda_{\rm iso}^1)$ for matrices subject to purely additive perturbations ($\Delta_h=\sigma_H$, $\Delta_{w,0}=\Delta_{w,1}=\sigma_W$) with $\mu=\mu_0=\mu_1=10$. The dashed lines are level curves.}
\label{fig:phi_iso_iso_pca_color_plot}
\end{figure}

\subsection{Eigenvector overlaps  of bulk-isolated eigenvalues}
\noindent Consider the case in which at least one of the two matrices $\mathbf{M}^{(0)}, \mathbf{M}^{(1)}$ has the isolated eigenvalue. Without loss of generality, we take such matrix to be $\mathbf{M}^{(0)}$, meaning that \eqref{eq:ConditionEx} is satisfied and $\lambda^{0}_{\text{iso}}$ exists (it is clear that all results will hold if we exchange the two matrices).  We impose no condition on $\mathbf{M}^{(1)}$, and we pick a bulk eigenvalue $y:=\lambda^1\in[-2\sigma,2\sigma]$. 
In this case, the formula for the overlap is rather cumbersome, as it is given by the following expression:
\begin{align}
\begin{split}
\label{eq:phi_iso_bulk}
&\Phi(\lambda^0_{\rm iso},y)=\frac{\mathfrak{q}_{\sigma, \Delta_0}(\lambda_{\rm iso}^0,\mu_0)}{2\pi\rho_\sigma(y)}\Bigg[\frac{4\Delta_0^2\sigma^2}{\sqrt{[\lambda_{\rm iso}^0]^2-4\sigma^2}}\frac{bc-ad}{c^2+d^2}\\
&-4\sigma^2\Delta_h^4\frac{b_1c_1e_1-a_1d_1e_1-a_1c_1f_1-b_1d_1f_1}{(c_1^2+d_1^2)(e_1^2+f_1^2)}\\
&-8\sigma^2\Delta_h^2\frac{b_2c_2e_2-a_2d_2e_2-a_2c_2f_2-b_2d_2f_2}{(c_2^2+d_2^2)(e_2^2+f_2^2)}\\
&+\frac{\Delta_0^2\Delta_1^2\mathfrak{g}_\sigma(\lambda_{\rm iso}^0)}{\sigma^2(\lambda_{\rm iso}^0-\mu_0)(y-\mu_1)}\frac{b_3c_3-a_3d_3}{c_3^2+d_3^2}
\Bigg] + \mathcal{O}\tonde{\frac{1}{N}}.
\end{split}
\end{align}
The quantities $a,b,c,a_1,b_1,c_1,a_2,b_2,c_2,a_3,b_3,c_3$ are functions of $\lambda^0_{\rm iso}$ and $y$, and depend explicitly on the parameters $\sigma, \sigma_W$ and $\Delta_1$.
For compactness, we list their explicit expressions in Appendix \ref{app:computation_of_phi_y_iso}. We also recall that the expressions for $\mathfrak{g}_\sigma$, $\rho_\sigma$ and $\mathfrak{q}_{\sigma,\Delta}$ are given in Eq.\eqref{eq:DoSgoe}, Eq.\eqref{eq.Stjlt} and  Eq. \eqref{eq:Defq}. 

\noindent In Fig.\ref{fig:phi_iso_y} we show that the complicated parameter dependencies of \eqref{eq:phi_iso_bulk} are exact, and numerical simulations perfectly agree with our theoretical results. \\\\

\begin{figure}[h!]
\centering
\includegraphics[width=0.48\textwidth]{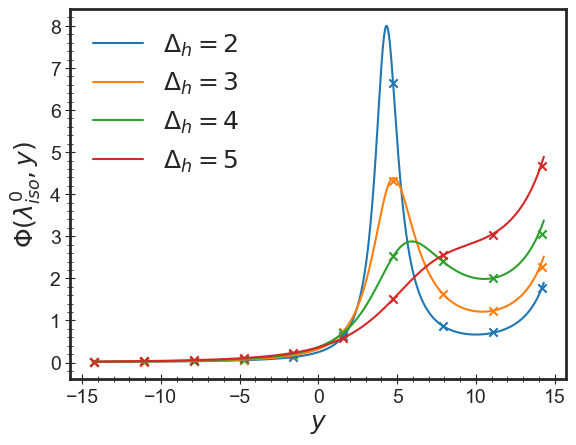}
\caption{Plot representing the theoretical curves of the bulk-isolated overlap \eqref{eq:phi_iso_bulk} with their respective numerical simulations (colored points). We used $\sigma_H=6.5, \sigma_W=3, \Delta_{w,0}=2, \Delta_{w,1}=1.5, \mu_0=15, \mu_1=4$, and we plot the overlap for $x=\lambda^0_{\text{iso}}$ and $y=\lambda^1\in[-2\sigma,2\sigma]$ (were clearly $\sigma=\sqrt{\sigma_H^2+\sigma_W^2}$) for several choices of $\Delta_h$. The numerical simulations were carried out by generating 1000 times pairs of random matrices of size $N=500$. As for Fig.\ref{fig:double_evals} we set $v_0=v_1=0$ given that the final results do not depend on them, to leading orders.}
\label{fig:phi_iso_y}
\end{figure}

\noindent As it is evident from Eqs. \eqref{eq:phi_bulk_bulk}, \eqref{eq:phi_iso_iso} and \eqref{eq:phi_iso_bulk}, the expressions for the overlaps do not depend on the parameters $v_h, v_{w,a}$ which control the strength of the fluctuations of the matrix elements $m_{NN}^a$ at the scale $N^{-1/2}$; on the other hand, they depend explicitly on the average values $\mu_a$ of those matrix elements, which are of $\mathcal{O}(1)$. More generally, the bulk properties of the pair of matrices ${\bf M}^{(a)}$ depend only on the parameters $\sigma, \sigma_H, \sigma_W$ defining the statistics of the blocks ${\bf B}^{(a)}$ (see for instance Eqs. \eqref{eq:DoSgoe} and \eqref{eq:phi_bulk_bulk}): the $\mathcal{O}(N^{-1/2})$ fluctuations of $\mathcal{O}(N^2)$ matrix elements contribute  to these quantities to leading order, while the fluctuations and correlations of a subleading number of matrix elements (such as those in the special line and column) matter only at the subsequent order in the $1/N$ expansion. On the other hand, the isolated eigenvalues and eigenvectors (which give a subleading contribution to the eigenvalue density with respect to the bulk, see Eq. \eqref{eq:Dos})  are sensitive to the $\mathcal{O}(N^{-1/2})$ fluctuations of the $\mathcal{O}(N)$ entries $m_{iN}^a$ for $i<N$, as well as to changes in the averages $\mu_a$ that are of $\mathcal{O}(1)$; this appears evident from the Eqs. \eqref{eq:phi_iso_iso} and \eqref{eq:phi_iso_bulk}. However, the fluctuations of a single matrix element $m_{NN}^a$ at the scale $N^{-1/2}$ are not strong enough to shift the typical value of these quantities (even though they affect the large deviation functions describing the probability that these quantities take  atypical values, as shown in Ref.~\cite{ros2020distribution}). We naturally expect that the dependence on the parameters $v_h, v_{w,a}$ will appear in typical values only at the next orders in the $1/N$ expansion.

\section{Overview of the computations}
\label{sec:overview_comp}
\subsection{A formula to extract the overlaps}
In this section, we aim at giving an overview on how the computation of the overlaps \eqref{phi} is carried out in the three cases presented in Section \ref{sec:theoretical_res2}. The derivation is similar to that discussed in Ref.~\cite{bun2018overlaps}. We begin by introducing the auxiliary function
\begin{align}\label{eq:ExpProRe}
\psi(z,\xi):&=\frac{1}{N}\mathbb{E}\left[\Tr[(z-\mathbf{M}^{(0)})^{-1}(\xi-\mathbf{M}^{(1)})^{-1}]\right],
\end{align}
which will be useful as a computation tool. For finite $N$, the spectral decomposition of the matrices yields:
\begin{align*}\label{eq:DoubleSum1}
    &\psi(x-i\eta,y\pm i\eta)=\frac{1}{N}\sum_{\alpha,\beta}\mathbb{E}\bigg[\frac{1}{x-i\eta-\lambda^{0}_{\alpha}}\\&\times\frac{1}{y\pm i\eta-\lambda^{1}_{\beta}}|\langle \mathbf{u}^{0}_{\alpha}|\mathbf{u}^{1}_{\beta}\rangle|^2
    \bigg]\\
    &=\frac{1}{N^2}\sum_{\alpha,\beta}\mathbb{E}\left[
  R_{x,y,\eta}^{\pm}(\lambda^{0}_{\alpha},\lambda^{1}_{\beta}) \; 
    N|\langle\mathbf{u}^{(0)}_{\alpha}|\mathbf{u}^{1}_{\beta}\rangle|^2
    \right].
\end{align*}
where we defined:
\begin{equation}
R_{x,y,\eta}^{\pm}(\lambda, \chi)=\frac{1}{x-\lambda-i\eta}\frac{1}{y-\chi\pm i\eta}.
\end{equation}

\noindent In the large $N$ limit, the sums over the eigenvalues can be turned into integrals over the spectral measure of the matrices, taking care of the presence of the subleading terms due to the isolated eigenvalues. 

\noindent The above expression hence becomes equivalent to 
{\medmuskip=0mu
\thinmuskip=0mu
\thickmuskip=0mu
\begin{align*}
\psi(x-i\eta,y\pm i\eta)=&  \int d\lambda\,d\chi\,\rho_\sigma(\lambda)\rho_\sigma(\chi)R^{\pm}_{x,y,\eta}(\lambda, \chi)  \Phi(\lambda,\chi)\\
    &+\frac{1}{N}\int  d\lambda\rho_\sigma(\lambda)R^{\pm}_{x,y,\eta}(\lambda, \lambda^{1}_{\rm iso}) \Phi(\lambda,\lambda^{1}_{\rm iso})\\
    &+\frac{1}{N}\int d\chi\rho_\sigma(\chi)R^{\pm}_{x,y,\eta}(\lambda^{0}_{\rm iso}, \chi) \Phi(\lambda^{0}_{\rm iso},\chi)\\
    &+\frac{1}{N}R^{\pm}_{x,y,\eta}(\lambda^{0}_{\rm iso}, \lambda^{1}_{\rm iso}) \frac{\Phi(\lambda^{0}_{\rm iso}, \lambda^{1}_{\rm iso})}{N}\\
\end{align*}}
where $\rho_\sigma$ denotes the continuous part of the eigenvalue densities of the matrices ${\bf M}^{(a)}$, for $a\in\{0,1\}$, defined in \eqref{eq:DoSgoe}. We set  $\psi_0=\lim_{\eta\to0^+}\psi$. The Sokhotski-Plemelj identity implies 
\begin{align}
\label{re_psi}
    \begin{split}
    &\text{Re}\left[\psi_0(x-i\eta,y+i\eta)-\psi_0(x-i\eta,y-i\eta)\right]\\
&=2\pi^2\Phi(x,y)\rho_\sigma(x)\rho_\sigma(y)\\
    &+\frac{2\pi^2}{N}\Phi(\lambda^{0}_{\rm iso},y)\rho_\sigma(y)\delta(x-\lambda^{0}_{\rm iso})\\
    &+\frac{2\pi^2}{N}\Phi(x,\lambda^{1}_{\rm iso})\rho_\sigma(x)\delta(y-\lambda^{1}_{\rm iso})\\
    &+\frac{2\pi^2}{N} \tilde{\Phi}(\lambda^{0}_{\rm iso},\lambda^{1}_{\rm iso})\delta(x-\lambda^{0}_{\rm iso})\delta(y-\lambda^{1}_{\rm iso}).
    \end{split}
\end{align}
We therefore see that in order to get the expression for $\Phi(\lambda^{0}_{\rm iso},y)$ we have to compute $\psi(z, \xi)$ and isolate the $1/N$ correction proportional to $\delta(x-\lambda^{0}_{\rm iso})$  appearing in the formula above. Instead, the term proportional to two delta peaks will give information on the overlap $\tilde \Phi(\lambda^{0}_{\rm iso},\lambda^{1}_{\rm iso})$. Notice that even though we are focusing on the case in which one single isolated eigenvalue $\lambda_{\rm iso} \equiv \lambda_{\rm iso, -}$ exists, all calculations can be extended straightforwardly to the second isolated eigenvalue, whenever it exists. \\
\noindent The above computations show that the expressions for the various overlaps can be obtained provided that the auxiliary function $\psi$ is computed up to order $1/N$. In the following sections we give an overview of such computation.

\subsection{Accounting for the finite-rank perturbations}\label{sec:BlockExpansion}
\noindent The matrices $\mathbf{M}^{(0)}, \mathbf{M}^{(1)}$ have a block structure, implying that $(z-\mathbf{M}^{(a)})$ can be inverted using Schur matrix inversion lemma, recalled in  Appendix \ref{app:stieltjes}. We set $M=N-1$, and introduce, for $a\in\{0,1\}$, the $M \times M$ matrices
\begin{equation}
\label{eq:DefA}
   {\bf A}^{(a)}(z)=\frac{{\bf m}^a \; [{\bf m}^a]^T }{z- m^a_{NN}}, {\bf m}^a=(m^a_{1N}, m^a_{2N} \cdots , m^a_{M \,  N})^T.
\end{equation}
Let us exploit Schur's matrix inversion lemma. For $i,j\leq M$ one has
\begin{equation}\label{eq:Block1}
    (z  -\mathbf{M}^{(a)})^{-1}_{ij}=
    (z -{\bf H}-{\bf W}^{(a)}-{\bf A}^{(a)}(z))^{-1}_{ij}
\end{equation}
and 
\begin{equation}
    (z -\mathbf{M}^{(a)})^{-1}_{iN}=
       -\sum_{k=1}^{N-1} \frac{m^a_{kN}}{z-m^a_{NN}}   (z  -\mathbf{M}^{(a)})^{-1}_{ik},
\end{equation}
while the remaining component reads
\begin{equation}\label{eq:Block2}
\begin{split}
    &(z-\mathbf{M}^{(a)})^{-1}_{N\, N}=(z-m^a_{NN})^{-1}\times\\&\times\left\{1+\sum_{k,l=1}^{N-1}\frac{m^a_{kN}m^a_{lN}}{z-m^a_{NN}}(z  -\mathbf{M}^{(a)})^{-1}_{kl}\right\}^{-1}.
    \end{split}
\end{equation}
It is thus convenient to decompose $\psi(z, \xi)$ as $\psi=\psi_{00}+\psi_{0N}+\psi_{NN}$ with:
\begin{align}
\label{psi_decomposition}
\begin{split}
    &\psi_{00}(z, \xi)=\mathbb{E}\left[\frac{1}{N}\sum_{i,j=1}^{N-1}(z-\mathbf{M}^{(0)})^{-1}_{ij}(\xi-\mathbf{M}^{(1)})^{-1}_{ij}\right]\\
    &\psi_{0N}(z, \xi)=\mathbb{E}\left[\frac{2}{N}\sum_{i=1}^{N-1}(z-\mathbf{M}^{(0)})^{-1}_{iN}(\xi-\mathbf{M}^{(1)})^{-1}_{iN}\right]\\
    &\psi_{NN}(z, \xi)=\mathbb{E}\left[\frac{1}{N}(z-\mathbf{M}^{(0)})^{-1}_{NN}(\xi-\mathbf{M}^{(1)})^{-1}_{NN}\right].
\end{split}
\end{align}
We make use of the expansion:
\begin{align}
\label{eq:dyson}
  \frac{1}{z-{\bf H}-{\bf W}^{(a)}-{\bf A}^{(a)}(z)}=\mathbf{G}_a(z) \sum_{u=0}^\infty [{\bf A}^{(a)}(z) {\bf G}_a(z)]^u
\end{align}
where the resolvent operator, defined as 
\begin{equation}
\label{eq:resolvent}
    {\bf G}_a(z):=   (z-{\bf H}-{\bf W}^{(a)})^{-1}
    \quad\quad a\in\{0,1\},
\end{equation}
does not depend on the components $m^a_{i N}$. For simplicity, we first perform the average over the entries  $m^a_{i \, N}$ for $i <N$, with $a=0,1$, since they don't appear in the resolvents. 
As shown in Appendix \ref{app:comp_psi_pieces}, the average of this operator expansion can be computed term by term~\footnote{We remark that these expansions can be recovered making use of the multi-resolvent local law proved in Ref. \cite{cipolloni2022thermalisation}.}, and the re-summation of the  contributions up to order $1/N$ can be performed explicitly. In particular, calling:
\begin{equation}
\label{eq:f}
    f(z; \Delta_a, \mu_a)= \frac{\Delta_a^2}{z-\mu_a- \frac{\Delta_a^2}{N} \Tr \mathbb{E}\mathbf{G}_a(z)},
    \quad a\in\{0,1\}
\end{equation}
we find
\begin{equation}
\label{eq:psi00Implicit}
\begin{split}
  &   \psi_{00}(z,\xi)= \frac{1}{N}\Tr\mathbb{E}[\mathbf{G}_0(z)\mathbf{G}_1(\xi)]+\\
  & \frac{1}{N}\left(\frac{1}{N}\Tr\mathbb{E}[\mathbf{G}^2_0(z)\mathbf{G}_1(\xi)]\right)  f(z; \Delta_0, \mu_0)+\\
  &\frac{1}{N}\left(\frac{1}{N}\Tr\mathbb{E}[\mathbf{G}_0(z)\mathbf{G}_1^2(\xi)]\right)    f(\xi; \Delta_1, \mu_1)+\\
 & \frac{1}{N}\tonde{\frac{\Delta_h^4}{\Delta^2_0 \,  \Delta^2_1}} \left(\frac{1}{N}\Tr\mathbb{E}[\mathbf{G}_0(z)\mathbf{G}_1(\xi)]\right)^2   f(\xi; \Delta_1, \mu_1)\times\\
 &  f(z; \Delta_0, \mu_0)+ \mathcal{O}\tonde{\frac{1}{N^2}}.
     \end{split}
\end{equation}
where we recall that $\Delta^2_a= \Delta^2_h + \Delta^2_{w,a}$.
Similarly, 
\begin{equation}
\label{eq:psi0NImplicit}
\begin{split}
  &  \psi_{0N}(z,\xi)=\frac{2}{N} \frac{\Delta^2_h}{\Delta^2_0 \,  \Delta^2_1} \; f(\xi; \Delta_1, \mu_1)  f(z; \Delta_0, \mu_0)\times\\
    & \times \frac{1}{N}\Tr\mathbb{E}[\mathbf{G}_0(z)\mathbf{G}_1(\xi)]+\mathcal{O}\left(\frac{1}{N^2}\right)
    \end{split}
\end{equation}
and 
\begin{align}
\label{eq:psiNNImplicit}
  \psi_{NN}(z,\xi)=\frac{1}{N}\frac{L_{NN}(z, \xi)}{(z-\mu_0)(\xi-\mu_1)} +\mathcal{O}\left(\frac{1}{N^2}\right)
\end{align}
with 
\begin{equation}
\begin{split}
\label{eq:L}
    &L_{NN}(z,\xi)= 1 + 
    f(z; \Delta_0, \mu_0) \frac{1}{N}\Tr\mathbb{E}[\mathbf{G}_0(z)]\\
    &+f(\xi; \Delta_1, \mu_1) \frac{1}{N}\Tr\mathbb{E}[\mathbf{G}_1(\xi)]\\
    &+ f(z; \Delta_0, \mu_0) \, f(\xi; \Delta_1, \mu_1) \frac{1}{N^2} \Tr\mathbb{E}[\mathbf{G}_0(z)] \,\Tr\mathbb{E}[\mathbf{G}_1(\xi)].
    \end{split}
\end{equation}
We introduce the deterministic matrices:
\begin{equation}
\label{main:multi_resolv}
  {\bf \Pi}_{k,m}:= \mathbb{E} \left[  \mathbf{G}_0(z)^{k+1} \, \mathbf{G}_1(\xi)^{m+1} \right]
\end{equation}
for $m,k$ non-negative integers.
It appears from the above expressions that in order to obtain explicit formulas for $\psi(z, \xi)$, one needs to compute the leading order contributions to the quantities ${\bf \Pi}_{1,1}$, ${\bf \Pi}_{1,2}$ and ${\bf \Pi}_{2,1}$. 

In Appendix \ref{app:ExpectationProductResolvents} we compute \eqref{main:multi_resolv} for general values of $k,m$, to order $1/N$. 
In the following subsection we report the resulting expression, which is of its own interest and, to the best of our knowledge, not given in previous literature. Given this general result, we can obtain the explicit form of ${\bf \Pi}_{1,1}$, ${\bf \Pi}_{1,2}$ and ${\bf \Pi}_{2,1}$, and thus get explicit formulas for equations \eqref{eq:psi00Implicit},\eqref{eq:psi0NImplicit}, \eqref{eq:psiNNImplicit} and therefore for $\psi$, see Appendix \ref{app:comp_psi}. The expression for $\psi$ can then be plugged inside \eqref{re_psi}, and from there one could extract the formulas for the various overlaps.  
These final steps are exposed in detail in Appendix \ref{app:computation_overlaps}. We remark that the fact that we can compute  \eqref{main:multi_resolv} to order $1/N$ also allows us to determine the $1/N$ corrections to the overlap  \eqref{eq:phi_bulk_bulk}, that we also present below.


\subsection{Multiresolvents products and finite size corrections to $\Phi\left(\lambda^0,\lambda^1\right)$}\label{sec:FFiniteSize}

\noindent In this subsection we present the formula for the expected matrix \eqref{main:multi_resolv} to order $1/N$, and also summarize our results for the $1/N$ corrections to the bulk-bulk overlap, whose leading order expression is equation \eqref{eq:phi_bulk_bulk}. 
As we prove in Appendix \ref{app:From higher-order products to products of pairs.} we have that
\begin{align}
\label{eq:multi_resolv_formula0}
    {\bf \Pi}_{k,m}=\frac{(-1)^{k+m}}{k!m!}\frac{\partial^k}{\partial z^k} \frac{\partial^m}{\partial \xi^m } 
\mathbb{E} \left[  \mathbf{G}_0(z) \, \mathbf{G}_1(\xi)\right].
\end{align}
To leading order in $N$, the matrix $\mathbb{E} \left[  \mathbf{G}_0(z) \, \mathbf{G}_1(\xi)\right]$ converges to a diagonal one with components given by \eqref{BigPsi} \cite{cipolloni2022thermalisation, potters_bouchaud_2020}. There are two types of $1/N$ corrections that contribute to the next order: the first ones come from the fact that our GOE blocks ${\bf H}+{\bf W}^{(a)}$ have size $N-1\times N-1$ but have variances rescaled with $N$; the other contributions are those normally arising even for GOE matrices of size $N\times N$. To distinguish such contributions we introduced a parameter $u$, in such a way that plugging $u=0$ gives only the second type of contributions, while using $u=1$ takes both of them into account. The analysis of the second type of contributions is already found in \cite{VERBAARSCHOT1984367}, in the standard case of $N\times N$ GOE matrices, where however $\mathbf{H}$ is fixed and not random as in our case. The additional computations are carried out in Appendix \ref{app:ExpectationProductResolvents}, where we took the results in \cite{VERBAARSCHOT1984367}, averaged over $\mathbf{H}$ and added the first type of contributions, multiplied by $u$. As a result we find:
\begin{align}
    \label{eq:corrections_prod}
\begin{split}
   \mathbb{E}[\mathbf{G}_0(z)\mathbf{G}_1(\xi)]=\Psi(z,\xi)+\frac{\bar \Psi^{(1)}(z,\xi)}{N}+\mathcal{O}\tonde{\frac{1}{N^{2}}}
\end{split}
\end{align}
with 
\begin{align}
\label{eq:finite_size_corr}
\begin{split}
&\bar \Psi^{(1)}(z, \xi)=   \bar \Lambda(z,\xi)  + \alpha (z) \partial_z \Psi(z,\xi)\\&+ \alpha(\xi) \partial_\xi \Psi(z,\xi)
+\beta(z) \partial^2_z \Psi(z,\xi) + \beta(\xi) \partial^2_\xi \Psi(z,\xi)
\end{split}
\end{align}
where  
\begin{align}\label{eq:CoefficientiLabda}
\begin{split}
    &\alpha (z)= \frac{\sigma_W^4}{2}\frac{\mathfrak{g}_\sigma''(z)}{[1-\sigma_W^2\mathfrak{g}_\sigma'(z)]^2}  - 
    \frac{\sigma_W^2}{1-\sigma_W^2\mathfrak{g}_\sigma'(z)} \times\\&\times\frac{ \mathfrak{g}_\sigma(z)}{1- \sigma^2 \mathfrak{g}_\sigma(z)} \tonde{\frac{\sigma^2 \mathfrak{g}^2_\sigma(z) }{1-\sigma^2 \mathfrak{g}^2_\sigma(z) }-u}\\
    &\beta(z)= \frac{\sigma_W^2}{2(1-\sigma_W^2\mathfrak{g}_\sigma'(z))}
    \end{split}
\end{align}
and 
\begin{align}
\label{eq:Lambda}
\begin{split}
&\bar \Lambda(z, \xi)=\frac{1}{{\xi-z -\sigma^2_W \mathfrak{g}_\sigma(\xi)+ \sigma_W^2\mathfrak{g}_\sigma(z)}}\Bigg(\frac{\sigma_H^2\mathfrak{g}^3_\sigma(z)}{[1-\sigma_H^2\mathfrak{g}^2_\sigma(z)]^2}\\&- \frac{u \sigma_H^2\mathfrak{g}^3_\sigma(z)}{1-\sigma_H^2\mathfrak{g}^2_\sigma(z)}
-\frac{\sigma_H^2\mathfrak{g}^3_\sigma(\xi)}{[1-\sigma_H^2\mathfrak{g}^2_\sigma(\xi)]^2}+ 
\frac{u \sigma_H^2\mathfrak{g}^3_\sigma(\xi)}{1-\sigma_H^2\mathfrak{g}^2_\sigma(\xi)}\Bigg).
\end{split}
\end{align}
Eq.~\eqref{eq:multi_resolv_formula0} then implies that to order $1/N$:
\begin{align}
\label{eq:multi_resolv_formula}
    {\bf \Pi}_{k,m}=\frac{(-1)^{k+m}}{k!m!}\frac{\partial^k}{\partial z^k} \frac{\partial^m}{\partial \xi^m } 
\quadre{\Psi(z,\xi)+\frac{\bar \Psi^{(1)}(z,\xi)}{N}}.
\end{align}

\noindent We now come to the $1/N$ corrections to 
 equation \eqref{eq:phi_bulk_bulk}. As we see from Eq.~\eqref{re_psi}, and from the terms that make up $\psi$, i.e. equations \eqref{eq:psi00Implicit}, \eqref{eq:psi0NImplicit} and \eqref{eq:psiNNImplicit},the only term that will give us contributions to the $1/N$ corrections of $\Phi\left(\lambda^0,\lambda^1\right)$ is ${N}^{-1}\Tr\mathbb{E}[\mathbf{G}_0(z)\mathbf{G}_1(\xi)]$. 
Then, the finite size contributions at order $1/N$ of $\Phi\left(\lambda^0,\lambda^1\right)$, that we denote as $\Phi^{(1)}\left(\lambda^0,\lambda^1\right)$, can be found as
\begin{equation}\label{eq:CorrBulk}
\begin{split}
&\Phi^{(1)}\left(\lambda^0,\lambda^1\right)=\frac{1}{2\pi^2\rho_\sigma(\lambda^0)\rho_\sigma(\lambda^1)}\times\\
    &\lim_{\eta\to 0}\text{Re}\left[\Psi^{(1)}(\lambda^0-i\eta,\lambda^1+i\eta)-\Psi^{(1)}(\lambda^0-i\eta,\lambda^1-i\eta)
    \right],
    \end{split}
\end{equation}
where $\Psi^{(1)}(z,\xi):=\bar \Psi^{(1)}(z,\xi)- u \Psi (z,\xi)$. The resulting expressions are rather long and cumbersome, and we do not report them for brevity. We nevertheless verified their exactitude by comparing with numerical simulations, see Fig.~\ref{fig:FiniteSize}.

\begin{figure}[h!]
\centering
\includegraphics[width=0.48\textwidth]
{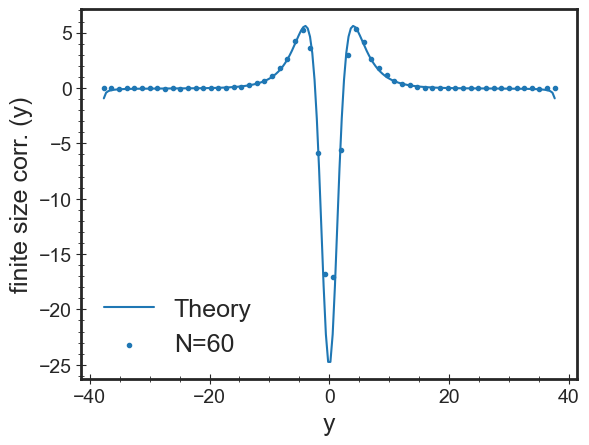}
\includegraphics[width=0.45\textwidth]
{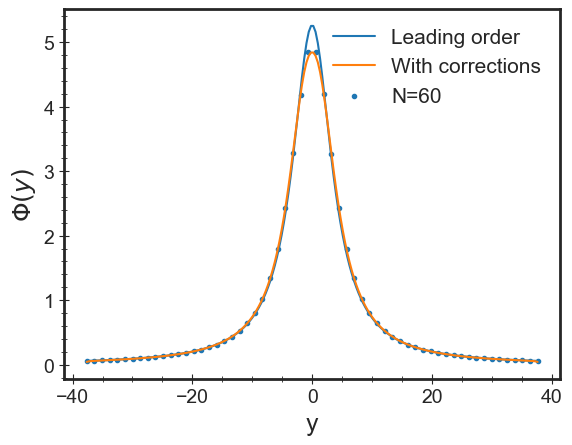}
\caption{{\it Top.} $1/N$ corrections to the bulk-bulk overlap $\Phi(x,y)$, see Eq. \eqref{eq:phi_bulk_bulk}. The plot corresponds to $x=0$ (center of the spectrum of the first matrix) and $y$ ranging through the bulk of the spectrum of the second matrix. The numerical points are obtained by diagonalizing pairs of random matrices of size $N=60$. The parameters are $\sigma_H=18$, $\sigma_W=6$, $\Delta_h=\sigma_H$, $\Delta_{w,0}=\Delta_{w,1}=\sigma_W$, $\mu_0=\mu_1=0$ and $v_0=v_1=0$.   {\it Down.} Respective comparison between the leading order term of the overlap $\phi(y):=\Phi(0,y)$, and the same quantity including the subleading $1/N$ corrections. }
\label{fig:FiniteSize}
\end{figure}

\section{Two special cases}\label{sec:applications}

\subsection*{Repeated signal {\it vs} noise measurements with coupled noise: correlations of the estimators}

Wee consider in this section the case of a purely additive rank-1 perturbations to the GOE matrices. In our setting, this corresponds to choosing  $\Delta_h=v_h=\sigma_H$, and $\Delta_{w,a}=v_{w,a}=\sigma_W$ for both $a=0,1$. This setting has a clear interpretation as a denoising problem: the perturbed matrices \eqref{eq:MatRiscritte} can in fact be written in this case as:
\begin{equation}\label{eq:addPert}
{\bf M}^{(a)}= {\bf X}^{(a)} + \mu_a\, {\bf e}_N {\bf e}_N^T,
\end{equation}
where $  {\bf X}^{(a)}$ are $N \times N$  GOE matrices with variance $\sigma^2$ identified with noise, while the rank-1 projector onto the unit vector ${\bf e}_N$ is identified with the signal ($\mu_a/\sigma$ being referred to as the \emph{signal-to-noise ratio}). Let us consider only one element of the pair, and drop the superscript $a$. In the context of denoising, the relevant question is whether (for which values of $\mu$) having access only to the matrix ${\bf M}$ and assuming that the unit vector ${\bf e}_N$ is unknown, one is able to recover some information on the signal  ${\bf e}_N$, \emph{i.e.} on the direction it identifies on the $N$-dimensional unit sphere. In the limit of large $N$, this problem is known to exhibit a sharp transition at a critical value $\mu_c$:  detecting the presence of the signal is possible only for $|\mu| \geq  \mu_c$ \cite{perry2018optimality}. Moreover, in the case of dense Gaussian matrices perturbed as \eqref{eq:addPert} $\mu_c$ coincides exactly with the critical $\mu$ at which the matrices exhibit the BBP spectral transition, i.e. $\mu_c=\sigma$: for $|\mu| < \mu_c$ the eigenvalue of the matrices are distributed with a continuous density (given by the semicircle law) supported in the finite interval $[-2 \sigma, 2 \sigma]$, while for $|\mu|>\mu_c$ the isolated eigenvalue exists. This spectral criterion is often referred to as matrix PCA. 
For $|\mu|>\mu_c$, the eigenvector ${\bf u}_{\rm iso}$ associated to the isolated eigenvalue is a statistical estimator of the signal ${\bf e}_N$: its overlap  $({\bf u}_{\rm iso} \cdot {\bf e}_N)^2$ with the signal remains of $\mathcal{O}(1)$ when $N \to \infty$ (its typical value is given in \eqref{eq:OvPCA}), and thus  ${\bf u}_{\rm iso}$ provides some information on the position of the signal on the $N$-dimensional sphere. This information becomes exact in the limit $\mu \to \infty$, when the overlap converges to one and the signal can be exactly recovered.

\begin{figure}[h!]
\centering
\includegraphics[width=0.48\textwidth]
{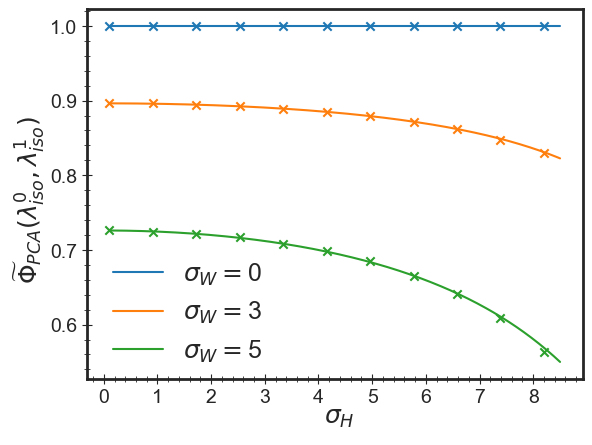}
\includegraphics[width=0.48\textwidth]{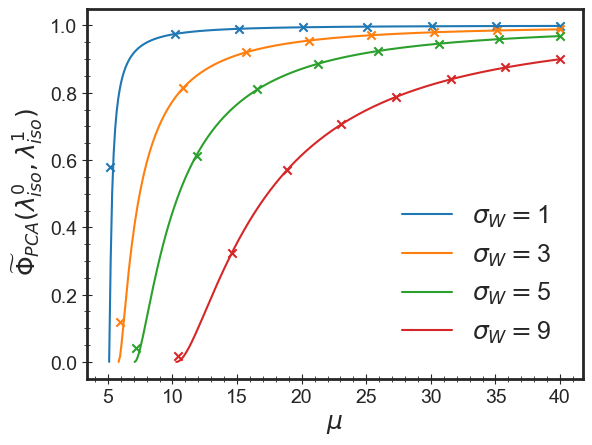}
\caption{
Overlap of the eigenvectors associated to the isolated eigenvalues of matrices subject to purely additive perturbations ($\Delta_h=\sigma_H$, $\Delta_{w,0}=\Delta_{w,1}=\sigma_W$) with $\mu=\mu_0=\mu_1$. The points are obtained from direct diagonalization of matrices of size $N=600$ averaged over $1500$ realization, while the continuous curves correspond to Eq.~\eqref{eq:PhiPCA}. As for Fig.\ref{fig:double_evals} we set $v_0=v_1=0$ given that the final results do not depend on them, to leading orders. {\it Top.}  Overlap as a function of $\sigma_H$,  for various $\sigma_W$ and $\mu_0=\mu_1=13$. {\it Bottom.} Overlap as a function of $\mu$, for  various $\sigma_W$ and $\sigma_H=5$.}
\label{fig:phi_iso_iso_pca_vs_sigmah_sigmaw}
\end{figure}

Consider now the case in which pairs of  matrices ${\bf M}^{(a)}$ of the form \eqref{eq:addPert} are given, which differ from each others only by the  fluctuations in the noisy component ${\bf X}^{(a)}$ (thus $\mu_0=\mu=\mu_1$), the noise being correlated as described in Sec.~\ref{subsec:matrix}. Such pairs may correspond to  measurements performed at different times between which the noise has changed partially, without decorrelating completely with the previous configuration. At both times the estimator of the signal is given by the eigenvector ${\bf u}^{a}_{\rm iso}$ associated to the isolated eigenvalue of the spiked matrix. The correlation in the noisy components of the matrices implies that estimators ${\bf u}^a_{\rm iso}$ will have a non-trivial overlap with each other, which corresponds to $\tilde \Phi(\lambda^0_{\rm iso},\lambda^1_{\rm iso})$. This function then quantifies the typical similarity between the estimators  ${\bf u}^{a}_{\rm iso}$ of the signal, obtained from different measurements of the signal corrupted by correlated noise. 

For a purely additive rank-1 perturbation the isolated eigenvalue reads $\lambda_{\rm iso}= \mu + \sigma^2 \mu^{-1}$, and $\mathfrak{q}_{\sigma,\Delta}(\lambda_{\rm iso},\mu)= 1- \sigma^2 \, \mu^{-2}$. 
The overlap \eqref{eq:phi_iso_iso} in this particular limit reduces to:
\begin{equation}\label{eq:PhiPCA}
\begin{split}
&\tilde \Phi_{\rm PCA}(\lambda_{\rm iso},\lambda_{\rm iso})=\tonde{1-\frac{\sigma^2}{\mu^2}}^2 \, \bigg[\sigma_H^2\omega(\lambda_{\rm iso})
    +1\bigg]^2,
 \end{split}
\end{equation}
where
\begin{align}\label{eq:omega}
\omega(z):=\lim_{z\to\xi}\Psi(z,\xi)=\frac{\mathfrak{g}_\sigma(z)}{z+(\sigma_W^2-2\sigma^2)\mathfrak{g}_\sigma(z)}.
\end{align}

In Fig.~\ref{fig:phi_iso_iso_pca_vs_sigmah_sigmaw} we compare this expression with the overlaps obtained from the direct diagonalization of the random matrices, for different values of $\sigma_W$. As expected, at fixed $\sigma_H$ the overlap is equal to one in the case of fully correlated noise ($\sigma_W=0$), and decreases monotonically with the strength $\sigma_W$ of the uncorrelated part of the noise. At fixed $\sigma_W$, the overlap also decreases with increasing $\sigma_H$, as the relative contribution of the noise $\sigma=(\sigma_W^2 + \sigma_H^2)^{1/2}$
with respect to the signal $\mu$ increases. For $\sigma_H=0$, the noise in the two sets of measurements is uncorrelated and the overlap converges to the square of \eqref{eq:OvPCA}. As discussed in Sec. \ref{sec:theoretical_res2}, this corresponds to the fact that the estimators ${\bf u}^{a}_{\rm iso}$ are orthogonal in the subspace orthogonal to the signal direction ${\bf e}_N$.

\subsection*{Hessians of random landscapes:\\ correlations in the landscape curvature}
The analysis presented in this work is motivated by the study of the geometrical properties of high-dimensional random landscapes with Gaussian statistics. Random functions defined in high-dimensional configuration space emerge in a variety of contexts. A prototypical example is given by functions $\mathcal{E}[{\bf s}]$ parametrized as: 
\begin{equation}\label{eq:Enp}
\mathcal{E}[{\bf s}]= \sqrt{\frac{D p!}{2}}\sum_{i_1 <  i_2 \cdots <i_p} a_{i_1 \, i_2 \cdots i_p} s_{i_1} s_{i_2} \cdots s_{i_p},
\end{equation}
where ${\bf s}=(s_1, \cdots, s_D)$ belongs to a manifold with a simple topology, such as the unit sphere ($\sum_{i=1}^D s_i^2=1$).
In the simplest case, the coefficients $ a_{i_1 \, i_2 \cdots i_p}$ are chosen to be independent, centred Gaussian variables with unit variance. The value of the landscape at different configurations ${\bf s}_0$, ${\bf s}_1$ is correlated as:
\begin{equation}\label{eq:EnpC}
\langle \mathcal{E}[{\bf s}_0]  \mathcal{E}[{\bf s}_1]\rangle = \frac{D}{2} \,  \tonde{{\bf s}_0 \cdot {\bf s}_1}^p.
\end{equation}

For $p \geq 3$, typical realizations of this random landscape exhibit an exponentially large (in $D$) number of minima, maxima and saddles, which are stationary points where the landscape is locally flat (where the gradient of~\eqref{eq:Enp} vanishes); the landscape is therefore highly non-convex, or { glassy}.  Characterizing the distribution of stationary points in high-dimensional random landscapes is relevant to understand how these landscapes are explored by local optimization algorithms. For models of the form \eqref{eq:Enp}, the large-$D$ scaling of the typical number of stationary points at fixed energy density $\epsilon= \lim_{D \to \infty} D^{-1}\mathcal{E}$ has been determined in the early works \cite{crisanti1992sphericalp, cavagna1998stationary, cavagna1999quenched}, and the resulting expressions are by now known with a mathematical level of rigor~\cite{auffinger2013random}. Subsequent works \cite{cavagna1997structure, subag2017complexity,subag2017complexity} have investigated the distribution of pairs of stationary configurations ${\bf s}_0$, ${\bf s}_1$ as a function of their similarity or overlap 
\begin{equation}\label{eq:Overlap}
q=\lim_{D\to \infty} {\bf s}_0 \cdot {\bf s}_1.
\end{equation}
Interesting questions concerning the correlation between such stationary points are however still open. In particular, one may be interested in understanding how the {\it curvature} of the random landscape in the surroundings of its stationary points (which encodes for their linear stability) is correlated, as a function of the energy density of the points and of their proximity $q$ in configuration space. This piece of information turns out to be crucial to characterize profiles of the random landscape along paths interpolating between different local minima, or more generally different configurations of the system. We discuss in detail this application in a forthcoming work, and here limit ourselves to commenting on the connections to the random matrix problem discussed here.

The local curvature of $\mathcal{E}[{\bf s}]$ around a configuration ${\bf s}$ is described by the Hessian matrix $\mathcal{H}[{\bf s}]$ of the landscape, which is a random matrix whose statistics depends on the constraints imposed on the configuration  ${\bf s}$ -- for example, the constraint of being a stationary point having a given energy density. Due to the spherical constraint defining the space of configurations, the Hessians matrices at different points ${\bf s}$ are defined on different ${\bf s}$-dependent subspaces of dimension $(D-1)$,  which are the $(D-1)$-dimensional tangent planes to the sphere at the configurations ${\bf s}$. It can be shown that pairs of Hessian matrices at two 
stationary points ${\bf s}_0, {\bf s}_1 $ have statistical properties strongly related to those of the matrices considered in this work. More precisely, consider two stationary points ${\bf s}_0, {\bf s}_1 $ at overlap $q$ and having energy densities $\epsilon_0, \epsilon_1$.
Let $\tau[{\bf s}_a]$ denote the tangent plane associated to each stationary point. One can choose a suitable orthonormal basis in each tangent plane, with respect to which  the rescaled Hessians can be written as 
\begin{equation}\label{eq:Hesshift}
 \frac{1}{\sqrt{D-1}}\mathcal{H}[{\bf s}_a]= {\bf M}^{(a)}- \sqrt{  \frac{2 D}{D-1}} p \epsilon_a \mathbbm{1}  
\end{equation}
where the ${\bf M}^{(a)}$ are $(D-1) \times (D-1)$ matrices of the form \eqref{eq:MatrixForm}, while $\mathbbm{1}$ is the identity matrix. We set $N=D-1$.
To have such a representation, the basis of the tangent plane $\tau[{\bf s}_0]$ has to be chosen in such a way that one vector, say ${\bf e}_{D-1}[{\bf s}_0]$, is aligned along the direction connecting the two configurations:
\begin{equation}
{\bf e}_{D-1}[{\bf s}_0]= \frac{1}{\sqrt{1-q^2}}({\bf s}_1- q {\bf s}_0),
\end{equation}
while all other vectors ${\bf e}_{i=1, \cdots, D-2}$ span the subspace that is orthogonal to both ${\bf s}_0, {\bf s}_1$. Similarly, the basis of $\tau[{\bf s}_1]$ is chosen in such a way that the first $D-2$ vectors coincide with the  ${\bf e}_{i=1, \cdots, D-2}$ chosen above, while 
\begin{equation}
{\bf e}_{D-1}[{\bf s}_1]= \frac{1}{\sqrt{1-q^2}}({\bf s}_0- q {\bf s}_1).
\end{equation}
When expressed in the corresponding basis, each matrix ${\bf M}^{(a)}$ is made of a $(D-2) \times (D-2)$ dimensional block that has a GOE-like statistics (invariant under rotation of the basis in the corresponding subspace), and of a row and column that are special. The special lines are associated to the basis vectors  ${\bf e}_{D-1}[{\bf s}_a]$, which are aligned along the direction connecting the two stationary points in configuration space. 
This block structure is a consequence of the fact that the statistics of the landscape, encoded in the correlation function \eqref{eq:Enp}, is isotropic; if no constraint was imposed on the ${\bf s}_a$, the   statistics of the Hessians would be fully rotational invariant. The constraint of the overlap breaks such an invariance as it singles out one special direction (the one connecting the two configurations), along which the statistics is perturbed.

We now discuss how the correlations of the entries of the ${\bf M}^{(a)}$ depend on the parameter $p$ characterizing the structure of the random landscape, as well as on the parameters $q, \epsilon_a$ with $a=0,1$ defining the properties of the stationary points.
This has been determined explicitly  in  \cite{ros2019complexity, ros2020distribution} (see also Lemma 13 in \cite{subag2017complexity}).  In the notation of Sec.~\ref{sec:theoretical_res}, one finds: 
\begin{equation}\label{eq:VarHEssGOE}
\sigma^2_H= p(p-1) q^{p-2}, \quad \quad  \sigma^2_W= p(p-1)[1-q^{p-2}],
\end{equation}
which fully specify the statistics of the GOE blocks. 
The statistics of the special row and column is described by the parameters $\Delta_h, \Delta_\omega, v_h, v_\omega$ and $\mu_a$. Since nothing in the above calculation depends explicitly on $v_h, v_w$, we can neglect the corresponding expressions. One finds:
\begin{equation}
\begin{split}
&\Delta_h^2=p(p-1)q^{p-3}\left[ (p-2)-(p-1) q^2 \frac{1-q^{2p-4}}{1-q^{2p-2}} \right]\\
&\Delta_w^2=p(p-1)\left[\frac{q^{2 p}+(p-2) (1-q^2) q^{p+1}-q^4}{q^3 \left(q^p-q\right)} \right]
\end{split}
\end{equation}
which implies:
\begin{equation}
\Delta^2=p(p-1)\left[1-\frac{(p-1)(1-q^2)q^{2p-4}}{1-q^{2p-2}}\right].
\end{equation}
The fluctuations of the elements $m^a_{i N}$ for $i<N=D-1$ are thus determined uniquely by $p$, and by the overlap $q$. On the other hand, the dependence on the energies $\epsilon_a$ enters in the averages $\mu_a$. We have:
{\medmuskip=1mu
\thinmuskip=1mu
\thickmuskip=1mu
\begin{equation}
 \mu_0=\frac{ (p-1) p \left(1-q^2\right) \left(a_0(q) \epsilon_0-a_1(q) \epsilon_1 \right)}{q^{6-p}+q^{3 p+2}- q^{p+2}\left((p-1)^2 (q^4+1)-2 (p-2) p q^2\right)}
\end{equation}
}
with
\begin{equation}
\begin{split}
 a_1(q)&=q^{3 p}+ q^{p+2}\left(p-2-(p-1) q^2\right)\\
 a_0(q)&= q^4+ q^{2 p}\left(1-p +(p-2)q^2\right),
   \end{split}
\end{equation}
and 
{\medmuskip=1mu
\thinmuskip=1mu
\thickmuskip=1mu
\begin{equation}
 \mu_1=\frac{ (p-1) p \left(1-q^2\right) \left(a_0(q) \epsilon_1-a_1(q) \epsilon_0 \right)}{q^{6-p}+q^{3 p+2}- q^{p+2}\left((p-1)^2 (q^4+1)-2 (p-2) p q^2\right)}.
\end{equation}
}

We observe that these formulas describe the fluctuation of the entries of the two Hessians expressed in different bases, differing by the last vector ${\bf e}_{D-1}[{\bf s}^a]$. The derivation of the overlap formula given above assumes however that both matrices are expressed in the same basis. It follows that when applied to this Hessian problem, $\Phi(x,y)$ gives the square of the overlap between eigenvectors shifted by a quantity (related to the components of the eigenvectors along the directions ${\bf e}_{D-1}[{\bf s}^a]$), as we discuss in more detail in Appendix \ref{app:RotationLAndscape}. 
In Fig.~\ref{fig:p_spin} 
 we show the quantity $\Phi(\lambda_{\rm iso}^0,y)$, which is related to the overlap between the isolated eigenvector of ${\bf M}^{(0)}$ and the eigenvectors associated to eigenvalues in the bulk of ${\bf M}^{(1)}$. The plots are given for $p=3$ and for fixed overlap $q$ and energy density $\epsilon_0$ of the first stationary point. For the chosen values of $\epsilon_1$, the Hessian \eqref{eq:Hesshift} at ${\bf s}_0$ has a single negative mode given by the isolated eigenvalue, while the Hessian at ${\bf s}_1$ has either an extensive number of negative modes (main panel) or no negative modes (inset). As the energy density $\epsilon_1$ of the second stationary point decreases (getting closer to $\epsilon_0$), the peak in the overlap shifts towards the lower edge of the support of the eigenvalue density of ${\bf M}^{(1)}$, indicating that the direction of the isolated mode of one Hessian becomes progressively more correlated with the smallest modes of the other Hessian.
 
\begin{figure}[h!]
\centering
\includegraphics[width=0.48\textwidth]
{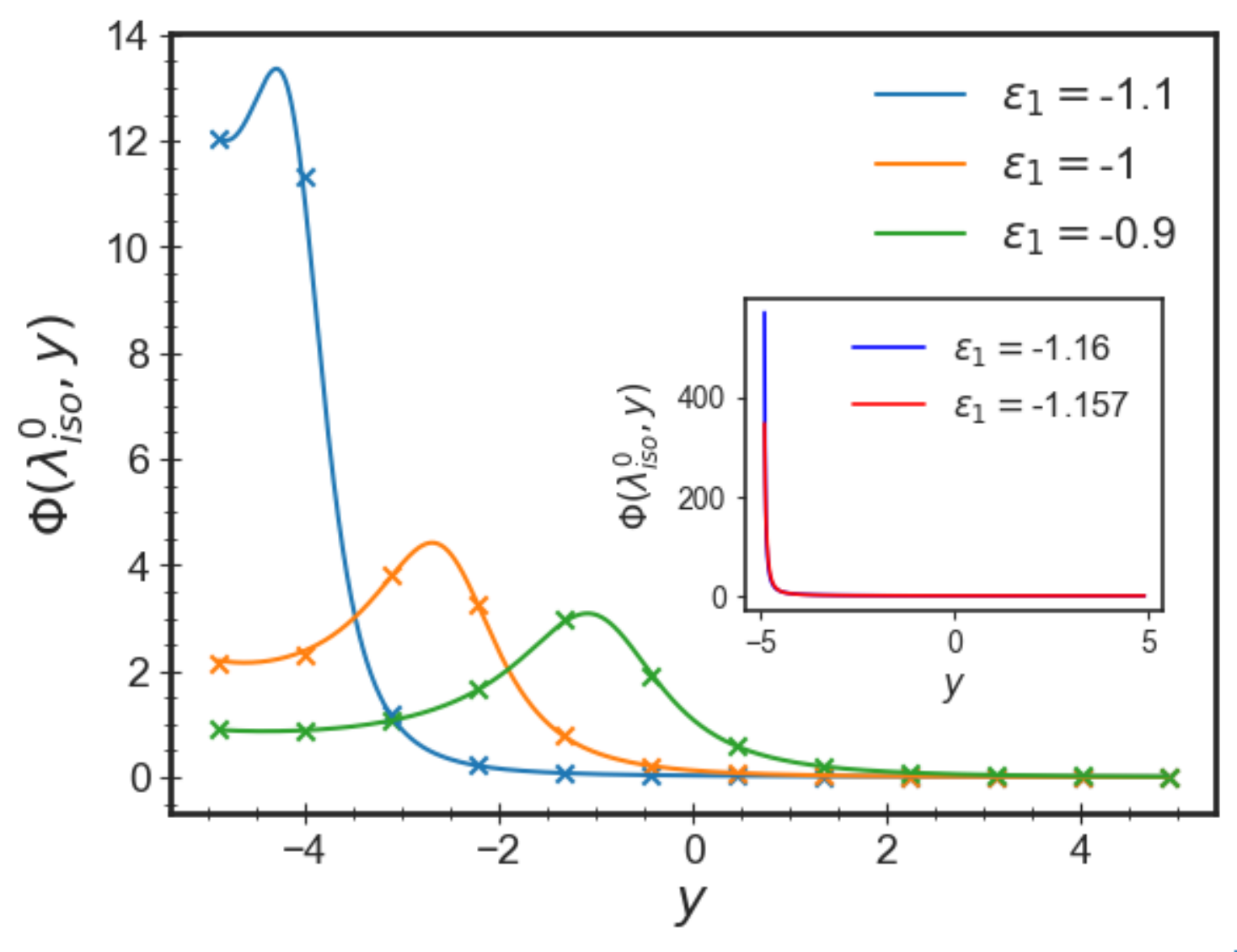}
\caption{Profile of $\Phi(\lambda_{\rm iso}^0,y)$ for Hessian matrices at two stationary points of the landscape \eqref{eq:Enp}, with  $\epsilon_0=-1.167$, $q=0.72$. {\it Inset.} $\Phi(\lambda_{\rm iso}^0,y)$ for $q=0.67$ and smaller values of $\epsilon_1$. }
\label{fig:p_spin}
\end{figure}

\section{Conclusion}\label{sec:conclusions}
\noindent We have considered pairs of correlated GOE matrices deformed with additive and multiplicative rank-1 perturbations, giving rise to outliers eigenvectors in their spectrum. We have determined the explicit expression of the overlap between the eigenvectors of the two matrices: in particular, we have derived expressions for the overlaps between the outlier eigenvector of one matrix and arbitrary eigenvectors (bulk or outliers) of the other matrix, see Eqs. \eqref{eq:phi_iso_iso} and \eqref{eq:phi_iso_bulk}. Moreover, we have generalized the results of Ref.~\cite{bun2018overlaps} by computing the subleading corrections to the overlap between eigenvectors belonging to the bulk of the two matrices.  Our analysis includes the special case of correlated GOE matrices perturbed by an additive signal term, a case widely studied in the literature and often referred to as matrix PCA. We have shown how in this case the overlap between the estimators of the signal takes a particularly simple form, see Eq.~\ref{eq:PhiPCA}, and quantifies the correlation between estimators obtained from different sets of measurements with correlated noise. As an intermediate result, we have determined the finite-size corrections to the expectation of the product of resolvents of correlated GOE matrices, see Eqs. 	\eqref{eq:multi_resolv_formula0} and \eqref{eq:corrections_prod}, generalising known results for the leading-order term~\cite{cipolloni2022thermalisation}. 

\noindent We remark that similar questions concerning overlaps between outliers have been considered in previous literature for a rather broad class of covariance matrices. In particular,  the overlaps between the eigenvectors of population (averaged) covariance matrices  and those of sample covariance matrices have been discussed in Refs. \cite{
bun2016rotational, bun2017cleaning}, and cases involving outliers eigenvectors are discussed in Refs. \cite{bun2018optimal, bloemendal2016principal}. For pairs of sample covariance matrices, the problem is discussed in \cite{bun2018overlaps}. However, we are not aware of results involving the overlap between outliers of pairs of correlated, spiked covariance matrices, so this remains a direction for future work. The extension of our results to the case of complex matrices with GUE (Gaussian Unitary Ensemble) statistics can be performed rather straightforwardly following the same steps presented in this work, and we also leave it to future work.

\noindent The matrix ensembles considered in this work describe the statistical properties of the curvature of high-dimensional, Gaussian random landscapes. The overlap between eigenvectors computed in this work give the correlations between the eigenvectors of the  Hessian matrices describing the landscape curvature; therefore, our result allows us to determine, for instance, how the softest modes at two nearby stationary points of the landscapes are oriented with respect to each others. This piece of information is relevant to determine, for example, energy profiles along paths interpolating between minima and saddles in the high-dimensional configuration space. This is the subject of ongoing work.

\subsection*{Acknowledgements}
\noindent We thank two anonymous reviewers for their constructive input. VR acknowledges funding by the ``Investissements d’Avenir” LabEx PALM (ANR-10-LABX-0039-PALM).

\bibliography{apssamp.bib}

\providecommand{\noopsort}[1]{}\providecommand{\singleletter}[1]{#1}%
\begin{thebibliography}{10}

\bibitem{jones1978eigenvalue}
RC~Jones, JM~Kosterlitz, and DJ~Thouless.
\newblock The eigenvalue spectrum of a large symmetric random matrix with a
  finite mean.
\newblock {\em Journal of Physics A: Mathematical and General}, 11(3):L45,
  1978.

\bibitem{edwards1976eigenvalue}
Samuel~F Edwards and Raymund~C Jones.
\newblock The eigenvalue spectrum of a large symmetric random matrix.
\newblock {\em Journal of Physics A: Mathematical and General}, 9(10):1595,
  1976.

\bibitem{furedi1981eigenvalues}
Zolt{\'a}n F{\"u}redi and J{\'a}nos Koml{\'o}s.
\newblock The eigenvalues of random symmetric matrices.
\newblock {\em Combinatorica}, 1(3):233--241, 1981.

\bibitem{johnstone2001distribution}
Iain~M Johnstone.
\newblock On the distribution of the largest eigenvalue in principal components
  analysis.
\newblock {\em The Annals of statistics}, 29(2):295--327, 2001.

\bibitem{baik2005phase}
Jinho Baik, G{\'e}rard~Ben Arous, and Sandrine P{\'e}ch{\'e}.
\newblock Phase transition of the largest eigenvalue for nonnull complex sample
  covariance matrices.
\newblock {\em The Annals of Probability}, 33(5):1643--1697, 2005.

\bibitem{peche2006largest}
Sandrine P{\'e}ch{\'e}.
\newblock The largest eigenvalue of small rank perturbations of hermitian
  random matrices.
\newblock {\em Probability Theory and Related Fields}, 134(1):127--173, 2006.

\bibitem{benaych2011eigenvalues}
Florent Benaych-Georges and Raj~Rao Nadakuditi.
\newblock The eigenvalues and eigenvectors of finite, low rank perturbations of
  large random matrices.
\newblock {\em Advances in Mathematics}, 227(1):494--521, 2011.

\bibitem{benaych2012singular}
Florent Benaych-Georges and Raj~Rao Nadakuditi.
\newblock The singular values and vectors of low rank perturbations of large
  rectangular random matrices.
\newblock {\em Journal of Multivariate Analysis}, 111:120--135, 2012.

\bibitem{capitaine2011free}
Mireille Capitaine, Catherine Donati-Martin, Delphine F{\'e}ral, Maxime
  F{\'e}vrier, et~al.
\newblock Free convolution with a semicircular distribution and eigenvalues of
  spiked deformations of wigner matrices.
\newblock {\em Electron. J. Probab}, 16(64):1750--1792, 2011.

\bibitem{knowles2014outliers}
Antti Knowles and Jun Yin.
\newblock The outliers of a deformed wigner matrix.
\newblock {\em The Annals of Probability}, 42(5):1980--2031, 2014.

\bibitem{tao2013outliers}
Terence Tao.
\newblock Outliers in the spectrum of iid matrices with bounded rank
  perturbations.
\newblock {\em Probability Theory and Related Fields}, 155(1):231--263, 2013.

\bibitem{bordenave2016outlier}
Charles Bordenave and Mireille Capitaine.
\newblock Outlier eigenvalues for deformed iid random matrices.
\newblock {\em Communications on Pure and Applied Mathematics},
  69(11):2131--2194, 2016.

\bibitem{rochet2016isolated}
Jean Rochet.
\newblock {\em Isolated eigenvalues of non Hermitian random matrices}.
\newblock PhD thesis, Universit{\'e} Sorbonne Paris Cit{\'e}, 2016.

\bibitem{capitaine2016spectrum}
Mireille Capitaine and Catherine Donati-Martin.
\newblock Spectrum of deformed random matrices and free probability.
\newblock {\em arXiv preprint arXiv:1607.05560}, 2016.

\bibitem{bun2017cleaning}
Jo{\"e}l Bun, Jean-Philippe Bouchaud, and Marc Potters.
\newblock Cleaning large correlation matrices: tools from random matrix theory.
\newblock {\em Physics Reports}, 666:1--109, 2017.

\bibitem{montanari2015limitation}
Andrea Montanari, Daniel Reichman, and Ofer Zeitouni.
\newblock On the limitation of spectral methods: From the gaussian hidden
  clique problem to rank-one perturbations of gaussian tensors.
\newblock {\em Advances in Neural Information Processing Systems}, 28, 2015.

\bibitem{lu2020phase}
Yue~M Lu and Gen Li.
\newblock Phase transitions of spectral initialization for high-dimensional
  non-convex estimation.
\newblock {\em Information and Inference: A Journal of the IMA}, 9(3):507--541,
  2020.

\bibitem{hwang2020force}
Sungmin Hwang and Harukuni Ikeda.
\newblock Force balance controls the relaxation time of the gradient descent
  algorithm in the satisfiable phase.
\newblock {\em Physical Review E}, 101(5):052308, 2020.

\bibitem{fyodorov2022extreme}
Yan~V Fyodorov, Boris~A Khoruzhenko, and Mihail Poplavskyi.
\newblock Extreme eigenvalues and the emerging outlier in rank-one
  non-hermitian deformations of the gaussian unitary ensemble.
\newblock {\em arXiv preprint arXiv:2211.00180}, 2022.

\bibitem{ikeda2022bose}
Harukuni Ikeda.
\newblock Bose-einstein-like condensation of deformed random matrix: A replica
  approach.
\newblock {\em arXiv preprint arXiv:2208.01848}, 2022.

\bibitem{fraboul2021artificial}
Jules Fraboul, Giulio Biroli, and Silvia De~Monte.
\newblock Artificial selection of communities drives the emergence of
  structured interactions.
\newblock {\em arXiv preprint arXiv:2112.06845}, 2021.

\bibitem{baron2022non}
Joseph~W Baron, Thomas~Jun Jewell, Christopher Ryder, and Tobias Galla.
\newblock Non-gaussian random matrices determine the stability of
  lotka-volterra communities.
\newblock {\em arXiv preprint arXiv:2202.09140}, 2022.

\bibitem{nadler2008finite}
Boaz Nadler.
\newblock Finite sample approximation results for principal component analysis:
  A matrix perturbation approach.
\newblock {\em The Annals of Statistics}, 36(6):2791--2817, 2008.

\bibitem{kosterlitz1976spherical}
John~M Kosterlitz, David~J Thouless, and Raymund~C Jones.
\newblock Spherical model of a spin-glass.
\newblock {\em Physical Review Letters}, 36(20):1217, 1976.

\bibitem{ledoit2011eigenvectors}
Olivier Ledoit and Sandrine P{\'e}ch{\'e}.
\newblock Eigenvectors of some large sample covariance matrix ensembles.
\newblock {\em Probability Theory and Related Fields}, 151(1):233--264, 2011.

\bibitem{noiry2021spectral}
Nathan Noiry.
\newblock Spectral measures of spiked random matrices.
\newblock {\em Journal of Theoretical Probability}, 34(2):923--952, 2021.

\bibitem{bun2016rotational}
Jo{\"e}l Bun, Romain Allez, Jean-Philippe Bouchaud, and Marc Potters.
\newblock Rotational invariant estimator for general noisy matrices.
\newblock {\em IEEE Transactions on Information Theory}, 62(12):7475--7490,
  2016.

\bibitem{bun2018overlaps}
Jo{\"e}l Bun, Jean-Philippe Bouchaud, and Marc Potters.
\newblock Overlaps between eigenvectors of correlated random matrices.
\newblock {\em Physical Review E}, 98(5):052145, 2018.

\bibitem{potters_bouchaud_2020}
Marc Potters and Jean-Philippe Bouchaud.
\newblock {\em A First Course in Random Matrix Theory: for Physicists,
  Engineers and Data Scientists}.
\newblock Cambridge University Press, 2020.

\bibitem{subag2017complexity}
Eliran Subag.
\newblock The complexity of spherical $ p $-spin models—a second moment
  approach.
\newblock {\em The Annals of Probability}, 45(5):3385--3450, 2017.

\bibitem{ros2019complex}
Valentina Ros, Gerard~Ben Arous, Giulio Biroli, and Chiara Cammarota.
\newblock Complex energy landscapes in spiked-tensor and simple glassy models:
  Ruggedness, arrangements of local minima, and phase transitions.
\newblock {\em Physical Review X}, 9(1):011003, 2019.

\bibitem{ros2020distribution}
Valentina Ros.
\newblock Distribution of rare saddles in the p-spin energy landscape.
\newblock {\em Journal of Physics A: Mathematical and Theoretical},
  53(12):125002, 2020.

\bibitem{ros2019complexity}
Valentina Ros, Giulio Biroli, and Chiara Cammarota.
\newblock Complexity of energy barriers in mean-field glassy systems.
\newblock {\em EPL (Europhysics Letters)}, 126(2):20003, 2019.

\bibitem{ros2021dynamical}
Valentina Ros, Giulio Biroli, and Chiara Cammarota.
\newblock Dynamical instantons and activated processes in mean-field glass
  models.
\newblock {\em SciPost Physics}, 10(1):002, 2021.

\bibitem{rizzo2021path}
Tommaso Rizzo.
\newblock Path integral approach unveils role of complex energy landscape for
  activated dynamics of glassy systems.
\newblock {\em Physical Review B}, 104(9):094203, 2021.

\bibitem{ros2022high}
Valentina Ros and Yan~V Fyodorov.
\newblock The high-d landscapes paradigm: spin-glasses, and beyond.
\newblock {\em arXiv preprint arXiv:2209.07975}, 2022.

\bibitem{BiroliGuionnet}
Giulio Biroli and Alice Guionnet.
\newblock Large deviations for the largest eigenvalues and eigenvectors of
  spiked random matrices.
\newblock {\em arXiv preprint arXiv:1904.01820}, 2019.

\bibitem{cipolloni2022thermalisation}
Giorgio Cipolloni, L{\'a}szl{\'o} Erd{\H{o}}s, and Dominik Schr{\"o}der.
\newblock Thermalisation for wigner matrices.
\newblock {\em Journal of Functional Analysis}, 282(8):109394, 2022.

\bibitem{VERBAARSCHOT1984367}
J~Verbaarschot, H.A Weidenmüller, and M~Zirnbauer.
\newblock Evaluation of ensemble averages for simple hamiltonians perturbed by
  a goe interaction.
\newblock {\em Annals of Physics}, 153(2):367--388, 1984.

\bibitem{perry2018optimality}
Amelia Perry, Alexander~S Wein, Afonso~S Bandeira, and Ankur Moitra.
\newblock Optimality and sub-optimality of pca i: Spiked random matrix models.
\newblock {\em The Annals of Statistics}, 46(5):2416--2451, 2018.

\bibitem{crisanti1992sphericalp}
Andrea Crisanti and H-J Sommers.
\newblock The sphericalp-spin interaction spin glass model: the statics.
\newblock {\em Zeitschrift f{\"u}r Physik B Condensed Matter}, 87(3):341--354,
  1992.

\bibitem{cavagna1998stationary}
Andrea Cavagna, Irene Giardina, and Giorgio Parisi.
\newblock Stationary points of the thouless-anderson-palmer free energy.
\newblock {\em Physical Review B}, 57(18):11251, 1998.

\bibitem{cavagna1999quenched}
Andrea Cavagna, Juan~P Garrahan, and Irene Giardina.
\newblock Quenched complexity of the mean-field p-spin spherical model with
  external magnetic field.
\newblock {\em Journal of Physics A: Mathematical and General}, 32(5):711,
  1999.

\bibitem{auffinger2013random}
Antonio Auffinger, G{\'e}rard~Ben Arous, and Ji{\v{r}}{\'\i} {\v{C}}ern{\`y}.
\newblock Random matrices and complexity of spin glasses.
\newblock {\em Communications on Pure and Applied Mathematics}, 66(2):165--201,
  2013.

\bibitem{cavagna1997structure}
Andrea Cavagna, Irene Giardina, and Giorgio Parisi.
\newblock Structure of metastable states in spin glasses by means of a three
  replica potential.
\newblock {\em Journal of Physics A: Mathematical and General}, 30(13):4449,
  1997.

\bibitem{bun2018optimal}
Jo{\"e}l Bun and A~Knowles.
\newblock An optimal rotational invariant estimator for general covariance
  matrices: The outliers.
\newblock {\em Preprint}, 2018.

\bibitem{bloemendal2016principal}
Alex Bloemendal, Antti Knowles, Horng-Tzer Yau, and Jun Yin.
\newblock On the principal components of sample covariance matrices.
\newblock {\em Probability theory and related fields}, 164(1-2):459--552, 2016.

\bibitem{Meh2004}
Madan~Lal Mehta.
\newblock {\em Random Matrices}.
\newblock 3rd edition, 2004.

\end{thebibliography}

\newpage

\onecolumngrid
\appendix

\vspace{1 cm}
\section{spectral properties of the perturbed random matrix ensemble}\label{app:Density}
In this Appendix we discuss the spectral properties of random  matrices with the statistics given in \eqref{eq:MatrixForm}. In particular, we derive their averaged spectral measure and determine the conditions on our parameters (i.e. $\sigma,\Delta, \mu$) that guarantee the existence of {isolated} eigenvalue(s). {We remark that in the present section we are interested in properties of a single matrix of the form  \eqref{eq:MatrixForm}, averaged over the ensemble. More precisely, this means that we consider an $N\times N$ random matrix of the form
\begin{align*}
    \mathbf{M}=
    \begin{pmatrix}
     & && & m_{1\,N}\\
     &  &{\bf B}  & & \vdots\\
      & & & & m_{N-1\,N}\\
     m_{1\,N} && \ldots & m_{N-1\,N} &m_{N\,N}
    \end{pmatrix}
\end{align*}
with ${\bf B}$ an $N-1\times N-1$ GOE matrix of variance $\sigma^2$, and $m_{i,N}\sim \mathcal{N}(0,\Delta^2/N)$, for $i=1,\ldots,N-1$, with $m_{NN}\sim\mathcal{N}(\mu,v^2/N)$. Given that the quantities we discuss in the following are independent of the particular choice of $v$ (to the order in $N$ that we are interested in), we set $v=0$ from the start.}

\subsection{Perturbed matrix ensemble: Stieltjes transform}
\label{app:stieltjes}
Given a matrix with spectral measure $\nu(u)$ (which may contain a continuous and a discrete part), we denote with
 $\mathfrak{g}(z)$ the Stieltjes transform: 
 \begin{equation}\label{eq:Sti}
\mathfrak{g}(z)=\int\frac{d \nu(u)}{z-u}.
\end{equation}
This function has singularities on the real line at points belonging to the spectrum of the matrix: a branch cut in correspondence to the continuous part of the eigenvalue density, and poles wherever isolated eigenvalues exist. For GOE matrices ${\bf M}_{\rm GOE}$ with variance $\sigma$, the transform \eqref{eq:Sti} depends only on $\sigma$ and reads \cite{Meh2004}:
 \begin{equation}\label{eq:StjltiesGOE}
     \mathfrak{g}_\sigma(z)= \frac{z- \text{sign}(\Re z)\sqrt{z^2 - 4 \sigma^2}}{2 \sigma}, \quad \quad z \notin [-2 \sigma, 2 \sigma]
 \end{equation}
 where the sign in front of the square root is chosen to guarantee that  $ \mathfrak{g}_\sigma(z) \to 0$ when $|z| \to \infty$. This formula coincides with \eqref{eq.Stjlt} when $z$ is taken to be real. In order to find the average spectral measure $\nu_N(x)$ of the $N \times N$ matrix $\mathbf{M}$, we exploit the inversion of \eqref{eq:Sti}:
\begin{align}\label{eq:InvStj}
    \nu_N(x)=\frac{1}{\pi}\lim_{\eta\to0^+}\text{Im}\left[\mathfrak{g}(x-i\eta)\right],
\end{align}

where we can write \eqref{eq:Sti} also as 
\begin{align}
\label{app:sokholski}
    \mathfrak{g}(z)=\frac{1}{N}\mathbb{E}\left[\Tr(z-\mathbf{M})^{-1}\right]=\frac{1}{N}\mathbb{E}\left[\sum_{i=1}^{N-1}(z-\mathbf{M})^{-1}_{ii}+(z-\mathbf{M})^{-1}_{NN}\right].
\end{align}
Given the block structure of the matrix ${\bf M}$, the components of the resolvent matrix can be obtained making use of the matrix inversion lemma:
\begin{align}
\begin{bmatrix}
    \mathbf{A} & \mathbf{B}\\
    \mathbf{C} & \mathbf{D}
    \end{bmatrix}^{-1}=
    \begin{bmatrix}
    (\mathbf{A}-\mathbf{B}\mathbf{D}^{-1}\mathbf{C}) ^{-1} & -(\mathbf{A}-\mathbf{B}\mathbf{D}^{-1}\mathbf{C})^{-1}\mathbf{B}\mathbf{D}^{-1}\\
    - \mathbf{D}^{-1}\mathbf{C}(\mathbf{A}-\mathbf{B}\mathbf{D}^{-1}\mathbf{C})^{-1}\quad\quad & \mathbf{D}^{-1} + \mathbf{D}^{-1}\mathbf{C}(\mathbf{A}-\mathbf{B}\mathbf{D}^{-1}\mathbf{C})^{-1}\mathbf{B}\mathbf{D}^{-1}
    \end{bmatrix}.
\end{align} 
By using the formulas \eqref{eq:Block1}, \eqref{eq:Block2} in the main text and by averaging first over the components $m_{iN}$ for $i <N$ (see also the subsequent appendices), we obtain 
\begin{equation}\label{eq:WeC}
\mathbb{E} \quadre{\sum_{i=1}^{N-1}\tonde{z-\mathbf{M}}^{-1}_{ii}}=\mathbb{E} \quadre{\sum_{i=1}^{N-1}\tonde{z-\mathbf{H}-\mathbf{W}}^{-1}_{ii}
+ \frac{1}{N} \sum_{i=1}^{N-1}
(z-\mathbf{H}-\mathbf{W})^{-2}_{ii} \frac{\Delta^2}{z-\mu -\Delta^2 \frac{1}{N} \text{Tr} \tonde{\frac{1}{z-\mathbf{H}-\mathbf{W}}} }}+ \mathcal{O}\tonde{\frac{1}{N^2}}.
\end{equation}
At first, notice that to leading order in $N$, we have:
\begin{equation}
\frac{1}{N}\mathbb{E} \quadre{ \frac{1}{N} \sum_{i=1}^{N-1}
(z-\mathbf{H}-\mathbf{W})^{-2}_{ii} \frac{\Delta^2}{z-\mu -\Delta^2 \frac{1}{N} \text{Tr} \tonde{\frac{1}{z-\mathbf{H}-\mathbf{W}}} }}=
- \frac{1}{N} \frac{\Delta^2}{z-\mu -\Delta^2 \mathfrak{g}_\sigma(z)} \partial_z \mathfrak{g}_\sigma(z)+ \mathcal{O}\tonde{\frac{1}{N^2}},
\end{equation}
where $\mathfrak{g}_\sigma(z)$ is the transform of the GOE matrix $\mathbf{H}+\mathbf{W}$, and we made use of the identity:
\begin{equation}
\frac{1}{N}\mathbb{E} \quadre{\sum_{i=1}^{N-1}
(z-\mathbf{H}-\mathbf{W})^{-2}_{ii} }=-\partial_z \frac{1}{N}\mathbb{E} \quadre{\sum_{i=1}^{N-1}
(z-\mathbf{H}-\mathbf{W})^{-1}_{ii} }= - \partial_z \mathfrak{g}_\sigma(z)+  \mathcal{O}\tonde{\frac{1}{N}}.
\end{equation}

We now focus on the first term in the right-hand side of \eqref{eq:WeC}. To leading order, this term is $\mathfrak{g}_\sigma(z)$. Two types of $1/N$ corrections contribute to this term: one coming from the fact that the sum is only over $N-1$ (and not $N$) matrix elements, and one coming from the $1/N$ corrections of GOE resolvents already determined in \cite{VERBAARSCHOT1984367}. In order to distinguish between these terms, we multiply the first set of corrections by a factor $u$ and eventually take $u \to 1$ at the end of the calculation. Adapting the derivation of \cite{VERBAARSCHOT1984367} to the perturbed case we obtain:
\begin{equation}\label{eq:SingleRes}
\frac{1}{N}\mathbb{E} \quadre{\sum_{i=1}^{N-1}
(z-\mathbf{H}-\mathbf{W})^{-1}_{ii} }= \mathfrak{g}_\sigma(z)+ \frac{1}{N} \quadre{\frac{z- \sqrt{z^2-4 \sigma^2}}{2[z^2- 4 \sigma^2]} - \frac{u \sigma^2 \mathfrak{g}_\sigma^3(z) }{1- \sigma^2 \mathfrak{g}^2_\sigma(z)}   - u \mathfrak{g}_\sigma(z)}+\mathcal{O}\tonde{\frac{1}{N^2}}.
\end{equation}
In here, the first contribution to the $1/N$ corrections is the one determined in \cite{VERBAARSCHOT1984367}; the first term proportional to $u$ arises from the fact that we are considering matrices of size $N-1$ with variances normalized by a factor $N$, while the second term proportional to $u$ is due to the fact that we are normalizing by $N$ the sum over $N-1$ components. Proceeding as above we also find:
\begin{equation}
\frac{1}{N}\mathbb{E}\left[(z-\mathbf{M})^{-1}_{NN}\right]=\frac{1}{N}\frac{\Delta^2}{z-\mu-\Delta^2\mathfrak{g}_{\sigma}(z)}\frac{\mathfrak{g}_\sigma(z)}{z-\mu}+  \mathcal{O}\tonde{\frac{1}{N^2}}, 
\end{equation}
and combining everything we finally obtain:
\begin{align}\label{eq:FullG}
    \mathfrak{g}(z)=\mathfrak{g}_\sigma(z)+\frac{1}{N} \quadre{\frac{z- \sqrt{z^2-4 \sigma^2}}{2[z^2- 4 \sigma^2]}- \frac{u}{\sqrt{z^2- 4 \sigma^2}}}+\frac{1}{N}\frac{\Delta^2}{z-\mu-\Delta^2\mathfrak{g}_{\sigma}(z)}\left[\frac{\mathfrak{g}_\sigma(z)}{z-\mu}-\partial_z\mathfrak{g}_{\sigma}(z)\right]+ \mathcal{O} \tonde{\frac{1}{N^2}}.
\end{align}
The spectral measure can then be obtained using \eqref{eq:InvStj}. To leading order, one recovers the GOE density \eqref{eq:DoSgoe}. The first contribution to the $1/N$ correction, denoted with \eqref{eq:Rho1} in the main text, is obtained from:
\begin{equation}
\rho^{(1)}_{\sigma}(x)=\frac{1}{\pi}\lim_{\eta \to 0^+}\text{Im } \left[\frac{z- \sqrt{z^2-4 \sigma^2}}{2[z^2- 4 \sigma^2]}- \frac{u}{\sqrt{z^2- 4 \sigma^2}}\right]_{z=x-i \eta}
\end{equation}
with $u=1$ (instead, setting $u=0$ would amount to consider matrices of size $N\times N$) . The term within brackets exhibits a branch cut in the region $z \in [-2 \sigma, 2 \sigma]$, and two poles at the boundaries of the interval.  Thus, this term gives rise to $1/N$ corrections to the continuous eigenvalue density plus two delta peaks at the boundaries.  The second set of $1/N$ corrections arises from the second term in \eqref{eq:FullG}, which exhibits poles at the solutions of the equation \eqref{egval_eqn}, which we recall reads $z-\mu-\Delta^2\mathfrak{g}_{\sigma}(z)=0$. 
The real solutions $\lambda_{\rm iso, \pm}$ of this equation, whenever they exist, are the isolated eigenvalues of the matrix. We discuss extensively the conditions for their existence in the following subsection.

\subsection{Phenomenology of the isolated eigenvalue(s)}
\label{app:phenom_isolated_eigenvals}
Isolated eigenvalues are real solutions $z \to \lambda$ of \eqref{egval_eqn}. By using \eqref{eq:StjltiesGOE} we can easily rewrite the equation as
\begin{align}
\label{app:sign_condition}
\lambda\left(1-\frac{\Delta^2}{2\sigma^2}\right)-\mu=-\text{sign}(\lambda)\frac{\Delta^2}{2\sigma^2}\sqrt{\lambda^2-4\sigma^2}
\end{align}
from which we can take the square on both sides, keeping in mind that the equation to be satisfied by our final solution is \eqref{app:sign_condition}, with the proper sign on the right-hand side. Taking the square we get
\begin{align*}
\lambda^2\left(1-\frac{\Delta^2}{\sigma^2}\right)-2\mu\left(1-\frac{\Delta^2}{2\sigma^2}\right)\lambda+\mu^2+\frac{\Delta^4}{\sigma^2}=0
\end{align*}
which gives us the two solutions
\begin{align}
\label{app:solutions}
\lambda_{\rm iso,\pm}(\mu, \Delta, \sigma)= \frac{{2 \mu \sigma^2- \Delta^2 \mu\pm \text{sign}(\mu) \Delta^2 \sqrt{\mu^2-4 (\sigma^2- \Delta^2)}}}{2 (\sigma^2-\Delta^2)}.
\end{align}
From this expression we get that the condition 
\begin{align}
\label{app:determinant_condition}
\mu^2-4(\sigma^2-\Delta^2)\geq 0
\end{align}
must be satisfied for the solutions to exist on the real line. We now discuss several cases for the parameters.\\

\underline{ \bf Case $\Delta= \sigma$. }

This case has been investigated already in the early work \cite{edwards1976eigenvalue}. It corresponds to a GOE matrix perturbed by a rank-1 additive term. The solution $\lambda_{\rm iso, +}$ diverges in this limit, while the solution $\lambda_{\rm iso, -}$ converges to Eq. \eqref{eq:IsoPCA}, that is
\begin{align*}
    \lambda_{\rm iso}= \lambda_{\rm iso,-}=\mu+\frac{\sigma^2}{\mu}.
\end{align*}
 Both conditions \eqref{app:sign_condition}, \eqref{app:determinant_condition} are satisfied provided that $|\mu|\geq \sigma$. The (only) isolated eigenvalue thus exists (i.e. it is bigger than $2\sigma$ in absolute value) for any $|\mu|>\sigma$.\\

\underline{ \bf Case $\Delta< \sigma$. }

This case has been already discussed in \cite{ros2019complexity,ros2020distribution}, and here we re-derive those results.
The isolated eigenvalues exists whenever at least one among $\lambda_{\rm iso, \pm}$ is bigger than $2\sigma$ in absolute value and the conditions \eqref{app:sign_condition}, \eqref{app:determinant_condition} are both satisfied. We notice that in this setting, if they exist, the eigenvalues satisfy $\text{sign}(\lambda_{\rm iso, \pm})=\text{sign}(\mu)$. In order to study these existence conditions, we plug the expressions for \eqref{app:solutions} inside equation \eqref{app:sign_condition} and find:
\begin{align*}
    \text{sign}(\mu)\text{sign}\left[\left(2\sigma^2-\Delta^2\pm\Delta^2\sqrt{1-\frac{4(\sigma^2-\Delta^2)}{\mu^2}}\right)\frac{2\sigma^2-\Delta^2}{4\sigma^2(\sigma^2-\Delta^2)}-1
    \right]=-\text{sign}(\mu)
\end{align*}
where we are assuming that the eigenvalues are indeed isolated (to be verified a posteriori). This expression is equivalent to 
\begin{align*}
    &\left(2\sigma^2-\Delta^2\pm\Delta^2\sqrt{1-\frac{4(\sigma^2-\Delta^2)}{\mu^2}}\right)\frac{2\sigma^2-\Delta^2}{4\sigma^2(\sigma^2-\Delta^2)}-1\leq 0 \quad \Leftrightarrow \quad\Delta^2\pm(2\sigma^2-\Delta^2)\sqrt{1-\frac{4(\sigma^2-\Delta^2)}{\mu^2}}\leq 0
\end{align*}
from which it is clear that the only acceptable isolated eigenvalue in this setting is $\lambda_{\rm iso, -}$, since otherwise we would have that the sum of two positive quantities is smaller or equal than 0. By studying the second degree equation $\lambda_{\rm iso, -}(\mu, \Delta,\sigma )>2\sigma$ we find exactly that it is verified provided that Eq.~\eqref{eq:ConditionEx} in the main text holds, which is the condition of existence of the isolated eigenvalue.
Under this condition it is straightforward to see that \eqref{app:determinant_condition} is automatically verified.  Hence in this setting there exists only one isolated eigenvalue, whose explicit expression is precisely Eq.~\eqref{eq:IsoExplicit} in the main text.
This eigenvalue appears as soon as Eq.~\eqref{eq:ConditionEx} is satisfied. For $\mu>0$, this eigenvalue is the maximal eigenvalue of the random matrix, while for $\mu<0$ it is the minimal.\\

\underline{ \bf Case  $ \sigma<\Delta$. }

This case is richer, and to the best of our knowledge was not discussed in previous literature. We notice that in this case whenever they exist, then the isolated eigenvalues satisfy $\text{sign}(\lambda_{\rm iso, \pm})=\mp\text{sign}(\mu)$. Moreover the condition \eqref{app:determinant_condition} is always verified in this setting. By plugging \eqref{app:solutions} into \eqref{app:sign_condition} we obtain 
\begin{align*}
    \text{sign}(\mu)\text{sign}\left(\frac{\pm\Delta^2\sqrt{1+\frac{4(\Delta^2-\sigma^2)}{\mu^2}}-\Delta^2+2\sigma^2}{2(\Delta^2-\sigma^2)}\frac{2\sigma^2-\Delta^2}{2\sigma^2}+1\right)=-\text{sign}(\mu)\text{sign}\left(\frac{\pm\Delta^2\sqrt{1+\frac{4(\Delta^2-\sigma^2)}{\mu^2}}-\Delta^2+2\sigma^2}{2(\Delta^2-\sigma^2)}\right)
\end{align*}
which gives us the condition
\begin{align*}
    \pm\left(\frac{\pm\Delta^2\sqrt{1+\frac{4(\Delta^2-\sigma^2)}{\mu^2}}-\Delta^2+2\sigma^2}{2(\Delta^2-\sigma^2)}\frac{2\sigma^2-\Delta^2}{2\sigma^2}+1\right)\leq 0
\end{align*}
In the case in which we choose the sign $-$, this inequality becomes 
\begin{align*}
    \Delta^2\geq (2\sigma^2-\Delta^2)\sqrt{1+\frac{4(\Delta^2-\sigma^2)}{\mu^2}}
\end{align*}
from which we deduce that it is always verified when $\Delta\geq\sqrt{2}\sigma$ and it is verified only for $|\mu|\geq 2\sigma-\Delta^2/\sigma$ when $\Delta<\sqrt{2}\sigma$.The combination of these conditions leads to $|\mu|\geq 2\sigma-\Delta^2/\sigma$ (in the first case $2\sigma-\Delta^2/\sigma$ becomes negative and therefore any $\mu$ will satisfy the condition).
In the case in which we choose the sign $+$, the inequality becomes 
\begin{align*}
    0\geq \Delta^2+ (2\sigma^2-\Delta^2)\sqrt{1+\frac{4(\Delta^2-\sigma^2)}{\mu^2}}
\end{align*}
which is never true for $\Delta\leq\sqrt{2}\sigma$ and becomes true for $\Delta>\sqrt{2}\sigma$ as long as $|\mu|\leq \Delta^2/\sigma-2\sigma$. It remains to verify that both of these isolated eigenvalues are bigger than $2\sigma$ in their domain of existence. By plugging the expression for $\lambda_{\rm iso, \pm}$ as in \eqref{app:solutions}, it is straightforward to verify that both of these isolated eigenvalues are outside of the bulk if we take strict inequalities in the existence conditions that we have just found. \\\\
\noindent Henceforth, we can resume our results as follows: $\lambda_{\rm iso, -}$ is an isolated eigenvalue provided that $|\mu|>2\sigma-\Delta^2/\sigma$; $\lambda_{\rm iso, +}$ is also an isolated eigenvalue provided that $\Delta>\sqrt{2}\sigma$ and $|\mu|<\Delta^2/\sigma-2\sigma$. In particular, notice that whenever $\lambda_{\text{iso},+}$ exists, then also $\lambda_{\text{iso},-}$ exists.

\subsection{The averaged spectral measure}
Combining the results of the previous sub-section with \eqref{eq:FullG} and \eqref{eq:InvStj}, we finally obtain that 
\begin{align}
    &\nu_N(x)=\rho_{\sigma}(x)+\frac{1}{N}\rho^{(1)}_{\sigma}(x)\\
   &+\frac{1}{N}\Theta \tonde{|\mu|-2\sigma+\frac{\Delta^2}{\sigma}}\delta(x-\lambda_{\rm iso,-})\mathfrak{q}_{\sigma, \Delta}(\lambda_{\rm iso,-}, \mu)\left[\frac{\Delta^2 \,\mathfrak{g}_{\sigma}(\lambda_{\rm iso, -})}{\lambda_{\rm iso,-}-\mu}-\frac{1-\lambda_{\rm iso,-}-\frac{\text{sign}(\lambda_{\rm iso,-})}{\sqrt{\lambda_{\rm iso,-}^2-4\sigma^2}}}{2\sigma^2 \Delta^{-2}}\right]\\
   &-
    \frac{1}{N}\Theta(\Delta-\sqrt{2}\sigma)\Theta\tonde{-|\mu|-2\sigma+\frac{\Delta^2}{\sigma}}\delta(x-\lambda_{\rm iso,+})\mathfrak{q}_{\sigma, \Delta}(\lambda_{\rm iso,+}, \mu)\left[\frac{\Delta^2\,\mathfrak{g}_{\sigma}(\lambda_{\rm iso,+})}{\lambda_{\rm iso,+}-\mu}-\frac{1-\lambda_{\rm iso,+}\frac{\text{sign}(\lambda_{\rm iso,+})}{\sqrt{\lambda_+^2-4\sigma^2}}}{2\sigma^2 \Delta^{-2}}\right]\\
    &+\mathcal{O} \tonde{\frac{1}{N^2}},
\end{align}
where $\mathfrak{q}_{\sigma, \Delta}(z, \mu)$ is given explicitly in \eqref{eq:Defq}, and it is obtained as
\begin{equation}
\lim_{z \to \lambda_{\rm iso,\pm}}\frac{1}{\pi} \text{Im} \frac{1}{z-\mu-\Delta^2\mathfrak{g}_{\sigma}(z)}= \delta(z- \lambda_{\rm iso, \pm}) \mathfrak{q}_{\sigma, \Delta}(z, \mu).
\end{equation}
The identities 
\begin{equation}
\frac{\Delta^2\,\mathfrak{g}_{\sigma}(\lambda_{\rm iso, \pm})}{\lambda_{\rm iso,\pm}-\mu}=1= \mp \mathfrak{q}_{\sigma, \Delta}(\lambda_{\rm iso,\pm})\tonde{\frac{\Delta^2 \,\mathfrak{g}_{\sigma}(\lambda_{\rm iso, \pm})}{\lambda_{\rm iso,\pm}-\mu}-\frac{1-\lambda_{\rm iso, \pm}\frac{\text{sign}(\lambda_{\rm iso, \pm})}{\sqrt{\lambda_{\rm iso,\pm}^2-4\sigma^2}}}{2\sigma^2 \Delta^{-2}}}
\end{equation}
allow us to simplify the average spectral measure, and to recover
 Eq.~\eqref{eq:Dos} in the main text.

\vspace{1 cm}

\section{Expectation of products of resolvents: deterministic limit and finite-size corrections}
\label{app:ExpectationProductResolvents}

In this Appendix we discuss the behaviour of the averaged  matrices  ${\bf \Pi}_{k,m}=\mathbb{E}\left[ \mathbf{G}_0(z)^{k+1}\mathbf{G}_1(\xi)^{m+1}\right]$ introduced in \eqref{main:multi_resolv}. Here the average is taken over the three $(N-1) \times (N-1)$ GOE matrices $\mathbf{H}, \mathbf{W}^{(0)}, \mathbf{W}^{(1)}$. We set $M=N-1$. We determine both the $N \to \infty$ limit of ${\bf \Pi}_{k,m}$, as well as the finite-size corrections to order $1/N$. These finite size corrections can be used to compute the corrections to the bulk-bulk overlap $\Phi\left(\lambda^0,\lambda^1\right)$, see \eqref{eq:CorrBulk}. As recalled in the main text, the $1/N$ corrections to this product have two different types of contributions. One type is generated by the fact that 
the matrices appearing in this formulas have size $M=N-1$, but their variance is normalized to $N$. Another type corresponds to the $1/N$ corrections that would be present also in the case of $N$-dimensional matrices. To distinguish between the different terms, we multiply the first ones by a constant $u$, and set $u \to 1$ at the end of the calculation. By taking $u \to 0$, we can recover the corrections to the standard case of $N$-dimensional GOE matrices, with no perturbations.  We begin by reducing the problem to the case $k=0=m$.\\

\subsection{\bf From higher-order products to products of pairs. } 
\label{app:From higher-order products to products of pairs.}
Let us prove here the formula \eqref{eq:multi_resolv_formula0}. We introduce two infinitesimal parameters $\epsilon, \gamma$ and write:
\begin{align*}
 {\bf \Pi}_{k,m}=   \mathbb{E}\left[ \mathbf{G}_0(z)^{k+1}\mathbf{G}_1(\xi)^{m+1}\right] = \lim_{\epsilon,\gamma\to 0}\mathbb{E}\left[ \mathbf{G}_0(z)\cdots \mathbf{G}_0(z+k\gamma)\mathbf{G}_1(\xi)\cdots \mathbf{G}_1(\xi+m\epsilon)\right].
\end{align*}
We aim at re-writing this product as a sum of single resolvent matrices. To do this, we make use of:
\begin{theorem}
\label{lemma_prod}
If ${\bf M}$ is a symmetric real matrix and we denote ${\bf A}_j:=(j\epsilon+{\bf M})^{-1}$ for $j\in\mathbb{Z}, \epsilon\in\mathbb{R}$ (such that $j\epsilon$ is not an eigenvalue of ${\bf M}$), then for any  $k\in\mathbb{N}_{\geq 1}$:
\begin{align*}
   {\bf  A}_0\cdots {\bf A}_k = \frac{1}{\epsilon^kk!}\sum_{j=0}^k(-1)^j\binom{k}{j}{\bf A}_j.
\end{align*}
\end{theorem}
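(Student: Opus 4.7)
The plan is to reduce the matrix identity to a scalar partial-fractions statement and then verify it directly by comparing residues.

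The first observation is that the matrices $\{{\bf A}_j\}_{j=0}^{k}$ commute pairwise, since each is a rational function of the single real symmetric matrix ${\bf M}$. Diagonalizing ${\bf M}$ in an orthonormal eigenbasis simultaneously diagonalizes all the ${\bf A}_j$, so it suffices to check, for every eigenvalue $\lambda$ of ${\bf M}$, the scalar identity
\[
\prod_{j=0}^{k}\frac{1}{j\epsilon+\lambda}=\frac{1}{\epsilon^{k}\,k!}\sum_{j=0}^{k}(-1)^{j}\binom{k}{j}\frac{1}{j\epsilon+\lambda}.
\]

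To prove this scalar identity, I would argue that both sides are rational functions of $\lambda$ with at most simple poles at $\lambda=-j\epsilon$ for $j=0,\ldots,k$ and vanishing as $|\lambda|\to\infty$, so that they are equal provided their residues at these poles agree. The residue of the LHS at $\lambda=-j\epsilon$ is
\[
\prod_{\substack{i=0\\ i\neq j}}^{k}\frac{1}{(i-j)\epsilon}=\frac{1}{\epsilon^{k}\,(-1)^{j}\,j!\,(k-j)!}=\frac{(-1)^{j}}{\epsilon^{k}\,k!}\binom{k}{j},
\]
which matches exactly the coefficient of the $j$-th term of the RHS.

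The one subtlety worth spelling out is that the RHS genuinely decays faster than $\lambda^{-1}$ at infinity, rather than merely decaying like $\lambda^{-1}$: expanding each summand as a geometric series in $\lambda^{-1}$, this reduces to the classical identities $\sum_{j=0}^{k}(-1)^{j}\binom{k}{j}j^{m}=0$ for $0\leq m\leq k-1$, obtained by differentiating $(1-x)^{k}$ repeatedly and setting $x=1$. Once this is in place, the proof is just the residue matching done above, so no real obstacle arises. An equally clean alternative, avoiding the large-$\lambda$ step altogether, would be to induct on $k$, starting from the two-term relation ${\bf A}_i{\bf A}_j=\bigl((j-i)\epsilon\bigr)^{-1}({\bf A}_i-{\bf A}_j)$ for $i\neq j$, which itself follows immediately from ${\bf A}_i-{\bf A}_j={\bf A}_i\bigl[(j\epsilon+{\bf M})-(i\epsilon+{\bf M})\bigr]{\bf A}_j=(j-i)\epsilon\,{\bf A}_i{\bf A}_j$; in that route Pascal's rule handles the combinatorial bookkeeping needed to recombine the coefficients into $\binom{k}{j}$ at each inductive step.
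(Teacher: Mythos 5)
Your proof is correct, but it takes a genuinely different route from the paper. The paper proves the identity by induction on $k$: the base case is the resolvent identity ${\bf A}_0{\bf A}_1=\epsilon^{-1}({\bf A}_0-{\bf A}_1)$, and the inductive step multiplies the $k$-term formula by ${\bf A}_{k+1}$, applies ${\bf A}_j{\bf A}_{k+1}=\bigl((k+1-j)\epsilon\bigr)^{-1}({\bf A}_j-{\bf A}_{k+1})$ termwise, and recombines the coefficients via $\binom{k}{j}\frac{1}{k+1-j}=\frac{1}{k+1}\binom{k+1}{j}$ together with the vanishing of the alternating binomial sum $\sum_{j=0}^{k+1}(-1)^j\binom{k+1}{j}=0$ to absorb the ${\bf A}_{k+1}$ terms. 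This is essentially the ``equally clean alternative'' you sketch in your last sentence. Your primary argument instead diagonalizes ${\bf M}$ (legitimate, since all the ${\bf A}_j$ are rational functions of the single symmetric matrix ${\bf M}$), reduces to a scalar partial-fraction identity, and verifies it by matching residues at the simple poles $\lambda=-j\epsilon$; your residue computation $\prod_{i\neq j}((i-j)\epsilon)^{-1}=(-1)^j\binom{k}{j}/(\epsilon^k k!)$ is right. This buys a one-shot, non-inductive verification and makes transparent where the binomial coefficients come from, at the cost of invoking uniqueness of partial-fraction decompositions rather than pure algebra. One remark: the ``subtlety'' you flag about the right-hand side decaying faster than $\lambda^{-1}$ is not actually needed. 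Once the residues agree, the difference of the two sides is a rational function with no poles, hence a polynomial, and since it tends to zero at infinity it vanishes identically; the faster decay of the right-hand side (equivalently, the identities $\sum_j(-1)^j\binom{k}{j}j^m=0$ for $m<k$) is then a consequence rather than a prerequisite. Finally, both your statement and the paper's implicitly require $\epsilon\neq 0$ and, more precisely, that $-j\epsilon$ is not an eigenvalue of ${\bf M}$; this is a cosmetic issue shared with the original.
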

\begin{proof}

We proceed by induction. Indeed notice that for $k=1$ we have ${\bf A}_0{\bf A}_1=(\mathbf{M})^{-1}(\epsilon+\mathbf{M})^{-1}=\frac{1}{\epsilon}((\mathbf{M})^{-1}-(\epsilon+\mathbf{M})^{-1})=\frac{1}{\epsilon}\mathbf{A}_0-\frac{1}{\epsilon}\mathbf{A}_1$. Now suppose that our Lemma is true for a certain $k$, we will prove that it works also for $k+1$. Let us write:
\begin{align*}
    &\mathbf{A}_0\cdots \mathbf{A}_k\mathbf{A}_{k+1}= \frac{1}{\epsilon^kk!}\sum_{j=0}^k(-1)^j\binom{k}{j}\mathbf{A}_j\mathbf{A}
_{k+1}=\frac{1}{\epsilon^{k+1}k!}\sum_{j=0}^k(-1)^j\binom{k}{j}\frac{1}{(k+1-j)}(\mathbf{A}_j-\mathbf{A}_{k+1})\\
    &=\frac{1}{\epsilon^{k+1}(k+1)!}\left[\sum_{j=0}^k(-1)^j\binom{k+1}{j}\mathbf{A}_j-\sum_{j=0}^k(-1)^{j}\binom{k+1}{j}\mathbf{A}_{k+1}\right]=\frac{1}{\epsilon^{k+1}(k+1)!}\sum_{j=0}^{k+1}(-1)^j\binom{k+1}{j}\mathbf{A}_j
\end{align*}
where in the last equality we used that the identity $0=(1-1)^{k+1}=\sum_{j=0}^{k+1}(-1)^j\binom{k+1}{j}$ implies that $\sum_{j=0}^k(-1)^j\binom{k+1}{j}=-(-1)^{k+1}\binom{k+1}{k+1}=-(-1)^{k+1}$. Hence the induction hypothesis is proved.
\end{proof}

Applying this Lemma, we see that the expectation ${\bf \Pi}_{k,m}=\mathbb{E}\left[ \mathbf{G}_0(z)^{k+1}\mathbf{G}_1(\xi)^{m+1}\right]$ can be written as a linear combination of terms of the form $\mathbb{E}\left[ \mathbf{G}_0(z + i \gamma)\mathbf{G}_1(\xi+ j \epsilon)\right]$ for integer $i,j$. For instance, for $k=0$ it holds:
\begin{align*}
    \lim_{\epsilon\to 0}\mathbb{E}\left[\mathbf{G}_0(z)\mathbf{G}_1(\xi)\cdots \mathbf{G}_1(\xi+m\epsilon)\right]&=\lim_{\epsilon\to 0}\frac{(-1)^m}{\epsilon^mm!}\sum_{j=0}^m(-1)^{m-j}\binom{m}{j}\mathbb{E}_\mathbf{H}\mathbb{E}_
    {\mathbf{W}^{(0)},\mathbf{W}^{(1)}}\left[\mathbf{G}_0(z)\mathbf{G}_1(\xi+j\epsilon)\right]\\
    &=\frac{(-1)^m}{m!}\frac{\partial^m}{\partial\xi^m}\mathbb{E}[\mathbf{G}_0(z)\mathbf{G}_1(\xi)]
\end{align*}
From this expressions, for $\epsilon \to 0$ one recovers an $m-$th order derivative. The same holds for $k>0$. Therefore, we finally get Eq.~\eqref{eq:multi_resolv_formula0}.
Hence we see that to determine ${\bf \Pi}_{k,m}$ one can focus on the behavior of the product ${\bf \Pi}_{0,0}=\mathbb{E}\left[ \mathbf{G}_0(z) \, \mathbf{G}_1(\xi)\right]$. A precise analysis of ${\bf \Pi}_{0,0}$ and its finite size corrections is carried out in the subsections below.

\subsection{\bf Partial expectation over ${\bf W}^{(a)}$. }

Let us study 
$\mathbb{E}\left[\mathbf{G}_0(z)\mathbf{G}_1(\xi)\right]$. Consider first its expectation over ${\bf W}^{(a)}$ for both $a=0$ and $a=1$, which are independent random matrices. For GOE matrices of size $N$, this expectation has been determined explicitly in  \cite{VERBAARSCHOT1984367} to order $1/N$. Adapting those results to the present case, we find that the expectation value of a single resolvent over ${\bf W}^{(a)}$ admits the expansion: 
\begin{equation}\label{eq:BE}
\begin{split}
   & \mathbb{E}_{{\bf W}^{(a)}}[ {\bf G}_a(z)]={\bf G}_a^{(0)}(z)+\frac{1}{N}{\bf G}_a^{(1)}(z)+ \mathcal{O}\tonde{\frac{1}{N^2}}
    \end{split}
\end{equation}
where ${\bf G}_a^{(0)}(z)$ satisfies the self-consistent equation
\begin{equation}\label{eq:SelfG0app}
\begin{split}
   {\bf G}_a^{(0)}(z)=\left(z-\frac{\sigma_W^2}{M} \text{Tr} {\bf G}_a^{(0)}(z)-{\bf H}\right)^{-1}
    \end{split}
\end{equation}
while 
\begin{equation}\label{eq:G1}
\begin{split}
    {\bf G}_a^{(1)}(z)&=
   \frac{\sigma_W^2 \mathbf{R}^3_{\mathbf{H}}
    \tonde{z- \frac{\sigma_W^2}{M} \text{Tr} {\bf G}_a^{(0)}} }{1-\frac{\sigma_W^2}{M}\Tr\mathbf{R}^2_{\mathbf{H}}\tonde{z-\frac{\sigma_W^2}{M} \text{Tr} {\bf G}_a^{(0)}} }+\frac{\frac{\sigma_W^4}{M}\Tr\mathbf{R}^3_{\mathbf{H}}(z-\frac{\sigma_W^2}{M} \text{Tr} {\bf G}_a^{(0)})}{[1-\frac{\sigma_W^2}{M}\Tr\mathbf{R}^2_{\mathbf{H}}\tonde{z-\frac{\sigma_W^2}{M} \text{Tr} {\bf G}_a^{(0)}} ]^2} \;\mathbf{R}^2_{\mathbf{H}}\tonde{z-\frac{\sigma_W^2}{M} \text{Tr} {\bf G}_a^{(0)}}\\
    &-u \frac{\frac{\sigma_W^2}{M}\Tr\mathbf{R}_{\mathbf{H}}(z-\frac{\sigma_W^2}{M} \text{Tr} {\bf G}_a^{(0)})}{1-\frac{\sigma_W^2}{M}\Tr\mathbf{R}^2_{\mathbf{H}}\tonde{z-\frac{\sigma_W^2}{M} \text{Tr} {\bf G}_a^{(0)}} } \;\mathbf{R}^2_{\mathbf{H}}\tonde{z-\frac{\sigma_W^2}{M} \text{Tr} {\bf G}_a^{(0)}},
    \end{split}
\end{equation}
where we have defined $\mathbf{R}_{\mathbf{H}}(z):=(z- {\bf H})^{-1}$. Here ${\bf G}_a^{(0)}$ collects terms that are of $\mathcal{O}(N^0)$ for fixed ${\bf H}$. For $u \to 0$, the result of \cite{VERBAARSCHOT1984367} is recovered.  The normalized trace $M^{-1} \text{Tr}{\bf G}_a^{(0)}$ (recall $M:=N-1$) is a random variable due to the randomness in ${\bf H}$. It  converges to a deterministic limit when $N \to \infty$, with fluctuations of order $1/N$ that we denote with $\eta_a $:
\begin{equation}\label{eq:expansi0}
\frac{1}{M} \text{Tr}{\bf G}_a^{(0)}= \mathfrak{g}_\sigma(z) + \frac{1}{N}  \eta_a(z; {\bf H}) + \mathcal{O}\tonde{\frac{1}{N^2}}.
\end{equation}
Plugging this into \eqref{eq:SelfG0app} we find:
\begin{equation}\label{eq:SelfG0app2}
\begin{split}
   {\bf G}_a^{(0)}(z)=\frac{1}{z-{\bf H}- \sigma_W^2 \mathfrak{g}_\sigma(z)}+ \frac{\sigma^2_W}{N} \frac{ \eta_a(z; {\bf H}) }{[z- {\bf H} - \sigma^2_W \mathfrak{g}_\sigma(z)]^2}+ \mathcal{O}\tonde{\frac{1}{N^2}},
    \end{split}
\end{equation}
which implies:
\begin{align}
\frac{1}{M}\mathbb{E}_\mathbf{H}\Tr\mathbf{G}_a^{(0)}&=\mathbb{E}_\mathbf{H} \quadre{\frac{1}{M} \text{Tr}\frac{1}{z-{\bf H}- \sigma_W^2 \mathfrak{g}_\sigma(z)}} +\frac{\sigma_W^2}{N}\mathbb{E}_\mathbf{H} [\eta_a(z; {\bf H})]\mathbb{E}_\mathbf{H}\quadre{\frac{1}{M}\text{Tr} \tonde{\frac{1}{z-\mathbf{H}-\sigma_W^2\mathfrak{g}_\sigma(z)}}^{2}} + \mathcal{O}\tonde{\frac{1}{N^2}}.
\end{align}
Here we are assuming that the expectation value on the right-hand side factorizes to leading order in $N$.  On the other hand,  \eqref{eq:expansi0} also implies
\begin{equation}
\frac{1}{M} \mathbb{E}_\mathbf{H} \text{Tr}{\bf G}_a^{(0)}= \mathfrak{g}_\sigma(z) + \frac{1}{N} \mathbb{E}_\mathbf{H} [ \eta_a(z; {\bf H})] + \mathcal{O}\tonde{\frac{1}{N^2}}.
\end{equation}
Equating these expressions allows us to to solve for $ \mathbb{E}_\mathbf{H} [ \eta_a(z; {\bf H})]$. We use the fact that:
\begin{equation}\label{eq:FirstDer}
\mathbb{E}_{\bf H} \quadre{\frac{1}{M} \text{Tr} \tonde{\frac{  1 }{z- {\bf H} - \sigma^2_W \mathfrak{g}_\sigma(z)}}^2}= - \partial_\zeta \mathbb{E}_{\bf H} \quadre{\frac{1}{M} \text{Tr} \tonde{\frac{  1 }{\zeta- {\bf H}}}}_{\zeta=z-\sigma^2_W \mathfrak{g}_\sigma(z)} =- \frac{\mathfrak{g}'_\sigma(z)}{1-\sigma^2_W \mathfrak{g}'_\sigma(z)}+ \mathcal{O}\tonde{\frac{1}{N}},
\end{equation}
where we exploited the identity \cite{potters_bouchaud_2020}:
\begin{equation}\label{eq:Bouch}
    \mathfrak{g}_{\sigma_H}(z-\sigma^2_W \mathfrak{g}_\sigma(z))=\mathfrak{g}_{\sigma}(z).
\end{equation}
 Moreover, the analogue of \eqref{eq:BE} gives:
\begin{align}\label{eq:Mio1}
\mathbb{E}_\mathbf{H} \quadre{\frac{1}{M} \text{Tr}\frac{1}{z-{\bf H}- \sigma_W^2 \mathfrak{g}_\sigma(z)}}=\quadre{\mathfrak{g}_{\sigma_H}(\zeta)+ \frac{1}{N} \tonde{\frac{\sigma_H^2 \mathfrak{g}_{\sigma_H}^3(\zeta) }{[1- \sigma_H^2 \mathfrak{g}_{\sigma_H}^2(\zeta)]^2}- u \frac{\sigma_H^2 \mathfrak{g}_{\sigma_H}^3(\zeta) }{1- \sigma_H^2 \mathfrak{g}_{\sigma_H}^2(\zeta)}}+ \mathcal{O}\tonde{\frac{1}{N^2}}}_{\zeta= z- \sigma^2_W \mathfrak{g}_\sigma(z)}.
\end{align}
It can be easily checked that the following identity holds:
\begin{equation}
\frac{\sigma_H^2 \mathfrak{g}_{\sigma_H}^3(z-\sigma_W^2\mathfrak{g}_\sigma(z)) }{[1- \sigma_H^2 \mathfrak{g}_{\sigma_H}^2(z-\sigma_W^2\mathfrak{g}_\sigma(z))]^2}=\frac{\sigma_H^2 \mathfrak{g}_\sigma^3(z) }{[1- \sigma_H^2 \mathfrak{g}_\sigma^2(z)]^2},
\end{equation}
leading to:
\begin{equation}
 \mathbb{E}_\mathbf{H} [ \eta_a(z; {\bf H})]=
 \tonde{\frac{\sigma_H^2 \mathfrak{g}^3_\sigma(z) }{[1- \sigma_H^2 \mathfrak{g}^2_\sigma(z)]^2}- u \frac{\sigma_H^2 \mathfrak{g}^3_\sigma(z) }{1- \sigma_H^2 \mathfrak{g}^2_\sigma(z)}}
  \quadre{1-\sigma^2_W \mathfrak{g}'_\sigma(z)}.
\end{equation}
Assuming that the $1/N$ contribution to the trace, $\eta_a$, is self-averaging in the large-$N$ limit, we can replace it with its average in \eqref{eq:expansi0} and get the final expansion:
\begin{align}
\label{app:one_resolvent2}
\mathbb{E}_{\mathbf{W}^{(a)}}\left[\mathbf{G}_a(z)\right]=\mathbf{R}_{\mathbf{H}}(z-\sigma_{W}^2\mathfrak{g}_{\sigma}(z))+\frac{1}{N}\overline{c}(z)\mathbf{R}^2_{\mathbf{H}}(z-\sigma_{W}^2\mathfrak{g}_{\sigma}(z))+\frac{1}{N}\mathbf{G}_a^{(1)}(z)+\mathcal{O}\left(\frac{1}{N^2}\right)
\end{align}
with $\mathbf{G}_a^{(1)}(z)$ given in \eqref{eq:G1} and with:
\begin{equation}\label{eq:CBarra}
\overline{c}(z) = \sigma_W^2 \tonde{\frac{\sigma_H^2 \mathfrak{g}^3_\sigma(z) }{[1- \sigma_H^2 \mathfrak{g}^2_\sigma(z)]^2}- u \frac{\sigma_H^2 \mathfrak{g}^3_\sigma(z) }{1- \sigma_H^2 \mathfrak{g}^2_\sigma(z)}}
 \quadre{1-\sigma^2_W \mathfrak{g}'_\sigma(z)}.
\end{equation}

\vspace{1 cm}
\underline{\bf Consistency checks. } We perform some consistency checks on the expansion \eqref{app:one_resolvent2}. First, for ${\bf H}=0, \sigma_H=0$ one recovers \eqref{eq:SingleRes} with $\sigma \to \sigma_W$, as it follows from the identity:
\begin{equation}
\frac{z- \sqrt{z^2-4 \sigma^2}}{2[z^2- 4 \sigma^2]}=\frac{\sigma^2 \mathfrak{g}_\sigma^3(z) }{[1- \sigma^2 \mathfrak{g}_\sigma^2(z)]^2}.
\end{equation} 
Moreover, \eqref{eq:SingleRes} is recovered when taking the trace of  \eqref{app:one_resolvent2} and averaging over ${\bf H}$. From \eqref{eq:Mio1} together with the identity \eqref{eq:Bouch} it follows that:
\begin{equation}\label{eq:Rh_corrections}
    \frac{1}{N}\mathbb{E}_{\mathbf{H}}[\Tr\mathbf{R}_\mathbf{H}(z-\sigma_W^2\mathfrak{g}_\sigma(z))]=\mathfrak{g}_\sigma(z)+\frac{1}{N}\tonde{\frac{\sigma_H^2\mathfrak{g}^3_\sigma(z)}{[1-\sigma_H^2\mathfrak{g}^2_\sigma(z)]^2}- \frac{u \sigma_H^2\mathfrak{g}^3_\sigma(z)}{1-\sigma_H^2\mathfrak{g}^2_\sigma(z)}- u \mathfrak{g}_\sigma(z) }+ \mathcal{O}\tonde{\frac{1}{N^2}}.
\end{equation}
Making use of  \eqref{eq:FirstDer}, we see that the expectation of the second term in \eqref{app:one_resolvent2} reads:
\begin{align}
\frac{1}{N} \overline{c}(z)\mathbb{E}_{\mathbf{H}}\left[ \frac{1}{N} \text{Tr} \mathbf{R}^2_{\mathbf{H}}(z-\sigma_{W}^2\mathfrak{g}_{\sigma}(z))\right]=-\frac{1}{N}  \frac{\overline{c}(z) \, \mathfrak{g}'_\sigma(z)}{1-\sigma^2_W \mathfrak{g}'_\sigma(z)}+ \mathcal{O}\tonde{\frac{1}{N^2}}.
\end{align}
Finally, using that for $\zeta=z-\sigma^2_W \mathfrak{g}_\sigma(z)$ it holds:
\begin{equation}\label{eq:SecondDer}
\mathbb{E}_{\bf H} \quadre{\frac{1}{M} \text{Tr} \tonde{\frac{  1 }{z- {\bf H} - \sigma^2_W \mathfrak{g}_\sigma(z)}}^3}= \frac{1}{2} \partial^2_\zeta \mathbb{E}_{\bf H} \quadre{\frac{1}{M} \text{Tr} \tonde{\frac{  1 }{\zeta- {\bf H}}}} =\frac{1}{2}   \partial^2_\zeta \mathfrak{g}_{\sigma_H}(\zeta)= \frac{1}{2} \frac{ \mathfrak{g}''_{\sigma}(z)}{[1-\sigma^2_W   \mathfrak{g}'_{\sigma}(z)]^3}, 
\end{equation}
we find that the expectation value of the trace of \eqref{eq:G1} can be written as: 
\begin{equation}\label{eq:Taxi3}
\mathbb{E}_{\bf H} \quadre{\frac{1}{N} \text{Tr} {\bf G}_a^{(1)}(z)}=\frac{1}{2} \frac{ \sigma_W^2 \mathfrak{g}''_{\sigma}(z)}{1-\sigma^2_W   \mathfrak{g}'_{\sigma}(z)} + 
  u \sigma_W^2  \mathfrak{g}_\sigma(z) \mathfrak{g}'_{\sigma}(z)+\mathcal{O}\tonde{\frac{1}{N^2}}.
\end{equation}
Combining everything, one gets:
\begin{equation}\label{eq:CompAll}
\begin{split}
&\frac{1}{N}\mathbb{E}\left[ \text{Tr}\mathbf{G}_a\right]= \tonde{1-\frac{u}{N}}\mathfrak{g}_\sigma(z)+\\
&\frac{1}{N}\quadre{\frac{\sigma_H^2\mathfrak{g}^3_\sigma(z) (1-\sigma_W^2 \mathfrak{g}'_\sigma(z))}{[1-\sigma_H^2\mathfrak{g}^2_\sigma(z)]^2}
+\frac{1}{2} \frac{ \sigma_W^2 \mathfrak{g}''_{\sigma}(z)}{1-\sigma^2_W   \mathfrak{g}'_{\sigma}(z)} -\frac{u\sigma_H^2\mathfrak{g}^3_\sigma(z) (1-\sigma_W^2 \mathfrak{g}'_\sigma(z))}{1-\sigma_H^2\mathfrak{g}^2_\sigma(z)}+ u \mathfrak{g}_\sigma(z) \sigma^2_W  \mathfrak{g}'_\sigma(z)}+ \mathcal{O}\tonde{\frac{1}{N^2}}.
\end{split}
\end{equation}
It can be checked explicitly that:
\begin{equation}
\frac{\sigma_H^2 \mathfrak{g}_\sigma^3(z) }{[1- \sigma_H^2 \mathfrak{g}_\sigma^2(z)]^2} \tonde{1-\sigma^2_W  \mathfrak{g}'_\sigma(z)}  +\frac{1}{2}\frac{ \sigma_W^2 \mathfrak{g}''_{\sigma}(z)}{1- \sigma^2_W \mathfrak{g}'_{\sigma}(z) } = \frac{\sigma^2 \mathfrak{g}_\sigma^3(z) }{[1- \sigma^2 \mathfrak{g}_\sigma^2(z)]^2}
, \quad \quad \sigma^2= \sigma^2_H+ \sigma^2_W,
\end{equation}
as well as:
\begin{equation}
\frac{1-\sigma^2_W \mathfrak{g}'_\sigma(z)}{1- \sigma_H^2 \mathfrak{g}_\sigma^2(z)}  = \frac{1}{1- \sigma^2 \mathfrak{g}_\sigma^2(z)}
, \quad \quad \sigma^2= \sigma^2_H+ \sigma^2_W,
\end{equation}
which imply that \eqref{eq:CompAll} is also equal to:
\begin{equation}\label{eq:CompAll2}
\begin{split}
&\frac{1}{N}\mathbb{E}\left[ \text{Tr}\mathbf{G}_a\right]= \mathfrak{g}_\sigma(z)+\frac{1}{N}\quadre{\frac{\sigma^2 \mathfrak{g}_\sigma^3(z) }{[1- \sigma^2 \mathfrak{g}_\sigma^2(z)]^2}- \frac{u \sigma^2\mathfrak{g}^3_\sigma(z)}{1-\sigma^2\mathfrak{g}^2_\sigma(z)}+u \sigma_W^2\mathfrak{g}_\sigma(z)
\tonde{\frac{\mathfrak{g}^2_\sigma(z)}{1-\sigma^2\mathfrak{g}^2_\sigma(z)} 
+ \mathfrak{g}'_\sigma(z)}
-u \mathfrak{g}_\sigma(z)}+ \mathcal{O}\tonde{\frac{1}{N^2}}
\end{split}
\end{equation}
which coincides with \eqref{eq:SingleRes}, given that the sum in the round brackets vanishes.

\subsection{\bf Expectation over  ${\bf H}$: the leading order term. }\label{app:multi_resolv}

\noindent
We derive the leading order  contribution to $ {\bf \Pi}_{0,0}$.
From \eqref{app:one_resolvent2} it appears that the leading order contribution is given by the term $\mathbb{E}_{\bf H}\left[ \mathbf{R}_{\mathbf{H}}(z-\sigma_{W}^2\mathfrak{g}_{\sigma}(z))\, \mathbf{R}_{\mathbf{H}}(\xi-\sigma_{W}^2\mathfrak{g}_{\sigma}(\xi))\right]$. The resolvent identity implies:

\begin{align*}
\mathbb{E}_{\bf H}\left[ \mathbf{R}_{\mathbf{H}}(z-\sigma_{W}^2\mathfrak{g}_{\sigma}(z))\, \mathbf{R}_{\mathbf{H}}(\xi-\sigma_{W}^2\mathfrak{g}_{\sigma}(\xi))\right]=  \frac{\mathbb{E}_\mathbf{H}[(z-\sigma_{W}^2 \mathfrak{g}_\sigma(z)-\mathbf{H})^{-1}]-  \mathbb{E}_\mathbf{H}[(\xi-
   \sigma_{W}^2 \mathfrak{g}_\sigma(\xi)-\mathbf{H})^{-1} ]}{\xi -z-\sigma^2_W\tonde{\mathfrak{g}_\sigma(\xi)- \mathfrak{g}_\sigma(z)}}.
\end{align*}

To leading order in $N$, thanks to \eqref{eq:Bouch}, it holds:
\begin{align*}
    \lim_{N \to \infty} \mathbb{E}_\mathbf{H}[(z-\sigma_{W}^2 \mathfrak{g}_\sigma(z)-\mathbf{H})^{-1}]=\mathfrak{g}_{\sigma_H}(z-\sigma_{W}^2 \mathfrak{g}_\sigma(z))=\mathfrak{g}_{\sigma}(z),
\end{align*}
This implies
\begin{align*}
 \lim_{N \to \infty} {\bf \Pi}_{0,0} = \lim_{N \to \infty} \mathbb{E}_{\bf H}\left[ \mathbf{R}_{\mathbf{H}}(z-\sigma_{W}^2\mathfrak{g}_{\sigma}(z))\, \mathbf{R}_{\mathbf{H}}(\xi-\sigma_{W}^2\mathfrak{g}_{\sigma}(\xi))\right]=\frac{\mathfrak{g}_\sigma(z)-\mathfrak{g}_\sigma(\xi)}{\xi -z-\sigma_W^2(\mathfrak{g}_\sigma(\xi)-\mathfrak{g}_\sigma(z))}\mathbbm{1}=\Psi(z,\xi)\mathbbm{1}
\end{align*}
with $\Psi(z,\xi)$  defined in \eqref{BigPsi}. In the rest of Appendix we derive the $1/N$ corrections to this term, and thus Eq. \eqref{eq:multi_resolv_formula}.

\subsection{\bf Expectation over  ${\bf H}$: the $1/N$ corrections.}\label{app:finite_size}

We determine the finite size corrections to ${\bf \Pi}_{0,0}$.
We recall that the terms proportional to $u$ correspond to corrections that are due to the fact that our matrices have size $M=N-1$ and not $N$, while having variances rescaled with $N$. 
 In getting \eqref{app:one_resolvent2}, we have used the fact that traced quantities can be approximated with their leading order, deterministic contribution. Reasoning in an analogous way and assuming everywhere $\zeta=z-\sigma^2_W \mathfrak{g}_{\sigma}(z)$, we define:
\begin{equation}
\begin{split}\label{eq:ab}
    &a(z):=\lim_{N \to \infty} {\frac{\sigma_W^2}{1-\frac{\sigma_W^2}{M}\Tr\mathbf{R}^2_{\mathbf{H}}(z-\sigma_{W}^2\mathfrak{g}_{\sigma}(z))}}=\frac{\sigma_W^2}{1+\sigma_W^2 \partial_\zeta \mathfrak{g}_{\sigma_H}(\zeta)}= \sigma^2_W (1- \sigma^2_W \mathfrak{g}'_{\sigma}(z))\\
    &b(z):= \lim_{N \to \infty} {\frac{\frac{\sigma_W^4}{N}\Tr\mathbf{R}^3_{\mathbf{H}}(z-\sigma_{W}^2\mathfrak{g}_{\sigma}(z))}{\left(1-\frac{\sigma_W^2}{N}\Tr\mathbf{R}^2_{\mathbf{H}}(z-\sigma_{W}^2\mathfrak{g}_{\sigma}(z))\right)^2}}=
    \frac{\sigma_W^4}{2}\frac{\partial^2_\zeta \mathfrak{g}_{\sigma_H}(\zeta)}{[1+\sigma_W^2 \partial_\zeta \mathfrak{g}_{\sigma_H}(\zeta)]^2}=\frac{\sigma_W^4}{2}\frac{\mathfrak{g}''_{\sigma}(z)}{1- \sigma_W^2 \mathfrak{g}'_{\sigma}(z)}\\
      &d(z):=\lim_{N \to \infty} {\frac{\frac{\sigma_W^2}{N}\Tr\mathbf{R}_{\mathbf{H}}(z-\sigma_{W}^2\mathfrak{g}_{\sigma}(z))}{1-\frac{\sigma_W^2}{M}\Tr\mathbf{R}^2_{\mathbf{H}}(z-\sigma_{W}^2\mathfrak{g}_{\sigma}(z))}}=\frac{\sigma_W^2 \mathfrak{g}_{\sigma}(z)}{1+\sigma_W^2 \partial_\zeta \mathfrak{g}_{\sigma_H}(\zeta)}= \sigma^2_W \mathfrak{g}_{\sigma}(z) (1- \sigma^2_W \mathfrak{g}'_{\sigma}(z))\\
    \end{split}
\end{equation}
and set:
\begin{equation}\label{eq:bbarra}
  \overline{b}(z):= b(z)+ \overline{c}(z) - u d(z)=\frac{\sigma_W^4}{2}\frac{\mathfrak{g}''_{\sigma}(z)}{1- \sigma_W^2 \mathfrak{g}'_{\sigma}(z)}+\sigma_W^2 \tonde{\frac{\sigma_H^2 \mathfrak{g}^3_\sigma(z) }{[1- \sigma_H^2 \mathfrak{g}^2_\sigma(z)]^2}- u \frac{\sigma_H^2 \mathfrak{g}^3_\sigma(z) }{1- \sigma_H^2 \mathfrak{g}^2_\sigma(z)} - u  \mathfrak{g}_{\sigma}(z)}
 \quadre{1-\sigma^2_W \mathfrak{g}'_\sigma(z)}
\end{equation}
so that 
\begin{align}
\label{resolvent_corrections}
\mathbb{E}_{\mathbf{W}^{(a)}}\left[\mathbf{G}_a(z)\right]=\mathbf{R}_{\mathbf{H}}(z-\sigma_{W}^2\mathfrak{g}_{\sigma}(z))+\frac{\overline{b}(z)}{N}\mathbf{R}^2_{\mathbf{H}}(z-\sigma_{W}^2\mathfrak{g}_{\sigma}(z))+\frac{a(z)}{N}\mathbf{R}^3_{\mathbf{H}}(z-\sigma_{W}^2\mathfrak{g}_{\sigma}(z)).
\end{align}

We can then express
\begin{equation}
\begin{split}
   & \mathbb{E}[\mathbf{G}_0(z)\mathbf{G}_1(\xi)]=\mathbb{E}_{\mathbf{H}}\Bigg[\mathbf{R}_{\bf H}(z-\sigma_W^2\mathfrak{g}_\sigma(z))\mathbf{R}_{\bf H}(\xi-\sigma_W^2\mathfrak{g}_\sigma(\xi))\Bigg]+\\
    &\frac{\overline b(\xi) }{N}\mathbb{E}_{\mathbf{H}}\Bigg[    \mathbf{R}_{\bf H}(z-\sigma_W^2\mathfrak{g}_\sigma(z))\mathbf{R}^2_{\bf H}(\xi-\sigma_W^2\mathfrak{g}_\sigma(\xi)) \Bigg]+\frac{\overline b(z)}{N}\mathbb{E}_{\mathbf{H}}\Bigg[      \mathbf{R}^2_{\bf H}(z-\sigma_W^2\mathfrak{g}_\sigma(z))\mathbf{R}_{\bf H}(\xi-\sigma_W^2\mathfrak{g}_\sigma(\xi))\Bigg]+\\
    &+\frac{a(\xi)}{N} \mathbb{E}_{\mathbf{H}}\Bigg[ \mathbf{R}_{\bf H}(z-\sigma_W^2\mathfrak{g}_\sigma(z))\mathbf{R}^3_{\bf H}(\xi-\sigma_W^2\mathfrak{g}_\sigma(\xi))\Bigg]
    +\frac{a(z)}{N} \mathbb{E}_{\mathbf{H}}\Bigg[ \mathbf{R}^3_{\bf H}(z-\sigma_W^2\mathfrak{g}_\sigma(z))\mathbf{R}_{\bf H}(\xi-\sigma_W^2\mathfrak{g}_\sigma(\xi))
    \Bigg] + \mathcal{O}\tonde{\frac{1}{N^2}}.
    \end{split}
    \end{equation}

The expectations in the second and third line in this formula need to be computed to lowest order in $N$. Proceeding as above, we find: 
\begin{align*} \mathbb{E}_{\mathbf{H}}\left[\mathbf{R}_{\mathbf{H}}(z-\sigma_W^2\mathfrak{g}_\sigma(z))\mathbf{R}^3_{\mathbf{H}}(\xi-\sigma_W^2\mathfrak{g}_\sigma(\xi))\right]&=\frac{1}{2(1-\sigma_W^2\mathfrak{g}_\sigma'(\xi))^2}\left[\partial_{\xi}^2\Psi(z,\xi)+\frac{\sigma_W^2\mathfrak{g}_\sigma''(\xi)}{1-\sigma_W^2\mathfrak{g}_\sigma'(\xi)}\partial_{\xi}\Psi(z,\xi)
    \right] \mathbbm{1} + \mathcal{O}\tonde{\frac{1}{N}}
\end{align*}
as well as
\begin{align*}
\mathbb{E}_{\mathbf{H}}\left[\mathbf{R}_{\mathbf{H}}(z-\sigma_W^2\mathfrak{g}_\sigma(z))\mathbf{R}^2_{\mathbf{H}}(\xi-\sigma_W^2\mathfrak{g}_\sigma(\xi))\right]=-\frac{1}{1-\sigma_W^2\mathfrak{g}'_\sigma(\xi)}\partial_\xi\Psi(z,\xi)\mathbbm{1} + \mathcal{O}\tonde{\frac{1}{N}}
\end{align*}
and similarly for the terms with $\xi \to z$. It now remains to determine the $1/N$ expansion of the first expectation value in the above formula. Using the resolvent identity we obtain:
\begin{equation}
\begin{split}
   & \mathbb{E}_{\mathbf{H}}\Bigg[\mathbf{R}_{\bf H}(z-\sigma_W^2\mathfrak{g}_\sigma(z))\mathbf{R}_{\bf H}(\xi-\sigma_W^2\mathfrak{g}_\sigma(\xi))\Bigg]
   =\mathbb{E}_{\mathbf{H}} \Bigg[\frac{\mathbf{R}_{\bf H}(z-\sigma_W^2\mathfrak{g}_\sigma(z))-\mathbf{R}_{\bf H}(\xi-\sigma_W^2\mathfrak{g}_\sigma(\xi))}{\xi-z -\sigma^2_W \mathfrak{g}_\sigma(\xi)+ \sigma_W^2\mathfrak{g}_\sigma(z)}\Bigg].
       \end{split}
    \end{equation}
Making use of \eqref{eq:Rh_corrections} 
we finally get: 
\begin{equation}
\begin{split}
   & \frac{1}{N}\mathbb{E}_{\mathbf{H}}\Bigg[\text{Tr} \mathbf{R}_{\bf H}(z-\sigma_W^2\mathfrak{g}_\sigma(z))\mathbf{R}_{\bf H}(\xi-\sigma_W^2\mathfrak{g}_\sigma(\xi))\Bigg]=
   \Psi(z,\xi)+ \frac{1}{N}\Lambda(z, \xi) + \mathcal{O}\tonde{\frac{1}{N^2}}
       \end{split}
    \end{equation}
with the function $\Lambda$ reads
\begin{align}\label{eq:Lambd}
\begin{split}
&\Lambda(z, \xi)=\frac{1}{{\xi-z -\sigma^2_W \mathfrak{g}_\sigma(\xi)+ \sigma_W^2\mathfrak{g}_\sigma(z)}} \tonde{ 
\frac{-u\mathfrak{g}_\sigma(z)+\sigma_H^2\mathfrak{g}^3_\sigma(z)(u+1)}{[1-\sigma_H^2\mathfrak{g}^2_\sigma(z)]^2}-
\frac{-u\mathfrak{g}_\sigma(\xi)+\sigma_H^2\mathfrak{g}^3_\sigma(\xi)(u+1)}{[1-\sigma_H^2\mathfrak{g}^2_\sigma(\xi)]^2}
}\\
&=\frac{1}{{\xi-z -\sigma^2_W \mathfrak{g}_\sigma(\xi)+ \sigma_W^2\mathfrak{g}_\sigma(z)}} \tonde{\frac{\sigma_H^2\mathfrak{g}^3_\sigma(z)}{[1-\sigma_H^2\mathfrak{g}^2_\sigma(z)]^2}- \frac{u \sigma_H^2\mathfrak{g}^3_\sigma(z)}{1-\sigma_H^2\mathfrak{g}^2_\sigma(z)}
-\frac{\sigma_H^2\mathfrak{g}^3_\sigma(\xi)}{[1-\sigma_H^2\mathfrak{g}^2_\sigma(\xi)]^2}+ 
\frac{u \sigma_H^2\mathfrak{g}^3_\sigma(\xi)}{1-\sigma_H^2\mathfrak{g}^2_\sigma(\xi)}}- u \Psi(z, \xi).
\end{split}
\end{align}
Combining everything, we finally obtain 
\begin{align}
\begin{split}
    \frac{1}{N}\mathbb{E}\Tr[\mathbf{G}_0(z)\mathbf{G}_1(\xi)]=\Psi(z,\xi)+\frac{1}{N}\Psi^{(1)}(z,\xi)+\mathcal{O}\tonde{\frac{1}{N^2}}
\end{split}
\end{align}
with

\begin{align}
\Psi^{(1)}(z, \xi)=   \Lambda(z,\xi) + \alpha (z) \partial_z \Psi(z,\xi)+ \alpha(\xi) \partial_\xi \Psi(z,\xi)+ \beta(z) \partial^2_z \Psi(z,\xi) + \beta(\xi) \partial^2_\xi \Psi(z,\xi)
\end{align}
and 
$\alpha,\beta$ defined in equation \eqref{eq:CoefficientiLabda}. Eq.~\eqref{eq:corrections_prod} in the main text is derived analogously, without taking the trace. The only difference is in the $1/N$ term, and stems from the fact that one is taking the trace of $(N-1) \times (N-1)$ identity matrices, normalizing them by a factor $N$. For this reason, $\overline{\Psi}^{(1)}(z, \xi)$ in \eqref{eq:corrections_prod} differs from $\Psi^{(1)}(z, \xi)$ by a simple factor,
   $\overline{\Psi}^{(1)}(z, \xi)={\Psi}^{(1)}(z, \xi)+u \Psi(z, \xi),$ 
and similarly $\overline{\Lambda}(z,\xi)= \Lambda(z, \xi)+ u \Psi(z, \xi)$, as it follows comparing \eqref{eq:Lambd} with \eqref{eq:Lambda} in the main text. 
We recall that the contributions coming from the fact that the matrices considered here are of dimension $M=N-1$ are multiplied by the constant $u$, which must then be set equal to 1. Instead, if we set $u \to 0$ we get the finite size corrections to the case of unperturbed matrices with size $N$.

\vspace{1 cm}

\section{Computation of the auxiliary function $\psi(z, \xi)$}\label{app:computation_psi_00}
\subsection{Computation of $\psi_{00},\psi_{0N},\psi_{NN}$}
\label{app:comp_psi_pieces}
In this Appendix we report how to obtain the expressions \eqref{eq:psi00Implicit},  \eqref{eq:psi0NImplicit} and \eqref{eq:psiNNImplicit}. We begin by $\psi_{00}$, defined in \eqref{psi_decomposition}. By using \eqref{eq:Block1} and the Dyson expansion in \eqref{eq:dyson} we can easily rewrite
\begin{equation}\label{eq:Sum}
    \psi_{00}(z,\xi)= \sum_{k,m=0}^{+\infty}\frac{1}{N}\Tr_{N-1}\mathbb{E}\left[{\bf S}_{k,m} \right], \quad \quad {\bf S}_{k,m}={\bf G}_0(z)[{\bf A}^{(0)}(z){\bf G}_0(z)]^k{\bf G}_1(\xi)[{\bf A}^{(1)}(\xi){\bf G}_1(\xi)]^m \quad 
\end{equation}
where we used the subscript to stress that the trace is over a subspace of dimension $N-1$. The definition for $\mathbf{A}^{(a)}$ is expressed in \eqref{eq:DefA}. We compute the partial averages of the strings ${\bf S}_{k,m}$ over the entries $m^0_{iN}, m^1_{iN}$, to order $1/N$. Since the term with $k=0=m$ is independent of the entries $m_{iN}^a$, we focus on the remaining terms. \\

For either $k$ or $m$ different from 0, we need to evaluate
\begin{align*}
    &\frac{1}{N}\Tr\mathbb{E}\left[\mathbf{G}_0(z)[\mathbf{A}^{(0)}(z)\mathbf{G}_0(z)]^k\mathbf{G}_1(\xi)[\mathbf{A}^{(1)}(\xi)\mathbf{G}_1(\xi)]^m\right] =\frac{1}{N}\sum_{\substack{i_1,\ldots,i_{2k+1}=1\\j_1,\ldots j_{2m+1}=1}}^{N-1}\mathbb{E}\bigg[\mathbf{G}_0(z)_{i_1i_2}\frac{m^0_{i_2N}m^0_{i_3N}}{z-m^{0}_{NN}}\mathbf{G}_0(z)_{i_3i_4}\cdots\\
    &\times\frac{m^0_{i_{2k}N}m^0_{i_{2k+1}N}}{z-m^{0}_{NN}}\mathbf{G}_0(z)_{i_{2k+1}j_1} \times
    \mathbf{G}_1(\xi)_{j_1j_2}\frac{m^1_{j_2N}m^1_{j_3N}}{\xi-m^{1}_{NN}}\mathbf{G}_1(\xi)_{j_3j_4}\cdots \frac{m^1_{j_{2m}N}m^1_{j_{2m+1}N}}{\xi-m^{1}_{NN}}\mathbf{G}_1(\xi)_{j_{2m+1}i_1}
    \bigg].
\end{align*}
We first take the average over $m_{iN}^a$ for $a\in\{0,1\}$ (they do not appear in the resolvents). Let's start from $m_{NN}^a$. We have:
\begin{equation}
\mathbb{E} \quadre{\frac{1}{(z-m^{0}_{NN})^{k}} \frac{1}{(\xi-m^{1}_{NN})^{m}}}= \frac{(-1)^{k-1}}{(k-1)!} \, \frac{(-1)^{m-1}}{(m-1)!} \partial_z^{k-1} \partial_{\xi}^{m-1}\mathbb{E} \quadre{\frac{1}{z-m^{0}_{NN}}\frac{1}{\xi-m^{1}_{NN}}}.
\end{equation}
The expectation is over the joint Gaussian distribution of the $m_{NN}^a$,
\begin{equation}
\mathbb{E} \quadre{\frac{1}{z-m^{0}_{NN}}\frac{1}{\xi-m^{1}_{NN}}}= \int \frac{d^2{\bf u}}{\sqrt{4 \pi^2  \text{det} {\bf V}}}\frac{e^{-\frac{1}{2} {{\bf u}^T {\bf V}^{-1} {\bf u}}}}{\tonde{z- \mu_0 -\frac{u^0}{\sqrt N}}\tonde{\xi- \mu_1 -\frac{u^1}{\sqrt N}}}, \quad \quad {\bf V}=\begin{pmatrix}
    v^2_0 & v_h^2\\
    v_h^2 & v^2_1
\end{pmatrix}.
\end{equation}
A simple expansion shows that:
\begin{equation}
\begin{split}
&\mathbb{E} \quadre{\frac{1}{(z-m^{0}_{NN})^{k}} \frac{1}{(\xi-m^{1}_{NN})^{m}}}= \frac{1}{(z-\mu_0)^{k}}\frac{1}{(\xi-\mu_1)^{m}}+ \\
&\frac{1}{N} \quadre{\frac{v^2_0}{2} \frac{(k+1)k}{(z-\mu_0)^{k+2}}\frac{1}{(\xi-\mu_1)^{m}} +\frac{v^2_1}{2} \frac{(m+1)m}{(\xi-\mu_1)^{m+2}}\frac{1}{(z-\mu_0)^{k}}+ v_h^2 \frac{k}{(z-\mu_0)^{k+1}}  \frac{m}{(\xi-\mu_1)^{m+1}}}+ \mathcal{O}\tonde{\frac{1}{N^2}}.
\end{split}
\end{equation}

Therefore:
\begin{equation}\label{eq:Nina}
\begin{split}
&\Tr \mathbb{E}\left[{\bf S}_{k,m}\right] = \quadre{\frac{1}{(z-\mu_0)^k}\frac{1}{(\xi-\mu_1)^m} + \mathcal{O}\tonde{\frac{1}{N}}} \sum_{\substack{i_1,\ldots,i_{2k+1}=1\\j_1,\ldots j_{2m+1}=1}}^{N-1}\mathbb{E}\bigg[\mathbf{G}_0(z)_{i_1i_2}\cdots \mathbf{G}_0(z)_{i_{2k+1}j_1}
    \mathbf{G}_1(\xi)_{j_1j_2}\cdots \mathbf{G}_1(\xi)_{j_{2m+1}i_1}
    \bigg]\\
    &\times\mathbb{E}
    [m^0_{i_2N}m^0_{i_3N}\cdots m^0_{i_{2k}N}m^0_{i_{2k+1}N}m^1_{j_2N}m^1_{j_3N}\cdots m^1_{j_{2m}N}m^1_{j_{2m+1}N}].
    \end{split}
\end{equation}
The second average can be evaluated using Wick theorem, paying attention on whether the contractions involve matrix elements with the same or with different $a=0,1$. Below, we determine the subset of contractions that contribute to leading order in $N$. We begin by discussing some special cases.

\vspace{.3 cm}
Let us focus on the case $k=0$. By Wick theorem, the average $\mathbb{E}
    [m^1_{j_2N}m^1_{j_3N}\cdots m^1_{j_{2m}N}m^1_{j_{2m+1}N}]$ appearing in \eqref{eq:Nina} will be contributed by all possible pairwise contractions of the variables $m^1_{jN}$, each one contributing with a factor of $\Delta^2_1/N$. To each possible Wick contraction, there corresponds a contraction of the indices in the term $ \mathbf{G}_1(\xi)_{j_1j_2}\mathbf{G}_1(\xi)_{j_3j_4}\cdots \mathbf{G}_1(\xi)_{j_{2m+1}i_1}$ also appearing in \eqref{eq:Nina}. 
We now argue that there is a unique Wick contraction that contributes to \eqref{eq:Nina} to leading order, which is the contraction corresponding to $\delta_{j_3 j_4}\cdots \delta_{j_{2m+1} j_2}$. As a matter of fact, as we argue below the products of resolvent operators converge in the large-$N$ limit to a deterministic matrix proportional to the identity. Therefore, each trace of such products is of order $N$. For this reason, to get the largest contribution from the term  $\mathbb{E}\bigg[\mathbf{G}_0(z)_{i_1i_2}\mathbf{G}_0(z)_{i_3i_4}\cdots \mathbf{G}_0(z)_{i_{2k+1}j_1}
\mathbf{G}_1(\xi)_{j_1j_2}\mathbf{G}_1(\xi)_{j_3j_4}\cdots \mathbf{G}_1(\xi)_{j_{2m+1}i_1}
    \bigg]$ in \eqref{eq:Nina}, one has to select the contraction of indices that corresponds to maximizing the number of resulting traces, while recalling that some matrices have common indices and cannot therefore be decoupled into separate traces. For $k=0, m\geq 1$ we see that $\mathbf{G}_0(z)_{i_1j_1}\mathbf{G}_1(\xi)_{j_1j_2}\mathbf{G}_1(\xi)_{j_{2m+1}i_1}$ is the only block which cannot be decoupled. This term is of order $1/N$. Hence in this case the only leading term is given by 
\begin{equation}\label{eq:B8}
\frac{1}{N}\Tr \mathbb{E}\left[{\bf S}_{0,m}(z, \xi)\right]=    \frac{1}{N}\frac{\Delta^{2m}_1}{(\xi-\mu_1)^m}\left(\frac{1}{N}\Tr\mathbb{E}[\mathbf{G}_0(z)\mathbf{G}_1(\xi)^2]\right)\left(\frac{1}{N}\Tr\mathbb{E}\mathbf{G}_1(\xi)\right)^{m-1}+ \mathcal{O} \tonde{\frac{1}{N^2}},
\end{equation}
where we recall that $\Delta^2_a= \Delta^2_h + \Delta^2_{w,a}$. The case $k\geq 1, m=0$ is analogous, we get the leading contribution
\begin{equation}\label{eq:B9}
 \frac{1}{N}\Tr \mathbb{E}\left[{\bf S}_{k,0}(z, \xi)\right]=   \frac{1}{N}\frac{\Delta^{2k}_0}{(z-\mu_0)^k}\left(\frac{1}{N}\Tr\mathbb{E}[\mathbf{G}_0(z)^2\mathbf{G}_1(\xi)]\right)\left(\frac{1}{N}\Tr\mathbb{E}\mathbf{G}_0(z)\right)^{k-1}+ \mathcal{O} \tonde{\frac{1}{N^2}}.
\end{equation}

\vspace{.3 cm}

In the case $k\geq 1, m\geq 1$, the only coupled matrices are the two pairs $\mathbf{G}_0(z)_{i_{2k+1}j_1}\mathbf{G}_1(\xi)_{j_1j_2}$ and $\mathbf{G}_0(z)_{i_1i_2}\mathbf{G}_1(\xi)_{j_{2m+1}i_1}$. A reasoning analogous to the one above shows that the leading term in the $1/N$ expansion is given by:

\begin{align}\label{eq:B10}
\frac{1}{N}\Tr \mathbb{E}\left[{\bf S}_{k,m}\right]&=    \frac{\Delta_h^4}{N}\frac{\Delta^{2k-2}_0}{(z-\mu_0)^k}\frac{\Delta^{2m-2}_1}{(\xi-\mu_1)^m}\left(\frac{1}{N}\Tr\mathbb{E}[\mathbf{G}_0(z)\mathbf{G}_1(\xi)]\right)^2\left(\frac{1}{N}\Tr\mathbb{E}\mathbf{G}_0(z)\right)^{k-1}\left(\frac{1}{N}\Tr\mathbb{E}\mathbf{G}_1(\xi)\right)^{m-1}.
\end{align}
The dependence on  $\Delta_h$ appears due to the fact that the contractions corresponding to $\delta_{i_{2k+1} j_2}$ and $\delta_{i_2 j_{2m+1}}$ involve elements $m^a_{i N}$ corresponding to two different indices $a\in\{0,1\}$.\\
It is straightforward to check that the re-summation of the $1/N$  contributions for arbitrary $k, m$ leads to the expression \eqref{eq:psi00Implicit} in the main text. \\\\

Let us now discuss the partial average of term $\psi_{0N}(z, \xi)$. 
From equations \eqref{psi_decomposition} we see that we have to compute the following:
\begin{align*}
    \psi_{0N}(z,\xi)
    &=\mathbb{E}\left[\frac{2}{N}\frac{1}{(z-m^{0}_{NN})(\xi-m^{1}_{NN})}\sum_{i=1}^{N-1}\sum_{k,l=1}^{N-1}m^0_{kN}m^1_{lN}(z-\mathbf{H}-\mathbf{W}^{(0)}-\mathbf{A}^{(0)}(z))^{-1}_{ik}
    (\xi-\mathbf{H}-\mathbf{W}^{(1)}-\mathbf{A}^{(1)}(\xi))^{-1}_{il}
    \right]\\
\end{align*}
Notice that in this case we have to perform a similar analysis to the one of $\psi_{00}$, with the exception that there are no terms at zeroth-order in $1/N$. In order to obtain the correction at order $1/N$, when using the Wick theorem we need to select the contraction that decouples all the resolvents (i.e., which traces each of them separately) except for the first and last ones, which are clearly coupled. An analogous analysis as the one above then immediately gives us the result \eqref{eq:psi0NImplicit}.\\\\

Finally, let us consider the term $\psi_{NN}(z, \xi)$: from equation \eqref{psi_decomposition} we have to compute the following:
\begin{align*}
    &\psi_{NN}(z,\xi)=\mathbb{E}\left[\frac{1}{N}
    (z-\mathbf{M}^{(0)})^{-1}_{NN}(\xi-\mathbf{M}^{(1)})^{-1}_{NN}\right]\\
    &=\mathbb{E}\Bigg[\frac{1}{N(z-m^{0}_{NN})(\xi-m^{1}_{NN})}\left\{1+\sum_{i,j=1}^{N-1}\mathbf{A}^{(0)}(z)_{ij}(z-\mathbf{H}-\mathbf{W}^{(0)}-\mathbf{A}^{(0)}(z))^{-1}_{ij}\right\}\\
    &\times\left\{1+\sum_{k,l=1}^{N-1}\mathbf{A}^{(1)}(\xi)_{kl}(\xi-\mathbf{H}-\mathbf{W}^{(1)}-\mathbf{A}^{(1)}(\xi))^{-1}_{kl} \right\}
    \Bigg]\\
    &=\mathbb{E}\left[\frac{1}{N(z-m^{0}_{NN})(\xi-m^{1}_{NN})}\right]+\mathbb{E}\left[\frac{1}{N(z-m^{0}_{NN})(\xi-m^{1}_{NN})}\Tr[\mathbf{A}^{(0)}(z)(z-\mathbf{H}-\mathbf{W}^{(0)}-\mathbf{A}^{(0)}(z))^{-1}]\right]\\
    &+\mathbb{E}\left[\frac{1}{N(z-m^{0}_{NN})(\xi-m^{1}_{NN})}\Tr[\mathbf{A}^{(1)}(\xi)(\xi-\mathbf{H}-\mathbf{W}^{(1)}-\mathbf{A}^{(1)}(\xi))^{-1}]\right]\\
    &+\mathbb{E}\left[\frac{1}{N(z-m^{0}_{NN})(\xi-m^{1}_{NN})}\Tr[\mathbf{A}^{(0)}(z)(z-\mathbf{H}-\mathbf{W}^{(0)}-\mathbf{A}^{(0)}(z))^{-1}]\Tr[\mathbf{A}^{(1)}(\xi)(\xi-\mathbf{H}-\mathbf{W}^{(1)}-\mathbf{A}^{(1)}(\xi))^{-1}]\right].
\end{align*}
This expression boils down to computing
\begin{align*}
    &\frac{1}{N}\mathbb{E}\Tr[\mathbf{A}^{(a)}(z)(z-\mathbf{H}-\mathbf{W}^{(a)}-\mathbf{A}^{(a)}(z))^{-1}]=\frac{1}{N}\mathbb{E}\sum_{k=1}^{+\infty}\Tr([\mathbf{A}^{(a)}(z)\mathbf{G}_a(z)]^k)\\
    &=\frac{1}{N}\sum_{k=1}^{+\infty}\frac{\Delta^{2k}_a}{(z-\mu_a)^k}\mathbb{E}\left[\left(\frac{1}{N}\Tr \mathbf{G}_a(z)\right)^k\right]=\frac{1}{N}\frac{\Delta^2_a}{[z-\mu_a-\frac{\Delta^2_a}{N} \Tr \mathbb{E}\mathbf{G}_a(z)]}\left(\frac{1}{N}\Tr\mathbf{G}_a(z)\right)
\end{align*}
with which we finally obtain \eqref{eq:psiNNImplicit}. 

\subsection{Formula for $\psi$}
\label{app:comp_psi}
By combining equations \eqref{eq:psi00Implicit},  \eqref{eq:psi0NImplicit}, \eqref{eq:psiNNImplicit} and using the result in \eqref{eq:multi_resolv_formula}, we obtain the following expressions:
\begin{align}
\begin{split}
    \psi_{00}(z,\xi)&= \Psi(z,\xi)+\frac{1}{N}\Psi^{(1)}(z,\xi)-\frac{1}{N}\partial_z\Psi(z,\xi)\frac{\Delta_0^2}{z-\mu_0-\Delta_0^2\mathfrak{g}_{\sigma}(z)}
    -\frac{1}{N}\partial_\xi\Psi(z,\xi)\frac{\Delta_1^2}{\xi-\mu_1-\Delta_1^2\mathfrak{g}_{\sigma}(\xi)}\\&+ \frac{1}{N}\Delta_h^4 \Psi^2(z,\xi)   \frac{1}{[z-\mu_1-\Delta_0^2\mathfrak{g}_{\sigma}(z)]}\frac{1}{[\xi-\mu_1-\Delta_1^2\mathfrak{g}_{\sigma}(\xi)]}
    + \mathcal{O}\tonde{\frac{1}{N^2}}.
\end{split}
\end{align}
\begin{align}
    \psi_{0N}(z,\xi)=\frac{2}{N} \Delta^2_h\frac{1}{z-\mu_0-\Delta_0^2\mathfrak{g}_{\sigma}(z)}
    \frac{1}{\xi-\mu_1-\Delta_1^2\mathfrak{g}_{\sigma}(\xi)}\Psi(z,\xi) +\mathcal{O}\left(\frac{1}{N^2}\right)
\end{align}
\begin{align}
\begin{split}
    \psi_{NN}(z,\xi)&=\frac{1}{N}\frac{1}{(z-\mu_0)(\xi-\mu_1)}\Bigg[
     1 + \frac{\Delta_0^2}{z-\mu_0-\Delta_0^2\mathfrak{g}_{\sigma}(z)}\mathfrak{g}_\sigma(z)
    +\frac{\Delta_1^2}{\xi-\mu_1-\Delta_1^2\mathfrak{g}_{\sigma}(\xi)}\mathfrak{g}_\sigma(\xi)
    \\
    &+ \frac{\Delta_0^2}{z-\mu_0-\Delta_0^2\mathfrak{g}_{\sigma}(z)} \frac{\Delta_1^2}{\xi-\mu_1-\Delta_1^2\mathfrak{g}_{\sigma}(\xi)}
    \mathfrak{g}_\sigma(z) \mathfrak{g}_\sigma(\xi)
    \Bigg]
    +\mathcal{O}\left(\frac{1}{N^2}\right)
\end{split}
\end{align}

\noindent Summing up we finally get the explicit expression for $\psi$: 
\begin{align}
\label{eq:app:psi}
    \begin{split}
&\psi(z,\xi)=\Psi(z,\xi)+\frac{1}{N}\bigg\{\Psi^{(1)}(z,\xi)+\frac{1}{(z-\mu_0)(\xi-\mu_1)}+\frac{\Delta_0^2}{z-\mu_0-\Delta_0^2\mathfrak{g}_\sigma(z)}\left[-\partial_z\Psi(z,\xi)+\frac{\mathfrak{g}_\sigma(z)}{(z-\mu_0)(\xi-\mu_1)}\right]\\
    &+\frac{\Delta_1^2}{\xi-\mu_1-\Delta_1^2\mathfrak{g}_\sigma(\xi)}\left[-\partial_\xi\Psi(z,\xi)+\frac{\mathfrak{g}_\sigma(\xi)}{(z-\mu_0)(\xi-\mu_1)}\right]+\frac{1}{[z-\mu_0-\Delta_0^2\mathfrak{g}_\sigma(z)][\xi-\mu_1-\Delta_1^2\mathfrak{g}_\sigma(\xi)]}\bigg[\Delta_h^4\Psi^2(z,\xi)\\
&+2\Delta_h^2\Psi(z,\xi)+\frac{\Delta_0^2\Delta_1^2\mathfrak{g}_\sigma(z)\mathfrak{g}_\sigma(\xi)}{(z-\mu_0)(\xi-\mu_1)}\bigg]\bigg\}+ \mathcal{O}\tonde{\frac{1}{N^2}}.
    \end{split}
\end{align}

\section{Computation of the overlaps. }
\label{app:computation_overlaps}

\vspace{1 cm}

With the expression for $\psi$, we can now use equation \eqref{re_psi} to compute the three overlaps expressed in equations \eqref{eq:phi_bulk_bulk}, \eqref{eq:phi_iso_iso}, \eqref{eq:phi_iso_bulk} of the main text.

\subsection{Overlap between bulk eigenvectors}\label{app:computation_phi}
To recover the leading-order term in the overlap between bulk eigenvectors we have to neglect all $1/N$ corrections in Eq.~\eqref{eq:app:psi}. In general, we see that when applying Eq.~\eqref{re_psi}, the square roots in $\psi$ will give rise to branch cuts, contained in $\mathfrak{g}_\sigma$ and its derivatives. In order to face this issue we have to carefully take the limit of $\mathfrak{g}_\sigma(x\pm i\eta)$ when $\eta\to0^+$. Since the branch cuts come from all the terms of the form $\sqrt{x^2-4\sigma^2}$, we have to carefully analyse $\sqrt{(x\pm i\eta)^2-4\sigma^2}$ as $\eta\to 0^+$. As is known, the square root function in the complex plane presents a branch cut, which we fix here to be toward the negative real axis (i.e. we define angles between $[-\pi,\pi]$). With such convention, we simply have that the square root behaves as follows:
\begin{align}
    \lim_{\eta\to 0^+}\sqrt{(x\pm i\eta)^2-4\sigma^2}=
    \begin{cases}
        \sqrt{x^2-4\sigma^2}\quad |x|\geq 2\sigma\\
        \pm\text{sign}(x)i\sqrt{4\sigma^2-x^2}\quad |x|< 2\sigma
    \end{cases}
\end{align}
and by applying this to $\mathfrak{g}_\sigma$, defined in \eqref{eq.Stjlt}, we obtain
\begin{equation}
\label{br_cut}
\lim_{\eta \to 0}\mathfrak{g}_\sigma(x \mp i \eta)=\begin{cases}
  \frac{1}{2 \sigma^2} \tonde{x- \text{sign} (x) \sqrt{x^2-4 \sigma^2}} \quad  |x|> 2 \sigma\\
  \frac{1}{2 \sigma^2}\tonde{x \pm i \sqrt{4 \sigma^2-x^2}} \quad  |x|< 2 \sigma
  \end{cases} \equiv
\mathfrak{g}_R(x) \pm i \mathfrak{g}_I(x),
  \end{equation}
where $\text{Im}\,\mathfrak{g}(x) \neq 0$ only if $|x|< 2 \sigma$.
Similarly, we set:
\begin{equation}
   \lim_{\eta \to 0} \zeta(x \mp i \eta)= x - \sigma_W^2
\mathfrak{g}_R(x) \mp i \sigma^2_W \mathfrak{g}_I(x)= \zeta_R(x)\pm i \zeta_I(x).
\end{equation}

With this notation, the bulk-bulk overlap reads:
 \begin{align}\label{eq:OvBouch}
 \begin{split}
\Phi(x, y)&=  \frac{1}{2\pi^2\rho(x)\rho(y)}\lim_{\eta\to 0^+}\text{Re}\left[\Psi(x-i\eta,y+i\eta)-\Psi(x-i\eta,y-i\eta)\right]\\&=
\frac{ 2 \sigma_W^2 \quadre{\zeta_R(x)-\zeta_R(y)} (x-y) }{\quadre{\tonde{\zeta_R(x)-\zeta_R(y)}^2+ \tonde{\zeta_I(x)+\zeta_I(y)}^2}\quadre{\tonde{\zeta_R(x)-\zeta_R(y)}^2+ \tonde{\zeta_I(x)-\zeta_I(y)}^2}},
\end{split}
\end{align} 
which can be more explicitly rewritten in the form of Eq.~\eqref{eq:phi_bulk_bulk}.

\subsection{Overlap between isolated eigenvectors and bulk eigenvectors}
\label{app:computation_of_phi_y_iso}
In order to compute $\Phi(\lambda_{\rm iso}^0,y)$ one has to make use of  Eq.~\eqref{re_psi} in the main text, and consider only the part of $\psi$ in Eq.~\eqref{eq:app:psi} which presents a singularity when evaluated at $x=\lambda_{\rm iso}^0:=\lambda_{\rm iso, -}^0$, see \eqref{eq:IsoExplicit}. We are therefore focusing on  $|x|=|\lambda_{\rm iso}^0|>2\sigma$ and $|y|\leq 2\sigma$: the first argument of $\psi(x,y)$ does not belong to the bulk of the eigenvalue density, while the second does. It is simple to check that, given the two solutions \eqref{app:solutions},
for $x$ real one has:
\begin{align}\label{eq:res}
    \frac{1}{x-\mu_0-\Delta_0^2\mathfrak{g}_\sigma(x)}= \frac{\tonde{1- \frac{\Delta_0^2}{2 \sigma^2}}x-\mu_0-\text{sign}(x) \frac{\Delta_0^2}{2 \sigma^2} \sqrt{x^2-4\sigma^2}}{\tonde{1-\frac{\Delta_0^2}{\sigma^2}} [x-\lambda^0_{\text{iso},+}(\mu_0, \Delta_0, \sigma)][x-\lambda^0_{\text{iso},-}(\mu_0, \Delta_0, \sigma)]}.
\end{align}
This term is therefore singular for $x \to \lambda_{\rm iso}^0$.  In particular, 
\begin{align}\label{eq:Residuo}
   \lim_{\eta \to 0} \frac{1}{x-i \eta-\mu_0-\Delta_0^2\mathfrak{g}_\sigma(x-i \eta)}=  i \pi \delta(x-\lambda_{\rm iso}^0) \mathfrak{q}_{\sigma,\Delta_0}(\lambda_{\rm iso}^0,\mu_0) + \text{   regular terms},
   \end{align}
where $\mathfrak{q}_{\sigma,\Delta}(\lambda,\mu)$ is given in Eq. \eqref{eq:Defq} while the regular terms are not proportional to the delta. To select the relevant contributions to  $\Phi(\lambda_{\rm iso}^0,y)$, we single out the term in \eqref{eq:app:psi} which produces a delta function when $x \to \lambda_{\rm iso}^0$, which reads 
\begin{align}\label{eq:Aux}
    \hat{\psi}(z,\xi)=\frac{1}{z-\mu_0-\Delta_0^2\mathfrak{g}(z)}\Bigg[-\Delta_0^2\partial_z\Psi(z,\xi)+\frac{\Delta_0^2\mathfrak{g}(z)}{(z-\mu_0)}\frac{1}{(\xi-\mu_1-\Delta_1^2\mathfrak{g}(\xi))}+\frac{\Delta_h^4\Psi^2(z,\xi)+2\Delta_h^2\Psi(z,\xi)}{\xi-\mu_1-\Delta_1^2\mathfrak{g}(\xi)}
\Bigg].
\end{align}
Consider the first term in brackets. One has
\begin{equation}
    \partial_z \Psi(z,\xi)= \frac{\mathfrak{g}_\sigma(z)-\mathfrak{g}_\sigma(\xi)+ \mathfrak{g}'_\sigma(z)(\xi-z)}{[\xi-z -\sigma_W^2 (\mathfrak{g}_\sigma(\xi)-\mathfrak{g}_\sigma(z))]^2},
\end{equation}
and for real $|x|> 2 \sigma$ , given that $\mathfrak{g}_I(x)= \zeta_I(x)=0$, we find
\begin{equation}
\begin{split}
 A(x,y)&:=  \text{Im} \lim_{\eta \to 0} \; \lim_{z \to x-i \eta} \frac{\quadre{\partial_z \Psi(z, y + i \eta)- \partial_z \Psi(z, y - i \eta)}}{\mathfrak{g}_I(y)}\\
 &=\frac{2 [\zeta_R(y)- \zeta_R(x)]^2 -2\zeta^2_I(y) -4 \sigma_W^2 \, \quadre{ \mathfrak{g}_R(x)-\mathfrak{g}_R(y)-(x-y)\mathfrak{g}'_R(x)} \,[\zeta_R(y)-\zeta_R(x)]}{\tonde{\quadre{\zeta_R(y)-\zeta_R(x)}^2 + \zeta_I^2(y)}^2}.
    \end{split}
\end{equation}
The second term in brackets in \eqref{eq:Aux} can be neglected, as its imaginary part is proportional to $\delta(y-\lambda^1_{\rm iso})$ and thus it will only give contributions to the overlap between isolated eigenvectors discussed in the next subsection. The third term instead will contribute with an imaginary part that is not proportional to $\delta(y-\lambda^1_{\rm iso})$. It holds:
\begin{equation}
\begin{split}
& \text{Im} \lim_{\eta \to 0} \; \lim_{z \to x-i \eta}\quadre{\frac{\Psi(z, y + i \eta)}{y+ i \eta -\mu_1-\Delta^2 \mathfrak{g}_\sigma(y+i \eta)}- \frac{\Psi(z, y - i \eta)}{y- i \eta -\mu_1-\Delta^2 \mathfrak{g}_\sigma(y-i \eta)}}=\\
   &\frac{y -\mu_1-\Delta^2 \mathfrak{g}_R(y)}{[y -\mu_1-\Delta^2 \mathfrak{g}_R(y)]^2+ \Delta^4 \mathfrak{g}^2_I(y)} \, \text{Im} \Psi_D(x, y) -\frac{\Delta^2 \mathfrak{g}_I(y)}{[y -\mu_1-\Delta^2 \mathfrak{g}_R(y)]^2+ \Delta^4 \mathfrak{g}^2_I(y)} \, \text{Re} \Psi_S(x, y)
    \end{split}
    \end{equation}
where 
\begin{equation}
\begin{split}
& \Psi_D(x, y)=\lim_{\eta \to 0} \lim_{z \to x-i \eta}\quadre{\Psi(z, y + i \eta)- \Psi(z, y - i \eta)},\\
&\Psi_S(x, y)=\lim_{\eta \to 0} \lim_{z \to x-i \eta}\quadre{\Psi(z, y + i \eta)+ \Psi(z, y - i \eta)},
 \end{split}
\end{equation}
and for $|x|> 2 \sigma$:
\begin{equation}
\begin{split}
    \text{Im} \Psi_D(x, y)&= \frac{ 2 \mathfrak{g}_I(y) \, (y-x)}{[\zeta_R(x)-\zeta_R(y)]^2+ \zeta^2_I(y)},\\
    \text{Re} \Psi_S(x, y)&= \frac{  2\sigma_W^2 \mathfrak{g}^2_I(y) + 2 \, (\mathfrak{g}_R(x)-\mathfrak{g}_R(y))\, (\zeta_R(x)-\zeta_R(y))}{[\zeta_R(x)-\zeta_R(y)]^2+ \zeta^2_I(y)}
    \end{split}
\end{equation}
Therefore 
\begin{equation}
\begin{split}
B(x,y):=& \text{Im} \lim_{\eta \to 0} \lim_{z \to x-i \eta}\, \frac{1}{\mathfrak{g}_I(y)} \quadre{\frac{\Psi(z, y + i \eta)}{y+ i \eta -\mu_1-\Delta^2 \mathfrak{g}_\sigma(y+i \eta)}- \frac{\Psi(z, y - i \eta)}{y- i \eta -\mu_1-\Delta^2 \mathfrak{g}_\sigma(y-i \eta)}}\\
  =&
  \frac{
       2  (y-x)\,  [y -\mu_1-\Delta^2 \mathfrak{g}_R(y)]- 2 \Delta^2\quadre{\sigma_W^2 \mathfrak{g}^2_I(y) + (\mathfrak{g}_R(x)-\mathfrak{g}_R(y)) (\zeta_R(x)-\zeta_R(y))}}{\quadre{\tonde{\zeta_R(x)-\zeta_R(y)}^2+ \zeta^2_I(y)} \; \quadre{\tonde{y -\mu_1-\Delta^2 \mathfrak{g}_R(y)}^2+ \Delta^4 \mathfrak{g}^2_I(y)}}.
    \end{split}
    \end{equation}
Finally, 
\begin{equation}
\begin{split}
 C(x,y):= & \text{Im} \lim_{\eta \to 0} \lim_{z \to x-i \eta} \, \frac{1}{\mathfrak{g}_I(y)}\quadre{\frac{[\Psi(z, y + i \eta)]^2}{y+ i \eta -\mu_1-\Delta^2 \mathfrak{g}_\sigma(y+i \eta)}- \frac{[\Psi(z, y - i \eta)]^2}{y- i \eta -\mu_1-\Delta^2 \mathfrak{g}_\sigma(y-i \eta)}}\\
  = &\frac{y -\mu_1-\Delta^2 \mathfrak{g}_R(y)}{[y -\mu_1-\Delta^2 \mathfrak{g}_R(y)]^2+ \Delta^4 \mathfrak{g}^2_I(y)} \frac{\text{Im} \Psi_{2D}(x, y)}{\mathfrak{g}_I(y)} -\frac{\Delta^2 }{[y -\mu_1-\Delta^2 \mathfrak{g}_R(y)]^2+ \Delta^4 \mathfrak{g}_I^2(y)} \text{Re} \Psi_{2S}(x, y)
    \end{split}
    \end{equation}
where now
\begin{equation}
\begin{split}
& \Psi_{2D}(z, y)=\lim_{\eta \to 0}\quadre{\Psi^2(z, y + i \eta)- \Psi^2(z, y - i \eta)},\\
&\Psi_{2S}(z, y)=\lim_{\eta \to 0}\quadre{\Psi^2(z, y + i \eta)+ \Psi^2(z, y - i \eta)}.
 \end{split}
\end{equation}
Again, for $|x|>2 \sigma$, one finds:

\begin{equation}
\begin{split}
   & \frac{\text{Im} \Psi_{2D}(x, y)}{\mathfrak{g}_I(y) }=\frac{4 (x-y)
\quadre{(\mathfrak{g}_R(x)- \mathfrak{g}_R(y)) \, (\zeta_R(x)- \zeta_R(y))- \sigma_W^2 \mathfrak{g}_I^2(y)} }
     { \quadre{(\zeta_R(x)- \zeta_R(y))^2 +\zeta^2_I(y)}^2 }\\
     &\text{Re} \Psi_{2S}(x, y)=\frac{2  \quadre{(\mathfrak{g}_R(x)- \mathfrak{g}_R(y)) \, (\zeta_R(x)- \zeta_R(y))- \sigma_W^2 \mathfrak{g}_I^2(y)}^2-2 (x-y)^2 \mathfrak{g}^2_I(y)}
     {\quadre{(\zeta_R(x)- \zeta_R(y))^2 +\zeta^2_I(y)}^2 }. 
     \end{split}
    \end{equation}
Combining everything, we find:
\begin{equation}\label{eq:OverlapMisto}
    \boxed{\Phi(\lambda_{\rm iso}^0, y)= \frac{\mathfrak{q}_{\sigma,\Delta}(\lambda_{\rm iso}^0,\mu_0) }{2} \quadre{\Delta^2 A(\lambda_{\rm iso}^0,y) - 2 \Delta_h^2 B(\lambda_{\rm iso}^0,y)- \Delta_h^4 C(\lambda_{\rm iso}^0,y)}}.
\end{equation}
with:
\begin{equation}\label{eq:CostantiMisto1}
    \begin{split}
 A(x,y)=&\frac{2 [\zeta_R(y)- \zeta_R(x)]^2 -2\zeta^2_I(y) -4 \sigma_W^2 \, \quadre{ \mathfrak{g}_R(x)-\mathfrak{g}_R(y)-(x-y)\mathfrak{g}'_R(x)} \,[\zeta_R(y)-\zeta_R(x)]}{\tonde{\quadre{\zeta_R(y)-\zeta_R(x)}^2 + \zeta_I^2(y)}^2},\\
B(x,y)=&
  \frac{
       2  (y-x)\,  [y -\mu_1-\Delta^2 \mathfrak{g}_R(y)]- 2 \Delta^2\quadre{\sigma_W^2 \mathfrak{g}^2_I(y) + (\mathfrak{g}_R(x)-\mathfrak{g}_R(y)) (\zeta_R(x)-\zeta_R(y))}}{\quadre{\tonde{\zeta_R(x)-\zeta_R(y)}^2+ \zeta^2_I(y)} \; \quadre{\tonde{y -\mu_1-\Delta^2 \mathfrak{g}_R(y)}^2+ \Delta^4 \mathfrak{g}^2_I(y)}},\\
 C(x,y)=& \frac{1}{[y -\mu_1-\Delta^2 \mathfrak{g}_R(y)]^2+ \Delta^4 \mathfrak{g}^2_I(y)} \frac{1}{\quadre{(\zeta_R(x)- \zeta_R(y))^2 +\zeta^2_I(y)}^2} \, c(x,y)
 \end{split}
    \end{equation}
and 
\begin{equation}\label{eq:CostantiMisto2}
\begin{split}
 c(x,y)&= \quadre{y -\mu_1-\Delta^2 \mathfrak{g}_R(y)} 4 (x-y)
\quadre{(\mathfrak{g}_R(x)- \mathfrak{g}_R(y)) \, (\zeta_R(x)- \zeta_R(y))- \sigma_W^2 \mathfrak{g}_I^2(y)}\\
&- 2 \Delta^2 
\tonde{  \quadre{(\mathfrak{g}_R(x)- \mathfrak{g}_R(y)) \, (\zeta_R(x)- \zeta_R(y))- \sigma_W^2 \mathfrak{g}_I^2(y)}^2- (x-y)^2 \mathfrak{g}^2_I(y)}.
 \end{split}
    \end{equation}
More explicitly, we can also apply formula \eqref{re_psi} to each term in \eqref{eq:Aux}, and obtain the more explicit formula presented in the main text in Eq.\eqref{eq:phi_iso_bulk}. We recall that formula, and we write explicitly all of its parameters:
\begin{empheq}[box=\widefbox]{align*}
\begin{split}
\Phi(\lambda^0_{\rm iso},y)&=\frac{\mathfrak{q}_{\sigma, \Delta_0}(\lambda_{\rm iso}^0,\mu_0)}{2\pi\rho(y)}\Bigg[\frac{4\Delta_0^2\sigma^2}{\sqrt{[\lambda_{\rm iso}^0]^2-4\sigma^2}}\frac{bc-ad}{c^2+d^2}-4\sigma^2\Delta_h^4\frac{b_1c_1e_1-a_1d_1e_1-a_1c_1f_1-b_1d_1f_1}{(c_1^2+d_1^2)(e_1^2+f_1^2)}\\
&-8\sigma^2\Delta_h^2\frac{b_2c_2e_2-a_2d_2e_2-a_2c_2f_2-b_2d_2f_2}{(c_2^2+d_2^2)(e_2^2+f_2^2)}+\frac{\Delta_0^2\Delta_1^2\mathfrak{g}_\sigma(\lambda_{\rm iso}^0)}{\sigma^2(\lambda_{\rm iso}^0-\mu_0)(y-\mu_1)}\frac{b_3c_3-a_3d_3}{c_3^2+d_3^2}
\Bigg]
\end{split}
\end{empheq}
where the parameters explicitly read:
\begin{equation}
    \begin{split}
&a=(4\sigma^2-\lambda_{\rm iso}^0 y)\text{sign}(\lambda_{\rm iso}^0)\\
&b=\sqrt{[\lambda_{\rm iso}^0]^2-4\sigma^2}\; \sqrt{4\sigma^2-y^2}\\
&c=(2\sigma^2 - \sigma_W^2)^2 (\lambda_{\rm iso}^0 - y)^2 + 
 2\sigma_W^2 (2 \sigma^2 - \sigma_W^2) \sqrt{  [\lambda_{\rm iso}^0]^2-4 \sigma^2} \;(\lambda_{\rm iso}^0 - y)\text{sign}(\lambda_{\rm iso}^0) \\
& + 
 \sigma_W^4 ( [\lambda_{\rm iso}^0]^2-4 \sigma^2 ) + \sigma_W^4 (y^2 -4 \sigma^2 )\\
&d=-2\sigma_W^2\sqrt{4\sigma^2 - y^2}\;
   [\sigma_W^2\sqrt{ [\lambda_{\rm iso}^0]^2 -4 \sigma^2}
     \text{sign}(\lambda_{\rm iso}^0) + (2\sigma^2 - \sigma_W^2) (\lambda_{\rm iso}^0 - y)]
         \end{split}
\end{equation}

\begin{equation}
    \begin{split}
&a_1=(\lambda_{\rm iso}^0 - y)^2 + 2\sqrt{ [\lambda_{\rm iso}^0]^2-4\sigma^2 } \,(y-\lambda_{\rm iso}^0 )\text{sign}(\lambda_{\rm iso}^0) + [\lambda_{\rm iso}^0]^2-4 \sigma^2 +  y^2-4\sigma^2\\
&b_1=-2\sqrt{4 \sigma^2 - y^2} \; [y-\lambda_{\rm iso}^0  + \sqrt{[\lambda_{\rm iso}^0]^2 -4 \sigma^2 } \text{sign}(\lambda_{\rm iso}^0)]\\
&c_1=-2\mu_1\sigma^2-\Delta_1^2y+2\sigma^2y\\
&d_1=\Delta_1^2\sqrt{4\sigma^2-y^2}\\
&e_1=(\sigma_W^2-2\sigma^2 )^2 (\lambda_{\rm iso}^0 - y)^2 + 
 2\sigma_W^2 (2\sigma^2 - \sigma_W^2)\sqrt{[\lambda_{\rm iso}^0]^2-4 \sigma^2 }\, (\lambda_{\rm iso}^0- y)\text{sign}(\lambda_{\rm iso}^0) \\
& \quad + \sigma_W^4 ([\lambda_{\rm iso}^0]^2-4 \sigma^2 ) + \sigma_W^4 ( y^2-4 \sigma^2 )\\ 
&f_1=-2\sigma_W^2 \sqrt{4\sigma^2 - y^2} [(2\sigma^2 - \sigma_W^2) (\lambda_{\rm iso}^0 - y) + \sigma_W^2\sqrt{[\lambda_{\rm iso}^0]^2-4 \sigma^2}\, \text{sign}(\lambda_{\rm iso}^0)]
         \end{split}
\end{equation}

\begin{equation}
    \begin{split}
&a_2=\lambda_{\rm iso}^0-y-\text{sign}(\lambda_{\rm iso}^0)\sqrt{[\lambda_{\rm iso}^0]^2-4\sigma^2}\\
&b_2=\sqrt{4\sigma^2-y^2}\\
&c_2=c_1\\
&d_2 = d_1\\
&e_2=(\sigma_W^2-2 \sigma^2 ) (\lambda_{\rm iso}^0 - y) - \sigma_W^2 \sqrt{[\lambda_{\rm iso}^0]^2-4 \sigma^2}\text{sign}(\lambda_{\rm iso}^0)\\
&f_2=\sigma_W^2\sqrt{4\sigma^2 - y^2}\\
&a_3 = -y\\
&b_3=\sqrt{4 \sigma^2 - y^2}\\
&c_3=y-\mu_1 - \frac{\Delta_1^2 y}{2 \sigma^2}\\
&d_3=\frac{\Delta_1^2\sqrt{4\sigma^2 - y^2}}{2 \sigma^2}
         \end{split}
\end{equation}

\subsection{Overlap between isolated eigenvectors}
\label{app:computation_phi_iso_iso}

In order to compute $\Phi(\lambda_{\rm iso}^0,\lambda_{\rm iso}^1)$ we have to make use of  Eq.~\eqref{re_psi} in the main text, and consider only the part of $\psi$ in Eq.~\eqref{eq:app:psi} which presents a singularity when evaluated at both of the isolated eigenvalues $\lambda_{\rm iso}^a:=\lambda_{\rm iso, -}^a$ for $a\in\{0,1\}$. This term is the one proportional to the product  $(z-\mu_0-\Delta_0^2\mathfrak{g}_\sigma(z))^{-1} (\xi-\mu_1-\Delta_1^2\mathfrak{g}_\sigma(\xi))^{-1}$ in \eqref{eq:app:psi}. We single out such a term, defining:
\begin{align}\label{eq:PsiTilde}
    \begin{split}
&\tilde{\psi}(z,\xi):=
    \frac{1}{[z-\mu_0-\Delta_0^2\mathfrak{g}_\sigma(z)][\xi-\mu_1-\Delta_1^2\mathfrak{g}_\sigma(\xi)]}\bigg(\Delta_h^4\Psi^2(z,\xi)+2\Delta_h^2\Psi(z,\xi)+\frac{\Delta_0^2\Delta_1^2\mathfrak{g}_\sigma(z)\mathfrak{g}_\sigma(\xi)}{(z-\mu_0)(\xi-\mu_1)}\bigg)
    \end{split}
\end{align}

Similarly to \eqref{eq:Residuo}, we have: 
\begin{equation}
\begin{split}
     &\lim_{\eta \to 0} \;\frac{1}{x-i \eta-\mu_0-\Delta_0^2\mathfrak{g}_\sigma(x-i \eta)}\;\frac{1}{y \pm i \eta-\mu_1-\Delta_1^2\mathfrak{g}_\sigma(y \pm i\eta)}=\\
     &\mp \pi^2  \delta(x-\lambda_{\rm iso}^0)  \delta(y-\lambda_{\rm iso}^1) \mathfrak{q}_{\sigma,\Delta_0}(\lambda_{\rm iso}^0,\mu_0)\mathfrak{q}_{\sigma,\Delta_1}(\lambda_{\rm iso}^1,\mu_1)+ \text{  regular terms  },
    \end{split}
\end{equation}
where again we neglect all terms that are not proportional to the product of delta functions. 
The terms within brackets in \eqref{eq:PsiTilde} are real when computed at $x,y \to \lambda_{\rm iso}^a$ due to the fact that $|\lambda_{\rm iso}^a|> 2 \sigma$, for $a\in\{0,1\}$. Therefore, we find:
\begin{equation}\label{eq:uu}
\begin{split}
    & \text{Re}\lim_{\eta\to 0^+}\left[\tilde{\psi}(\lambda_{\rm iso}^0-i\eta,\lambda_{\rm iso}^1+i\eta)-\tilde{\psi}(\lambda_{\rm iso}^0-i\eta,\lambda_{\rm iso}^1-i\eta)\right]=2 \pi^2  \delta(x-\lambda_{\rm iso}^0)  \delta(y-\lambda_{\rm iso}^1) \times \\
    &\times \mathfrak{q}_{\sigma,\Delta_0}(\lambda_{\rm iso}^0,\mu_0)\mathfrak{q}_{\sigma,\Delta_1}(\lambda_{\rm iso}^1,\mu_1) \tonde{\Delta_h^4\Psi^2(\lambda_{\rm iso}^0,\lambda_{\rm iso}^1)
+2\Delta_h^2\Psi(\lambda_{\rm iso}^0,\lambda_{\rm iso}^1)
+\frac{\Delta_0^2\Delta_1^2\mathfrak{g}_\sigma(\lambda_{\rm iso}^0)\mathfrak{g}_\sigma(\lambda_{\rm iso}^1)}{(\lambda_{\rm iso}^0-\mu_0)(\lambda_{\rm iso}^1-\mu_1)}}.
     \end{split}
\end{equation}
Notice that the equation satisfied by the isolated eigenvalues, Eq. \eqref{egval_eqn}, implies that the last term within brackets in \eqref{eq:uu} is equal to $1$. From this, Eq.~\eqref{eq:phi_iso_iso} immediately follows.\\\\
\noindent When $\mu_0=\mu_1$ and $\Delta_0=\Delta_1$,  the typical values of the two isolated eigenvalues are the same and this expression converges to a finite limit, see Sec.~\ref{sec:applications}.

\vspace{1 cm}

\section{The curvature of random landscapes: a random matrix problem}\label{app:RotationLAndscape}

In this Appendix, we give a few additional details on the connection between the Hessians of random Gaussian landscapes and the random matrices discussed in this work. Consider the random function \eqref{eq:Enp} defined on the surface of the $D$- dimensional unit sphere.  The local curvature of $\mathcal{E}[{\bf s}]$ around a configuration ${\bf s}$ is given by the $(D-1) \times (D-1)$ Riemannian Hessian on the sphere, which can be compactly written as:
\begin{equation}\label{eq:Hessian}
\mathcal{H}[{\bf s}]= \Pi_{\tau[{\bf s}]} \, \tonde{\frac{\partial^2 \mathcal{E}[{\bf s}]}{\partial s_i \partial s_j }} \Pi_{\tau[{\bf s}]} -\tonde{\nabla \mathcal{E}[{\bf s}] \cdot {\bf s}} \mathbbm{1}.
\end{equation} 
In this expression, the first term is the matrix of second derivatives of the landscape projected on the tangent plane ${\tau[{\bf s}]}$ to the sphere at the point ${\bf s}$ (the $\Pi_{\tau[{\bf s}]}$ are the corresponding projection operators), while $ \mathbbm{1}$ is the $(D-1) \times (D-1)$ identity matrix. 
Both the projectors and the diagonal term arise from imposing that the landscape is restricted to the sphere. Given that the matrices  \eqref{eq:Hessian} are projected onto the corresponding tangent planes, their components have to be expressed in a basis $\mathcal{B}[{\bf s}]$ which depends on the point ${\bf s}$ itself, since it has to span the tangent plane $\tau[{\bf s}]$ -- defined as the space of unit vectors ${\bf v}$ satisfying  ${\bf v} \cdot {\bf s}=0$.

 The Hessians \eqref{eq:Hessian} are random matrices, and one can characterize explicitly their statistical distribution conditioned to the fact that ${\bf s}$ is a stationary point with a given energy density $\epsilon \sim D^{-1} \mathcal{E}[{\bf s}]$. Consider two such stationary points ${\bf s}_0, {\bf s}_1$ at energy density $\epsilon_0$ and $\epsilon_1$ respectively, conditioned to be at overlap $q$. To describe the statistics of the corresponding Hessians, it is convenient to choose the bases $\mathcal{B}[{\bf s}_a]$ with $a=0,1$  in each tangent plane in such a  way that $(D-2)$ vectors ${\bf e}_{i=1, \cdots, D-2}$ in each basis span the subspace orthogonal to both ${{\bf s}_a}$ (these vectors can be the same in both $\mathcal{B}[{\bf s}_a]$), while the last one ${\bf e}_{D-1}^a$ is the normalized linear combination of  ${\bf s}_0, {\bf s}_1$ that is orthogonal to  ${\bf s}_a$:
 \begin{equation}
 {\bf e}_{D-1}^0=\frac{ {\bf s}_{1}- q\, {\bf s}_{0}}{\sqrt{1-q^2}}, \quad \quad   {\bf e}_{D-1}^1=\frac{ { \bf s}_{0}- q\, {\bf s}_{1}}{\sqrt{1-q^2}}.
 \end{equation}
When expressed in these bases, the two matrices $ D^{-1/2} \, \Pi_{\tau [{\bf s}^a]} \, \tonde{\partial^2_{ij} \mathcal{E}[{\bf s}^a]} \Pi_{\tau[{\bf s}^a]} $ take the form \eqref{eq:MatrixForm}: the GOE blocks correspond to the subspace spanned by the vectors ${\bf e}_{i=1, \cdots, D-2}$, while the special line and column correspond to the direction identified by the basis vectors ${\bf e}_{D-1}^a$. The two GOE blocks are correlated with each others, and their statistics is described by equation \eqref{eq:VarHEssGOE}. The components of the last row and column have instead fluctuations whose strength depends on $\epsilon_0, \epsilon_1$ and $q$, as discussed in the main text.

We set $N=D-1$. The results discussed in this work are derived assuming that the two matrices are expressed in the same basis. In the Hessian case, the two matrices are defined on different spaces (the tangent planes) spanned by different basis vector. It can be checked that in this case, the quantity $\Phi(\lambda^0,\lambda^1)$ defined from \eqref{re_psi} is given by:
\begin{equation}\label{eq:NewOverlap}
  \Phi(\lambda^0,\lambda^1)= N \sum_{i,j=1}^{N}  \mathbb{E} \quadre{ \langle {\bf e}_i^0, \mathbf{u}_{\lambda^{0}} \rangle  \langle \mathbf{u}_{\lambda^{0}},
  {\bf e}_j^0\rangle \langle {\bf e}_j^1,\mathbf{u}_{\lambda^{1}}\rangle \langle \mathbf{u}_{\lambda^{1}},  {\bf e}_i^1\rangle  }.
\end{equation}
Using that ${\bf e}_i^0={\bf e}_i^1$
for $i \leq N-1$, we see that \eqref{eq:NewOverlap} is equivalent to:
\begin{equation}\label{eq:NewOverlap2}
  \Phi(\lambda^0,\lambda^1)= N  \mathbb{E} \quadre{ \tonde{
\langle \mathbf{u}_{\lambda^{0}},  \mathbf{u}_{\lambda^{1}} \rangle - \langle \mathbf{u}_{\lambda^{0}},  {\bf e}_{N}^0\rangle  \langle {\bf e}_{N}^0, \mathbf{u}_{\lambda^{1}}\rangle + 
\langle \mathbf{u}_{\lambda^{0}},  {\bf e}_{N}^0\rangle  \langle {\bf e}_{N}^1, \mathbf{u}_{\lambda^{1}}\rangle }^2}.
\end{equation}
We see that if both $\lambda^a$ are bulk eigenvalues, all the scalar products appearing in \eqref{eq:NewOverlap2} are of order $N^{-{1}/{2}}$ and thus $ \Phi(\lambda^0,\lambda^1)= N  \mathbb{E} [\langle \mathbf{u}_{\lambda^{0}},  \mathbf{u}_{\lambda^{1}} \rangle^2]$ to leading order in $N$. On the other hand, when one of the eigenvalues is isolated, $\lambda^a= \lambda^a_{\rm iso}$, the typical value of $\langle \mathbf{u}_{\lambda^{0}_{\rm iso}},  {\bf e}_{N}^0\rangle$ is of $\mathcal{O}(1)$, see \eqref{eq:ProjVectors}. Therefore, all terms in \eqref{eq:NewOverlap2} are of the same order of magnitude, and to get the eigenvectors overlaps one has to subtract from $\Phi$ the projections of the eigenvectors along the special direction. We remark that the projections  $\langle \mathbf{u}_{\lambda^{0}},  {\bf e}_{N}^0\rangle$  for bulk eigenvalues $\lambda^{0}$ of perturbed matrices are discussed in \cite{noiry2021spectral} for purely additive perturbations.

\end{document}